\renewcommand{\cref}{\Cref}
\Crefname{lemma}{Lemma}{Lemmas}
\crefname{lemma}{Lemma}{Lemmas}
\newcommand{\cola}{\ensuremath{\mathsf{a}}\xspace}
\newcommand{\colb}{\ensuremath{\mathsf{b}}\xspace}
\newcommand{\colc}{\ensuremath{\mathsf{c}}\xspace}
\newcommand{\cold}{\ensuremath{\mathsf{d}}\xspace}
\newcommand{\cole}{\ensuremath{\mathsf{e}}\xspace}
\newcommand{\Q}{\ensuremath{\mathcal{Q}}\xspace}
\newcommand{\N}{\ensuremath{\mathbb{N}}\xspace}
\newcommand{\OR}{\textsc{or}\xspace}
\newcommand{\notcontainment}{\ensuremath{\mathsf{NP \not\subseteq coNP/poly}}\xspace}
\newcommand{\eqvr}[0]{\ensuremath{\mathcal{R}}\xspace}
\newcommand{\containment}{\ensuremath{\mathsf{NP  \subseteq coNP/poly}}\xspace}
\newcommand{\TODO}[1][]{%
  \ifx/#1/%
    \textcolor{red}{TODO!}%
  \else%
    \textcolor{red}{todo: #1}%
  \fi%
}
\newcommand{\Oh}{\ensuremath{\mathcal{O}}\xspace}
\newcommand{\Hcoloring}{\textsc{$H$-Coloring}\xspace}
\newcommand{\listHcoloring}{\textsc{List $H$-Coloring}\xspace}
\newcommand{\listcoloring}[1]{\textsc{List #1-Coloring}\xspace}
\newcommand{\Anlistcoloring}{\textsc{Annotated List $P_4$-Coloring}\xspace}
\newcommand{\ELL}{L}
\newcommand{\cA}{\mathcal{A}}
\newcommand{\cB}{\mathcal{B}}
\newcommand{\cC}{\mathcal{C}}
\newcommand{\cD}{\mathcal{D}}
\newcommand{\cE}{\mathcal{E}}
\newcommand{\cK}{\mathcal{K}}
\newcommand{\cP}{\mathcal{P}}
\newcommand{\cQ}{\mathcal{Q}}
\newcommand{\cR}{\mathcal{R}}
\newcommand{\ovp}{\overline{P}}
\newcommand{\pstar}{P^*}
\newcommand{\defproblem}[3]{\par
 \vspace{3mm}
\noindent\fbox{
 \begin{minipage}{0.96\textwidth}
 \begin{tabular*}{\textwidth}{@{\extracolsep{\fill}}lr} #1 &  \vspace{1mm} \\ \end{tabular*}
{\textbf{Input:}} #2
  \vspace{1mm}\\%
 {\textbf{Question:}} #3
 \end{minipage}
 }
 \vspace{3mm}\par
}
\renewcommand{\tilde}{\widetilde}
\title{Sparsification Lower Bounds for List $H$-Coloring}
\author{Hubie Chen}{Birkbeck, University of London,
Malet Street, Bloomsbury,
London WC1E 7HX, United Kingdom}{hubie@dcs.bbk.ac.uk}{}{}
\author{Bart M.\,P. Jansen}{Eindhoven University of Technology, P.O. Box 513, 5600 MB Eindhoven, The Netherlands}{b.m.p.jansen@tue.nl}{https://orcid.org/0000-0001-8204-1268}{Supported by NWO Gravitation grant ``Networks''.}
\author{Karolina Okrasa}{University of Warsaw, Institute of Informatics \\ \& Warsaw University of Technology, Faculty of Mathematics and Information Science}{k.okrasa@mini.pw.edu.pl}{https://orcid.org/0000-0003-1414-3507}{Supported by the European Research Council (ERC) under the European
Union’s Horizon 2020 research and innovation programme Grant Agreement no. 714704.}
\author{Astrid Pieterse}{Department of Computer Science, Humboldt-Universit{\"a}t zu Berlin, Germany}{astrid.pieterse@informatik.hu-berlin.de}{https://orcid.org/0000-0003-3721-6721}{Supported by DFG Emmy Noether-grant (KR 4286/1), part of this research was supported by NWO Gravitation grant ``Networks''.}
\author{Pawe\l{} Rz\k{a}\.zewski}{Warsaw University of Technology, Faculty of Mathematics and Information Science\\
\& University of Warsaw, Institute of Informatics}{p.rzazewski@mini.pw.edu.pl}{https://orcid.org/0000-0001-7696-3848}{Supported by Polish National Science Centre grant no. 2018/31/D/ST6/00062.}
\authorrunning{H. Chen, B.\,M.\,P. Jansen, K. Okrasa, A. Pieterse, and P. Rz\k{a}\.zewski}
\keywords{List $H$-Coloring, Sparsification, Constraint Satisfaction Problem, }%mandatory
\date{}
\begin{document}

\maketitle

\begin{abstract}
\normalsize
We investigate the \textsc{List $H$-Coloring} problem, the generalization of graph coloring that asks whether an input graph~$G$ admits a homomorphism to the undirected graph~$H$ (possibly with loops), such that each vertex~$v \in V(G)$ is mapped to a vertex on its list~$L(v) \subseteq V(H)$. An important result by Feder, Hell, and Huang [JGT~2003] states that \textsc{List $H$-Coloring} is polynomial-time solvable if~$H$ is a so-called \emph{bi-arc graph}, and NP-complete otherwise. We investigate the NP-complete cases of the problem from the perspective of polynomial-time sparsification: can an $n$-vertex instance be efficiently reduced to an equivalent instance of bitsize~$\Oh(n^{2-\varepsilon})$ for some~$\varepsilon > 0$? We prove that if~$H$ is not a bi-arc graph, then \textsc{List $H$-Coloring} does not admit such a sparsification algorithm unless \containment. Our proofs combine techniques from kernelization lower bounds with a study of the structure of graphs~$H$ which are not bi-arc graphs.
\end{abstract}

\clearpage
%!TEX root = Main-H-coloring-sparsification.tex

\section{Introduction} \label{introduction}

\subparagraph*{Background and motivation}
The \textsc{List $H$-Coloring} problem is a generalization of the classic graph coloring problem. For a fixed undirected graph~$H$, possibly with self-loops, an input to the problem consists of an undirected graph~$G$ together with a list~$L(v) \subseteq V(H)$ for each vertex~$v \in V(G)$. The question is whether there is a list homomorphism from~$G$ to~$H$: a mapping~$f \colon V(G) \to V(H)$ such that~$\{f(u), f(v)\} \in E(H)$ for all~$\{u,v\} \in E(G)$, and such that~$f(v) \in L(v)$ for all~$v \in V(G)$. When~$H$ is a $q$-clique and~$L(v) = V(H)$ for each vertex, \textsc{List $H$-Coloring} is equivalent to traditional graph $q$-colorability.
%The classic computational complexity of \textsc{List $H$-Coloring} for other graphs~$H$ has been investigated in a series of articles~\cite{FederH98,Feder99,FederHH03}, next to a long line of work for the non-list version of the problem~\cite{AlbertsonCG85,Bulatov05,HellN90,Irving83,KunS16,MaurerSW81,Nesetril81,Siggers10}. This resulted in a powerful dichotomy theorem by Feder, Hell, and Huang~\cite{FederHH03}. They proved that \textsc{List $H$-Coloring} is polynomial-time solvable if~$H$ is a bi-arc graph, but NP-complete otherwise.

The classic computational complexity of \textsc{List $H$-Coloring} for other graphs~$H$ has been investigated, next to a long line of work for the non-list version of the problem~\cite{AlbertsonCG85,Bulatov05,HellN90,Irving83,KunS16,MaurerSW81,Nesetril81,Siggers10}.
As the first step towards the dichotomy, Feder and Hell~\cite{FederH98} proved that if $H$ is reflexive (i.e., every vertex has a self-loop), then \listHcoloring is polynomial-time solvable if $H$ is an interval graph, and NP-complete otherwise.
Next, a dichotomy for irreflexive graphs $H$ was proven by Feder, Hell, and Huang~\cite{Feder99}:  the problem is polynomial-time solvable if $H$ is bipartite and additionally its complement is a circular-arc graph, and in all other cases the problem is NP-complete. It is interesting to mention that the subclass of bipartite graphs consisting of those which are complements of circular-arc graphs, was already studied by Trotter and Moore in the context of classifying some posets~\cite{TROTTER1976361}.
Finally, Feder, Hell, and Huang~\cite{FederHH03} defined a new class of geometric intersection graphs (potentially with loops), called \emph{bi-arc graphs}, which encapsulates reflexive interval graphs and (irreflexive) bipartite co-circular-arc graphs. We postpone the definition of bi-arc graphs to \cref{sec:biarc}. Feder, Hell, and Huang proved a powerful dichotomy theorem: \textsc{List $H$-Coloring} is polynomial-time solvable if~$H$ is a bi-arc graph, but NP-complete otherwise.

In this work we investigate \textsc{List $H$-Coloring} from the perspective of polynomial-time sparsification (cf.~\cite{ChenJP18,DellM14,JansenP17a}). From this viewpoint, the goal is to develop a polynomial-time algorithm that maps a (potentially dense) $n$-vertex instance~$G$ to a smaller instance~$G'$ that can be encoded in~$f(n)$ bits for some size function~$f$, yet which has the same \textsc{yes}/\textsc{no} answer as~$G$. Observe that this is trivial if~$f(n) = n^2$; we refer to a sparsification algorithm as \emph{nontrivial} if it achieves a size bound of~$f(n) \in \Oh(n^{2-\varepsilon})$ bits for some~$\varepsilon > 0$.

The general quest for sparsification algorithms is motivated by the fact that they allow instances to be stored, manipulated, and solved more efficiently: since sparsification preserves the exact answer to the problem, it suffices to solve the sparsified instance. Our interest in sparsification for \textsc{List $H$-Coloring} has a number of motivations, which we now describe.

There is a growing list of problems for which the existence of nontrivial sparsification algorithms has been ruled out under the established assumption \notcontainment, which includes \textsc{Vertex Cover}~\cite{DellM14}, \textsc{Dominating Set}~\cite{JansenP17a}, \textsc{Feedback Arc Set}~\cite{JansenP17a}, and \textsc{Treewidth}~\cite{Jansen15}. To the best of our knowledge, to date there is no non-trivial sparsification algorithm for \emph{any} NP-hard problem that is defined on general graphs. Could it be that there is \emph{no} natural NP-hard graph problem that admits a nontrivial sparsification algorithm? The surprising richness of problems that admit a polynomial kernelization, a desirable outcome in a different regime of efficient preprocessing (cf.~\cite{FominLSZ18,GuoN07}), may tempt one to believe that for the right problem, something nontrivial can be done. In an attempt to identify a problem that admits nontrivial sparsification, we target the broad class of \textsc{List $H$-Coloring} decision problems.

A second motivation for studying \textsc{List $H$-Coloring} comes from its interpretation as a constraint satisfaction problem: an instance of \textsc{List $H$-Coloring} corresponds to a CSP that has a variable for each vertex of the input graph~$G$, which has to be assigned a value from the set~$V(H)$. For each edge~$\{u,v\}$ of~$G$ there is a constraint that the value assigned to~$u$ should be a neighbor (in graph~$H$) of the value assigned to~$v$, and for each vertex~$v \in V(G)$ there is a constraint that the value of~$v$ belongs to~$L(v)$. Hence any NP-hard \textsc{List $H$-Coloring} problem translates into a CSP with a non-Boolean domain in which constraints have arity at most two. Recent work~\cite{ChenJP18,LagerkvistW17} has led to a number of nontrivial advances in the study of sparsification for CSPs with a Boolean domain. A natural next step in that line of research is to target non-Boolean CSPs, of which the \textsc{List $H$-Coloring} problems form a rich subset.

The last motivation for studying sparsification for \textsc{List $H$-Coloring} is that it forms the logical next step in the study of sparsification for coloring problems. Recent work~\cite{JansenP17} showed that \textsc{Graph (List) $q$-Colorability} does not admit nontrivial polynomial-time sparsification for~$q\geq 3$ unless \containment, but left the case of \textsc{List $H$-Coloring} open.

\subparagraph*{Our results}
We prove that for all undirected, possibly non-simple, graphs~$H$ for which \textsc{List $H$-Coloring} is NP-complete, the problem does not admit nontrivial sparsification unless an unlikely complexity-theoretic collapse occurs. Our proofs combine techniques from kernelization lower bounds with a careful analysis of the common structures of hard graphs~$H$. To state our sparsification lower bounds in full generality, we use the notion of generalized kernelization (see Definition~\ref{def:generalized:kernel}), where the number of vertices~$n$ of the instance plays the role of the complexity parameter~$k$. A generalized kernelization for \textsc{List $H$-Coloring} of size~$f(n)$ is therefore a polynomial-time algorithm that maps any $n$-vertex input~$G$, to an equivalent instance (of a potentially different but fixed decision problem) of bitsize~$f(n)$. Since a polynomial-time sparsification algorithm mapping to instances of bitsize~$f(n)$ yields a generalized kernelization of size~$f(n)$, lower bounds on the latter also apply to the former.

%\newtheorem*{thm:lowerbound:statement}{Theorem \ref{theorem:lowerbound}}
%\newcommand{\coloringLowerbound}{If~$H$ is an undirected graph that is not a bi-arc graph, possibly with loops, then \textsc{List $H$-Coloring} parameterized by the number of vertices~$n$ admits no generalized kernel of size~$\Oh(n^{2-\varepsilon})$ for any~$\varepsilon > 0$, unless \containment.}
%
%\begin{theorem} \label{theorem:lowerbound}
%\coloringLowerbound
%\end{theorem}

\begin{restatable}{theorem}{theoremlowerbound}\label{theorem:lowerbound}
If~$H$ is an undirected graph that is not a bi-arc graph, possibly with loops, then \textsc{List $H$-Coloring} parameterized by the number of vertices~$n$ admits no generalized kernel of size~$\Oh(n^{2-\varepsilon})$ for any~$\varepsilon > 0$, unless \containment.
\end{restatable}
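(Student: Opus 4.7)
The plan is to apply the polynomial-parameter-transformation framework of Dell and van Melkebeek: for every non-bi-arc graph~$H$, I would construct a polynomial-time reduction from a suitable source problem whose compression to $\Oh(N^{2-\varepsilon})$ bits is already known to imply \containment, mapping an $N$-vertex instance of the source to a \Hcoloring instance on $\Oh(N)$ vertices. The natural source is a graph-based problem on $N$ vertices (for instance \textsc{Vertex Cover}, a binary CSP over a non-Boolean domain, or \listcoloring{$q$} for an appropriate~$q$), since such problems have matching $N^{2-\varepsilon}$ sparsification lower bounds~\cite{DellM14,JansenP17a} and linear-parameter reductions transfer those bounds directly to \Hcoloring parameterized by the number of vertices.

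To make the reduction work uniformly in~$H$, I would rely on two gadget primitives: a \blockingold, which restricts the image of a designated vertex of the constructed graph to a prescribed subset of~$V(H)$, and an \ineqgadget, which forces two designated vertices to be mapped to distinct elements of~$V(H)$. Given these, each binary constraint of the source---say between variables~$x_i$ and~$x_j$---is realised by taking a representative vertex per variable (with its list pinned via a \blockingold) and attaching a constant-size combination of \ineqgadget{}s and auxiliary blockers that rules out exactly the forbidden pairs of $V(H)$-values. Because $|V(H)|$ is a constant and each source constraint contributes only a constant number of new vertices, the total vertex count stays linear in the source parameter.

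The heart of the proof, and the step I expect to be the main obstacle, is the uniform construction of these two gadgets across \emph{all} non-bi-arc graphs~$H$. The bi-arc property is a geometric condition, so its failure must be translated into a concrete combinatorial certificate of hardness. The Feder--Hell--Huang dichotomy, together with the classical characterisation of reflexive interval graphs and Trotter and Moore's description of bipartite co-circular-arc graphs, provides such certificates, but they split into several structurally distinct cases: reflexive non-interval~$H$, irreflexive non-bipartite~$H$, irreflexive bipartite~$H$ whose complement is not a circular-arc graph, and mixed cases combining loops and non-loops. For each case I would extract a small induced substructure of~$H$ together with a pair of vertices around which an \ineqgadget can be built; the \blockingold can typically be realised by combining such substructures with explicit list assignments.

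Once both gadgets are available in full generality, assembling the reduction and invoking the cross-composition / polynomial-parameter-transformation machinery is essentially routine, yielding the claimed $\Omega(n^{2-\varepsilon})$ lower bound on the kernel size for \Hcoloring.
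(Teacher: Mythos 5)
The central idea of your plan---gadgetize each binary constraint of a dense source instance with a constant-size combination of blocking- and unequal-gadgets---runs into exactly the obstacle the paper's introduction warns about. Any source problem with a degree-$2$ sparsification lower bound (\textsc{Vertex Cover}, \textsc{List $q$-Coloring}, a non-Boolean binary CSP) is hard precisely because it has $\Theta(N^2)$ constraints on $N$ variables. If you introduce even a constant number of fresh vertices for every such constraint, the resulting \listHcoloring instance has $\Theta(N^2)$ vertices, so the source lower bound of $\Omega(N^{2-\varepsilon})$ bits translates to only $\Omega(n^{1-\varepsilon/2})$ for the target parameter $n=\Theta(N^2)$---a trivial bound. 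This is exactly why the Feder--Hell--Huang NP-completeness reduction, which likewise replaces every edge of a \textsc{$3$-Coloring} instance by a gadget, does not transfer the known sparsification lower bound for \textsc{$3$-Coloring}.

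The paper circumvents this by not reducing from a generic dense problem at all. It defines an intermediate problem, \Anlistcoloring, whose annotations---the sequence $\mathcal{S}$ of ``not all~$\colc$'' sets and the family $F$ of ``unequal'' pairs---are \emph{a priori bounded linearly} in $|V(G)|$ (explicitly, $\sum_i |S_i| \le 3|V(G)|$ and $|F| \le |V(G)|$). The heavy lifting, a degree-$2$ cross-composition from \textsc{Clique}, is done directly into this annotated problem, where the quadratic information is encoded in ordinary \emph{edges} between $P$ and $Q$ (which cost no vertices), while the sparse annotations sit only on the $O(|V(G)|)$ special vertices. Only afterwards are the annotations expanded into gadgets---and since there are only linearly many of them, the vertex count stays linear. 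Your plan is missing this budgeting device, and without it the reduction cannot be made a linear-parameter transformation.

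A secondary difference: rather than a case split by reflexivity/bipartiteness as you suggest, the paper first passes to the associated bipartite graph $H^*$ (a vertex-preserving reduction due to Feder--Hell--Huang/Okrasa et al.), then proves a single structural lemma: every bipartite graph whose complement is not circular-arc contains an \emph{extended $P_4$ gadget}, a $5$-tuple $(a,b,c,d,e)$ with an induced $P_4$ on $\{a,b,c,d\}$ and suitable neighborhood incomparabilities. This one structure supports both the unequal-gadgets (your \ineqgadget) and, via a Jaffke--Jansen-style construction, a size-$O(k)$ blocking-gadget for ``at least one of $k$ vertices is not colored $\colc$''---which is precisely what is needed to realize the annotations in $\mathcal{S}$ within a linear vertex budget. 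Your proposed per-constraint \blockingold is the wrong granularity: it must act on whole annotated sets at once, and it must do so with a vertex count proportional to the set size, not to the number of pairs inside the set.
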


The techniques employed in the proof of Theorem~\ref{theorem:lowerbound} are rather different from those in the NP-completeness proof for the hard cases of \textsc{List $H$-Coloring}. Feder, Hell, and Huang~\cite{FederHH03} establish the NP-completeness of \textsc{List $H$-Coloring} when~$H$ is not a bi-arc graph, by reducing from \textsc{$3$-Coloring}. They build gadgets in \textsc{List $H$-Coloring} instances to mimic the effect of a normal edge in \textsc{$3$-Coloring}, and then replace each edge with such a gadget. Although \textsc{$3$-Coloring} is known not to admit any nontrivial sparsification unless \containment~\cite{JansenP19}, the mentioned NP-completeness reduction does not transfer this lower bound from \textsc{$3$-Coloring} to \textsc{List $H$-Coloring}: as the reduction introduces a gadget (with new vertices) for every edge of the \textsc{$3$-Coloring} instance, it blows up the number of variables.

Our sparsification lower bound therefore follows a different route. We introduce a technical annotated version of the \textsc{List $P_4$-Coloring} problem. For this annotated problem, we prove a sparsification lower bound via cross-composition~\cite{BodlaenderJK14}, a technique from kernelization lower bounds. We give a polynomial-time algorithm that embeds a sequence of~$t^2$ instances of the \textsc{Clique} problem, on~$n$ vertices each, into a single instance~$(G',L')$ of \textsc{Annotated List $P_4$-Coloring}, on~$\Oh(t \cdot n^{\Oh(1)})$ vertices, which acts as the logical OR of the \textsc{Clique} inputs: there is a list coloring if and only if at least one \textsc{Clique} instance has a solution. The fact that the information from~$t^2$ distinct inputs is packed into a single instance of~$\Oh(t \cdot n^{\Oh(1)})$ vertices, means that the embedding is very efficient: the~$t^2$ $n$-vertex instances of \textsc{Clique} carry~$t^2 \cdot n^2$ bits of information (for each instance, which edges are present?), while~$G'$ has~$t^2 \cdot n^{\Oh(1)}$ potential edges, and therefore carries~$t^2 \cdot n^{\Oh(1)}$ bits of information. Applying this reduction for~$t$ a polynomial in~$n$ whose degree depends on the constant in~$n^{\Oh(1)}$, this intuitively implies that~$G'$ cannot be sparsified without losing information. Via the framework of cross-composition~\cite{BodlaenderJK14} we get the formal result that \textsc{Annotated List $P_4$-Coloring} parameterized by the number of vertices~$n$ does not admit a generalized kernelization of size~$\Oh(n^{2-\varepsilon})$ for any~$\varepsilon > 0$ unless \containment.

To transfer the lower bound for \textsc{Annotated List $P_4$-Coloring} to \textsc{List $H$-Coloring} for all graphs~$H$ which are not bi-arc, we first use a reduction inspired by Feder, Hell, and Huang~\cite{FederHH03}, to reduce to the case of bipartite graphs~$H$. Then we investigate the common structure of simple  bipartite non-bi-arc graphs~$H$, which are known to be the simple bipartite graphs~$H$ whose complement is not a circular-arc graph~\cite{FederHH03}. We uncover a common structure of such graphs which can be used to prove the incompressibility of the related \textsc{List $H$-Coloring} problems: we prove all such graphs~$H$ contain five vertices~$(a,b,c,d,e)$ such that~$H[\{a,b,c,d\}]$ is an induced~$P_4$, the open neighborhoods~$N_H(a), N_H(c)$, and~$N_H(e)$ are incomparable (i.e., none of them is contained in another), and such that also the open neighborhoods~$N_H(b), N_H(d)$ are incomparable. This 5-tuple in a bipartite graph~$H$ is sufficient to prove hardness of sparsification, which we consider one of the main contributions of the paper: We prove that the 5-tuple can be used to implement certain gadgets to enforce pairs of vertices to receive different colors in \textsc{List $H$-Coloring}. By applying these gadgets sparingly --- and not for all edges --- we reduce \textsc{Annotated List $P_4$-Coloring} to \textsc{List $H$-Coloring} without blowing up the number of vertices, and obtain Theorem~\ref{theorem:lowerbound}.

\subparagraph*{Related work}
More background on homomorphisms and \textsc{$H$-Coloring} can be found in the textbook by Hell and Ne\v{s}et\v{r}il~\cite{HellN04}, or the survey by Hahn and Tardif~\cite{HahnT97}. The classical complexity of \textsc{$H$-Coloring} has also been investigated when restricted to planar~\cite{MacGillivrayS09}, minor-closed~\cite{EsperetMOP13}, and bounded-degree~\cite{GalluccioHN00,Siggers09} input graphs~$G$.
The complexity of \listHcoloring was investigated for bounded-degree graphs~\cite{FEDER2007386}. There is also an interesting line of research concerning the descriptive and space complexity~\cite{DBLP:conf/lics/DalmauEHLR15,DBLP:conf/soda/EgriHLR14,DBLP:journals/corr/abs-0912-3802}.
Finally, the fine-grained complexity of both variants was also investigated~\cite{DBLP:conf/stacs/EgriMR18,DBLP:journals/dam/GroenlandORSSS19,Okrasa20,DBLP:conf/soda/OkrasaR20,DBLP:journals/jcss/OkrasaR20}.

\subparagraph*{Organization} Section~\ref{sec:preliminaries} contains preliminaries on kernelization and graphs. In Section~\ref{sec:annotated:lb} we present a sparsification lower bound for an annotated version of \textsc{List $P_4$-coloring}, which forms the keystone of our hardness results. In Section~\ref{sec:gadgets} we analyze the structure of hard graphs~$H$, and use that structure to build certain gadgets. These allow us to reduce the annotated problem to standard \textsc{List $H$-Coloring} problems and prove Theorem~\ref{theorem:lowerbound}.

%!TEX root = Main-H-coloring-sparsification.tex
%% this above is for sublime text.

\section{Preliminaries}
\label{sec:preliminaries}

To denote the set of numbers $1$ to $n$, we use the following notation: $[n] := \{1,\ldots,n\}$. For a set $S$ we use the notation $\binom{S}{k} := \{S' \subseteq S \mid |S'| = k\}$ to denote the set of all size-$k$ subsets of~$S$, and we define $2^{S} := \bigcup_{k=0}^{|S|} \binom{S}{k}$.
We use the notation $S^k := \{(s_1,\ldots,s_k) \mid s_1,\ldots,s_k \in S\}$ to denote the set of all $k$-tuples with elements from $S$. In particular, $[n]^2$ denotes all $2$-tuples of elements from $[n]$.

%\paragraph*{Graphs}
\subparagraph*{Graphs.}
All graphs considered in this paper are finite and undirected, and do not have parallel edges. We allow self-loops, unless explicitly stated otherwise.
The vertex set and the edge set of $G$ are denoted by $V(G)$ and $E(G)$, respectively. 
%The graphs we consider do not have self-loops, unless this is explicitly mentioned. A typical graph~$G$ therefore has vertex set~$V(G)$ and edge set~$E(G) \subseteq \binom{V(G)}{2}$. 
An edge $\{u,v\} \in E(G)$ is denoted shortly by $uv$, and by $vv$ we denote the loop on the vertex $v$.
For $v \in V(G)$, by $N_G(v)$ we denote the \emph{open neighborhood} of $v$, i.e., the set $\{u \mid uv \in E(G)\}$. The closed neighborhood of~$v$ is~$N_G[v] := N_G(v) \cup \{v\}$.
For $S\subseteq V(G)$, by $G[S]$ we denote the subgraph of $G$ induced by $S$.
% A \emph{walk} of length~$k$ in a graph~$G$ is a sequence of vertices $(x_0, x_1,\ldots,x_k)$ such that $\{x_{i-1},x_{i}\} \in E(G)$ for all~$i \in [k]$. A walk is \emph{closed} if~$x_0 = x_k$.\prz{do we use closed walks?} A \emph{path} is a walk on which all vertices are distinct. A \emph{cycle} is a (closed) walk~$(x_0, x_1, \ldots, x_k)$ where all vertices are distinct except~$x_0 = x_k$. A path or cycle is \emph{odd} if its length is odd.\prz{do we use odd paths/cycles?} The graph~$C_\nv$ for~$\nv \geq 3$ is the cycle on~$\nv$ vertices, and~$K_\nv$ is the complete graph (clique) on~$\nv$ vertices.\prz{do we use them?}
%
A \emph{proper $q$-coloring} of $G$ is a function $f \colon V(G) \rightarrow [q]$ such that $f(u) \neq f(v)$  for all $uv \in E(G)$.
Let~$G$ and~$H$ be graphs. We say that~$G$ is \emph{$H$-colorable} if there exists a function $f \colon V(G) \rightarrow V(H)$ such that for all $uv \in E(G)$ it holds that $f(u)f(v) \in E(H)$. Such a function is also called a \emph{homomorphism} from~$G$ to~$H$. Note that a graph~$G$ has a homomorphism to the complete graph~$K_q$ if and only if~$G$ is (properly) $q$-colorable.
If $f$ is a homomorphism from $G$ to $H$, then we denote it by $f \colon G \to H$. We write $G \to H$ to indicate that some homomorphism from $G$ to $H$ exists.
For a graph $G$ and \emph{lists} $L \colon V(G) \to 2^{V(H)}$, a \emph{list homomorphism} from $(G,L)$ to $H$ is a homomorphism $f \colon G \to H$, such that for every $v \in V(G)$ it holds that $f(v) \in L(v)$. We write $f \colon (G,L) \to H$ if $f$ is a list homomorphism from $(G,L)$ to $H$, and $(G,L) \to H$ if some $f \colon (G,L) \to H$ exists.
% If~$G \rightarrow H$ and~$H \rightarrow G$, then graphs~$G$ and~$H$ are \emph{homomorphically equivalent} which is denoted by~$G \leftrightarrow H$.\prz{is it used anywhere?}
%A graph~$H$ is a \emph{core} if there is no homomorphism from~$H$ to an induced proper subgraph of~$H$.

%\paragraph*{Parameterized complexity}
\subparagraph*{Parameterized complexity.}
A \emph{parameterized problem} \Q is a subset of $\Sigma^* \times \mathbb{N}$, where $\Sigma$ is a finite alphabet.

\begin{definition}[Generalized kernel~\cite{BodlaenderJK14}] \label{def:generalized:kernel}
Let $\Q,\Q' \subseteq \Sigma^*\times\mathbb{N}$ be parameterized problems and let $h\colon\mathbb{N}\rightarrow\mathbb{N}$  be a computable function. A \emph{generalized kernel for \Q into $\Q'$ of size $h(k)$} is an algorithm that, on input $(x,k) \in \Sigma^*\times\mathbb{N}$, takes time polynomial in $|x|+k$ and outputs an instance $(x',k')$ such that:
%\begin{enumerate}
%\item
(i)~$|x'|$ and $k'$ are bounded by $h(k)$, and
%\item
(ii)~$(x',k')\in\Q'$ if and only if $(x,k) \in\Q$.
%\end{enumerate}
A generalized kernel is a \emph{kernel} for $\Q$ if $\Q = \Q'$.
\end{definition}
%
%As a polynomial-time reduction to an equivalent sparse instance yields a generalized kernel, a lower bound for the size of a generalized kernel can be used to prove the non-existence of sparsification algorithms.
%
In our applications, the complexity parameter~$k$ will be the number of vertices~$n$. We will use the framework of cross-composition, introduced by Bodlaender, Jansen, and Kratsch~\cite{BodlaenderJK14}, to establish kernelization lower bounds.

\begin{definition}[{Polynomial equivalence relation, \cite[Def. 3.1]{BodlaenderJK14}}] \label{definition:eqvr}
An equivalence relation~\eqvr on~$\Sigma^*$ is called a \emph{polynomial equivalence relation} if the following conditions hold.
\begin{itemize}
\item There is an algorithm that, given two strings~$x,y \in \Sigma^*$, decides whether~$x$ and~$y$ belong to the same equivalence class in time polynomial in~$|x| + |y|$.
\item For any finite set~$S \subseteq \Sigma^*$ the equivalence relation~$\eqvr$ partitions the elements of~$S$ into a number of classes that is polynomially bounded in the size of the largest element of~$S$.\qedhere
\end{itemize}
\end{definition}

\begin{definition}[{Cross-composition, \cite[Def. 3.7]{BodlaenderJK14}}]\label{definition:crosscomposition}
Let~$L\subseteq\Sigma^*$ be a language, let~$\eqvr$ be a polynomial equivalence relation on~$\Sigma^*$, let~$\Q\subseteq\Sigma^*\times\N$ be a parameterized problem, and let~$f \colon \N \to \N$ be a function. An \emph{\OR-cross-com\-position of~$L$ into~$\Q$} (with respect to \eqvr) \emph{of cost~$f(t)$} is an algorithm that, given~$t$ instances~$x_1, x_2, \ldots, x_t \in \Sigma^*$ of~$L$ belonging to the same equivalence class of~$\eqvr$, takes time polynomial in~$\sum _{i=1}^t |x_i|$ and outputs an instance~$(y,k) \in \Sigma^* \times \mathbb{N}$ such that:
\begin{itemize}
\item The parameter~$k$ is bounded by $\Oh(f(t)\cdot(\max_i|x_i|)^c)$, where~$c$ is some constant independent of~$t$, and
\item instance $(y,k) \in \Q$ if and only if there is an~$i \in [t]$ such that~$x_i \in L$.\label{property:OR}\qedhere
\end{itemize}
\end{definition}

\begin{theorem}[{\cite[Theorem 3.8]{BodlaenderJK14}}] \label{thm:cross_composition_LB}
Let~$L\subseteq\Sigma^*$ be a language, let~$\Q\subseteq\Sigma^*\times\N$ be a parameterized problem, and let~$d,\varepsilon$ be positive reals. If~$L$ is NP-hard under Karp reductions, has an \OR-cross-composition into~$\Q$ with cost~$f(t)=t^{1/d+o(1)}$, where~$t$ denotes the number of instances, and~$\Q$ has a polynomial (generalized) kernelization with size bound~$\Oh(k^{d-\varepsilon})$, then \containment.
\end{theorem}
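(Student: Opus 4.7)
The plan is to compose the hypothesized OR-cross-composition with the hypothesized generalized kernelization to obtain a weak OR-compression of $L$ whose output bitsize is strictly sublinear in the number of input instances $t$, and then to invoke the classical Fortnow--Santhanam lower bound on OR-distillations, in the refined $t^{1-\delta}$-regime due to Dell and van Melkebeek, to conclude \containment.

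Concretely, given $t$ instances $x_1, \dots, x_t \in \Sigma^*$ of $L$ each of bitsize at most $s$, I would first use \cref{definition:eqvr} to partition them in polynomial time into the at most $s^{\Oh(1)}$ classes of the polynomial equivalence relation $\eqvr$ required by the cross-composition. Within each class I would pad with copies of a fixed trivial \textsc{no}-instance of $L$ (which exists since $L$ is NP-hard, hence nontrivial) until the class contains exactly $t$ instances, and then apply the OR-cross-composition to that class to produce $(y_i, k_i) \in \Sigma^* \times \N$ with parameter bound $k_i = \Oh(t^{1/d+o(1)} \cdot s^c)$. Feeding each $(y_i, k_i)$ through the generalized kernel of size $\Oh(k^{d-\varepsilon})$ yields an instance of some fixed target problem $\Q'$ of bitsize
\[
\Oh\bigl(k_i^{d-\varepsilon}\bigr) \;=\; \Oh\bigl(t^{1-\varepsilon/d+o(1)} \cdot s^{c(d-\varepsilon)}\bigr).
\]
Concatenating the at most $s^{\Oh(1)}$ resulting instances into a single tuple, viewed as a \yes-instance of the OR-language $R := \{(z_1,\dots,z_m) : z_j \in \Q' \text{ for some } j\}$, yields a composite polynomial-time algorithm that outputs a \yes-instance of $R$ iff some $x_i \in L$, and whose total bitsize is $\Oh(t^{1-\delta} \cdot s^{c'})$ for any fixed $\delta < \varepsilon/d$ and a suitable constant $c'$, provided $t$ is taken sufficiently large relative to $s$ (so that the $o(1)$-term in the exponent of $t$ is absorbed).

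This composite algorithm is, by construction, an OR-compression of $L$ into $R$ in the $t^{1-\delta}$-sublinear regime. The key external input is then the distillation lower bound: if an NP-hard language has an OR-compression whose output size is bounded by $t^{1-\delta} \cdot \mathrm{poly}(s)$ for some fixed $\delta > 0$, then \containment. Applying this lower bound to our composite algorithm --- using that $L$ is NP-hard under Karp reductions, which is precisely the hypothesis needed to plug $L$ into the Fortnow--Santhanam framework --- immediately yields the conclusion.

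The main obstacle lies not in the composition itself --- which amounts to routine bookkeeping --- but in invoking the distillation lower bound in its correct sharpened $t^{1-\delta}$ form, and in handling the polynomial equivalence relation cleanly so that each call to the cross-composition genuinely receives exactly $t$ instances (recall that its cost $f(t)$ is measured in the number of inputs fed into it, so uniform padding is needed to keep a clean bound) and so that the final OR-language $R$ is an honest target of a single OR-compression of $L$. Once these points are settled, the theorem follows by a direct calculation with the two size bounds.
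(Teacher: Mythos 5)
Your proposal follows essentially the same route as the actual proof of this theorem in the cited work~\cite{BodlaenderJK14} (the present paper only imports the statement): partition the $t$ inputs by the polynomial equivalence relation, apply the cross-composition per class, run the generalized kernel on each output, concatenate into an instance of an OR-language, and invoke the Fortnow--Santhanam/Dell--van Melkebeek lower bound in the $t^{1-\delta}\cdot\mathrm{poly}(s)$ regime with $t$ a sufficiently large polynomial in $s$; the exponent bookkeeping $\bigl(t^{1/d+o(1)}s^c\bigr)^{d-\varepsilon}=t^{1-\varepsilon/d+o(1)}\mathrm{poly}(s)$ is exactly right. One small slip: padding a class with a \emph{fixed} trivial no-instance is not sound as written, since the cross-composition is only guaranteed on inputs that are all $\eqvr$-equivalent and the padding instance need not lie in that class; instead duplicate an instance already present in the class (the standard trick, used verbatim in this paper's own composition), or drop the padding altogether by bounding the cost for $t'\leq t$ inputs. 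With that repair the argument is the intended one and yields \containment.
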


We will refer to an \OR-cross-composition of cost~$f(t) = \sqrt{t} \log (t)$ as a \emph{degree-$2$ cross-composition}. By Theorem~\ref{thm:cross_composition_LB}, a degree-$2$ cross-composition can be used to rule out generalized kernels of size~$\Oh(k^{2 - \varepsilon})$ and thus provides a way to obtain sparsification lower bounds. 
Generalized kernelization lower bounds can be transferred using the notion of linear-parameter transformations.

\begin{definition}[Linear-parameter transformation]\label{def:lpt}
Let $\mathcal{P}, \mathcal{Q} \subseteq \Sigma^* \times \mathbb{N}$ be two parameterized problems.
A \emph{linear-parameter transformation} from
$\mathcal{P}$
to $\mathcal{Q}$ is a polynomial-time algorithm that, given an instance $(x,k) \in \Sigma^* \times \mathbb{N}$ of $\mathcal{P}$, outputs an instance $(x',k') \in \Sigma^* \times \mathbb{N}$ of $\mathcal{Q}$ such that the following holds:
(i) $(x,k) \in \mathcal{P}$ if and only if $(x',k') \in \mathcal{Q}$, and
(ii) $k' \in \mathcal{O}(k)$.
\end{definition}

\noindent It is well-known~\cite{BodlaenderJK14} that the existence of a linear-parameter transformation from problem~$\mathcal{P}$ to~$\mathcal{Q}$ implies that any generalized kernelization lower bound for~$\mathcal{P}$, also holds for~$\mathcal{Q}$.

\section{Lower bound for Annotated List $P_4$-Coloring} \label{sec:annotated:lb}
We prove a sparsification lower bound for the following problem, where we take $P_4$ to be the graph on vertices $\{\cola,\colb,\colc,\cold\}$ with edges $\cola \colb, \colb\colc,\colc\cold$.

\defproblem{\Anlistcoloring}
{A tuple $(G,L,\mathcal{S},F)$, such that $G$ is a simple undirected bipartite graph with bipartition~$V(G) = V_1 \cup V_2$, $L \colon V(G) \to 2^{\{\cola,\colb,\colc,\cold\}}$ with~$L(v) \subseteq \{\cola,\colc\}$ for all $v\in V_1$ and~$L(v) \subseteq \{\colb,\cold\}$ for all $v\in V_2$, $\mathcal{S} = S_1, \ldots, S_m$ is a sequence such that~$S_i \subseteq V_1$  for each~$i \in [m]$ satisfying~$\sum _{i=1}^m |S_i| \leq 3|V(G)|$, and $F \subseteq \binom{V_1}{2} \cup \binom{V_2}{2}$ is a set with~$|F| \leq |V(G)|$.}
{Does~$G$ admit a homomorphism~$f \colon V(G) \to \{\cola,\colb,\colc,\cold\}$ to the graph~$P_4$  with~$f(v) \in L(v)$ for all~$v \in V(G)$, such that for all~$i \in [m]$ there is a vertex~$v \in S_i$ with~$f(v) \neq \colc$, and such that for all~$\{u,v\} \in F$ we have~$f(u) \neq f(v)$?}

Intuitively, the annotations allow one to express two types of additional constraints on the coloring~$f$. Using a set~$S_i$, one can enforce that at least one vertex is not colored~$\colc$. Using a pair~$\{u,v\} \in F$, one can enforce that~$u$ and~$v$ do not receive the same color. While the latter can easily be expressed by simply inserting an edge between~$u$ and~$v$ in a \textsc{$K_q$-Coloring} instance, this needs a nontrivial gadget for general graphs~$H$.

\begin{lemma}\label{lem:annotated-coloring:LB}
\Anlistcoloring parameterized by the number of vertices~$n$ admits no generalized kernel of size~$\Oh(n^{2-\varepsilon})$ for any~$\varepsilon > 0$, unless \containment.
\end{lemma}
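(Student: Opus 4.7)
The plan is to establish the lower bound via a degree-$2$ \OR-cross-composition from an NP-hard source problem; a convenient choice is \textsc{Multicolored Clique}, where the vertex set of a graph is partitioned into $k$ color classes $V_1,\dots,V_k$ of equal size $n$ and one seeks a $k$-clique meeting each class. Define the polynomial equivalence relation $\eqvr$ to identify instances with the same $(k,n)$ and to place malformed strings in a single class. Given $T=t^2$ $\eqvr$-equivalent inputs $\{I_{i,j}\}_{(i,j)\in[t]^2}$, the goal is to construct in polynomial time a single \Anlistcoloring instance $(G',L',\mathcal{S},F)$ on $\Oh(tkn)$ vertices that is a yes-instance iff some $I_{i,j}$ admits a multicolored $k$-clique. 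Since $\sqrt{T}\log T\in\Oh(t\log t)$, this realizes a degree-$2$ cross-composition of cost $f(T)=\sqrt{T}\log T$, and \cref{thm:cross_composition_LB} with $d=2$ then yields \cref{lem:annotated-coloring:LB}.

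The construction interprets \Anlistcoloring as a bipartite monotone 2-SAT enriched with positive OR-clauses and XOR equalities: identifying \cola (resp.\ \cold) with ``true'' on $V_1$ (resp.\ $V_2$), each edge of $G'$ encodes $\neg u\vee\neg v$, each $S\in\mathcal{S}$ encodes $\bigvee_{v\in S}v$, and each pair in $F$ encodes an XOR within a side. With $\Oh(1)$ overhead per logical variable we create four color-coordinated copies (both polarities on each side, $F$-linked within a side and cross-side equalities enforced by edges through opposite-polarity mirrors), so we can place any literal on either side and express the implication $u\to w$ for $u,w$ on the same side as a single edge between $u$ and the $V_2$-copy of $\neg w$. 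We then introduce row-selectors $r_i\in V_1$ with $S_\mathrm{row}=\{r_i\}_i\in\mathcal{S}$; column-selectors $c_j\in V_1$ with an analogous $S_\mathrm{col}$; global slot-choice variables $P_{\ell,v}\in V_1$ for $(\ell,v)\in[k]\times V_\ell$ with $S_\ell=\{P_{\ell,v}:v\in V_\ell\}\in\mathcal{S}$; and conjunction-witnesses $A_{i,\ell,v}\in V_1$ and $B_{j,\ell,v}\in V_2$ intended to equal $r_i\wedge P_{\ell,v}$ and $c_j\wedge P_{\ell,v}$ respectively. For each $(i,j)\in[t]^2$, each $\ell\neq\ell'\in[k]$, and each non-edge $\{u,w\}\notin E(I_{i,j})$ with $u\in V_\ell,w\in V_{\ell'}$, we add the single edge $A_{i,\ell,u}B_{j,\ell',w}$ to $G'$, directly encoding $\neg A_{i,\ell,u}\vee\neg B_{j,\ell',w}$.

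The main technical obstacle is realising the conjunction $A_{i,\ell,v}\leftrightarrow r_i\wedge P_{\ell,v}$ (and likewise for $B$) using only the available primitives. The forward direction $A\to r$ and $A\to P$ is achieved by two edges through mirrors. The reverse direction $r\wedge P\to A$ is the $3$-literal clause $A\vee\neg r\vee\neg P$, and this is precisely where the $\mathcal{S}$-primitive is used: we add $S^{\wedge}_{i,\ell,v}=\{A_{i,\ell,v},\bar r_i,\bar P_{\ell,v}\}\in\mathcal{S}$, with $\bar r_i,\bar P_{\ell,v}\in V_1$ the $F$-linked negation copies of $r_i,P_{\ell,v}$; analogous clauses handle $B_{j,\ell,v}$. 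A straightforward accounting then gives $|V(G')|=\Oh(tkn)$, $\sum_i|S_i|=\Oh(tkn)\le 3|V(G')|$, and $|F|=\Oh(tkn)\le|V(G')|$, all within the budgets imposed by the problem definition. Correctness is verified in the two standard directions: for completeness, a multicolored $k$-clique $(v_1,\dots,v_k)$ in some $I_{i^*,j^*}$ is witnessed by activating only $r_{i^*}$, $c_{j^*}$, $P_{\ell,v_\ell}$, $A_{i^*,\ell,v_\ell}$, and $B_{j^*,\ell,v_\ell}$ with consistent mirrors and leaving everything else inactive; for soundness, $S_\mathrm{row}$ and $S_\mathrm{col}$ activate some $r_{i^*},c_{j^*}$, the $S_\ell$-clauses pick values $v_\ell$ per slot, the $S^{\wedge}$-clauses then force $A_{i^*,\ell,v_\ell}$ and $B_{j^*,\ell,v_\ell}$ active, and the non-edge edges of $G'$ certify $\{v_\ell,v_{\ell'}\}\in E(I_{i^*,j^*})$ for all $\ell\neq\ell'$, producing the required multicolored clique. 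An application of \cref{thm:cross_composition_LB} with $d=2$ then completes the proof.
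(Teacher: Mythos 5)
Your proposal is a correct degree-$2$ cross-composition, but it is structured quite differently from the paper's.

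\textbf{Comparison.} The paper composes $t$ instances of plain \textsc{Clique} (with $t'=\sqrt t$), building the instance directly: a ``selection'' block $P_j$ per column with vertices $p^j_{\ell,m}$ encoding a candidate clique per column, a ``verification'' block $Q_i$ per row with vertices $q,r,\hat q,\hat r,s,t$ indexed by ordered pairs of vertices and pairs of slots, plus selector vertices $y_j,\hat y_j,z_i,\hat z_i$. The instance-specific information appears as edges between $P_j$ and $Q_i$, and the $\cS$-sets and $F$-pairs are used sparingly and concretely. You instead compose \textsc{Multicolored Clique} and first reinterpret \Anlistcoloring as a small constraint language (edges $=$ negative $2$-clauses, $\cS$-sets $=$ positive clauses, $F$-pairs $=$ within-side XORs), then set up a four-copies-per-variable mirror gadget to freely move literals across sides and implement implications. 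On top of that you place a single set of \emph{global} slot variables $P_{\ell,v}$, row/column selectors $r_i,c_j$, and conjunction witnesses $A_{i,\ell,v}\in V_1$, $B_{j,\ell,v}\in V_2$, and you hang the instance data on the $A$--$B$ edges. I checked the budgets: with four copies and two $F$-pairs per variable, $|F|\le|V(G')|$ and $\sum_{S}|S|\le 3|V(G')|$ hold, and the $\cS$-sets are subsets of $V_1$ as required, provided you use the $V_1$-side copy of each $B$ in the corresponding $S^{\wedge}$-clause; this is implicitly handled by your mirror machinery but should be stated explicitly. Your construction is somewhat smaller in vertex count ($\Oh(\sqrt T\cdot kn)$ vs.\ the paper's $\Oh(\sqrt T\cdot n^2k^2)$), and the logical interpretation makes it modular and arguably easier to extend; the paper's direct construction avoids introducing the four-copy overhead and the \textsc{Multicolored Clique} preprocessing, at the cost of more bookkeeping over pair- and slot-indexed verification vertices.

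\textbf{Minor points to firm up.} (i) Spell out the cross-side equality gadget and observe that the ``forward'' edges $A\to r$, $A\to P$ are in fact unnecessary for correctness. (ii) When stating soundness, note that $S_\ell$ only forces \emph{at least one} $P_{\ell,v}$ true; this is harmless (pick any), but worth saying. (iii) Name the lists explicitly ($\{\cola,\colc\}$ on $V_1$, $\{\colb,\cold\}$ on $V_2$). None of these affects the validity of the approach.
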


\newcommand{\col}{\ensuremath{h}\xspace}%
\newcommand{\coloring}{annotated $P_4$-coloring\xspace}%
\newcommand{\acoloring}{an annotated $P_4$-coloring\xspace}%
\newcommand{\colorable}{annotated $P_4$-colorable\xspace}%
\begin{proof}
We will prove this lower bound by giving a degree-$2$ cross-composition from \textsc{Clique} to \Anlistcoloring.  We define a polynomial equivalence relation $\mathcal{R}$ on instances of \textsc{Clique}. Let any two instances that ask for a clique that is larger than their respective number of vertices be equivalent; these are always no-instances. Let two instances of \textsc{Clique} be equivalent under $\mathcal{R}$, when the input graphs have same number of vertices and the problems ask for a clique of the same size. It is easy to verify that $\mathcal{R}$ is indeed a polynomial equivalence relation.

By duplicating one of the inputs multiple times as needed, we can assume the number of inputs to the cross-composition is a square. Therefore, assume we are given $t$ instances of \textsc{Clique}, such that $t' := \sqrt{t}$ is integer and such that each instance has $n$ vertices and asks for a size-$k$ clique. Enumerate the given input instances as $X_{i,j}$ for $i,j\in[t']$ and let $G_{i,j}$ denote the corresponding graph. Label the vertices in each instance arbitrarily as $x_1,\ldots,x_n$. 
We show how to create an instance $(G,L,\mathcal{S},F)$ that is a yes-instance for \Anlistcoloring
 if and only if at least one of the given instances for \textsc{Clique} is a yes-instance. Refer to  Figure~\ref{fig:cross-composition} for a sketch.

\begin{figure}[t]
 \includegraphics{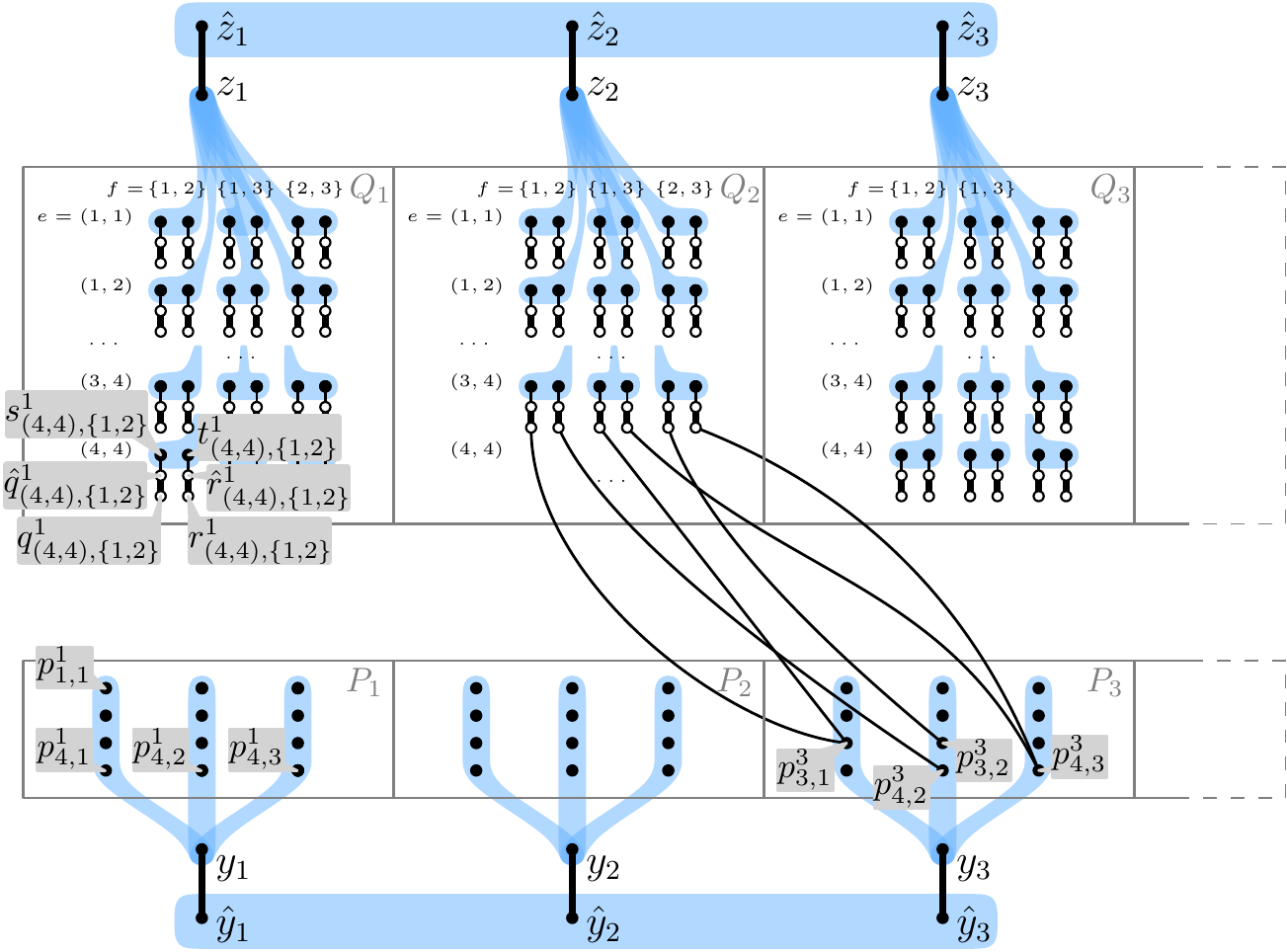}
 \caption{A sketch of the created graph $G$, for $n = 4$, and $k = 3$ where  $x_3x_4 \notin E(G_{2,3})$. Edges between $P$ and $Q$ are omitted, except for the edges that result from the fact that $x_3x_4 \notin E(G_{2,3})$. A fat edge between $u$ and $v$ indicates that $\{u,v\} \in F$. Vertex sets contained in $\mathcal{S}$ are marked in blue. White vertices have lists $\{\colb,\cold\}$ while black vertices have list $\{\cola,\colc\}$. Note that the constructed graph is bipartite with the white and black vertices as partite sets.}
 \label{fig:cross-composition}
\end{figure}

\begin{enumerate}
 \item  \label{step:vertices_bottom} For each $j \in [t']$, $\ell \in [n]$, and $m \in [k]$ create a vertex $p^j_{\ell,m}$. Let $\ELL(p^j_{\ell,m}) := \{\cola,\colc\}$. Let $P_j$ contain all created vertices $p^j_{\ell,m}$ for $\ell \in [n]$, $m \in [k]$. Let $P:= \bigcup_{j \in [t']} P_j$. %For every vertex in $Q$, let its list be $\{L,R,C\}$.
 \item \label{step:vertices_top} 
 For each $f \in \binom{[k]}{2}$, each $e = (e_1,e_2) \in [n]^2$, and each $i \in [t']$, create vertices $q^i_{e,f}$, $r^i_{e,f}$, $\hat{q}^i_{e,f}$, $\hat{r}^i_{e,f}$, $s^i_{e,f}$, and $t^i_{e,f}$. Let $Q_i :=\{q^i_{e,f},r^i_{e,f},\hat{q}^i_{e,f},\hat{r}^i_{e,f},s^i_{e,f},t^i_{e,f}\mid  f \in \binom{[k]}{2}, e \in [n]^2\}$. Note that $Q_i$ contains $6\binom{k}{2}$ vertices for each ordered pair of vertices in an $n$-vertex graph; these pairs model edges and self-loops. Let $Q:= \bigcup_{i \in [t']} Q_i$. Now let $\ELL(q^i_{e,f}):= \ELL(r^i_{e,f}) := \ELL(\hat{q}^i_{e,f}) := \ELL(\hat{r}^i_{e,f}):=\{\colb,\cold\}$ and $\ELL(s^i_{e,f}) := \ELL(t^i_{e,f}) := \{\cola,\colc\}$.
 \item \label{step:connect_vertices_top} For each  $f \in \binom{[k]}{2}$, each $e = (e_1,e_2) \in [n]^2$, and each $i \in [t']$, do the following. Connect vertex $\hat{q}^i_{e,f}$ to vertex $s^i_{e,f}$, and connect vertex $\hat{r}^i_{e,f}$ to vertex $t^i_{e,f}$. This ensures that when $\hat{q}^i_{e,f}$ (respectively, $\hat{r}^i_{e,f}$) gets color $\cold$, then $s^i_{e,f}$ (respectively $t^i_{e,f}$) always gets color $\colc$, since vertex~$\colc$ is the unique neighbor of vertex~$\cold$ in~$P_4$. If however $\hat{q}^i_{e,f}$ gets color $\colb$, then $s^i_{e,f}$ can receive color $\cola$ or $\colc$. 
  Add the pairs $\{q^i_{e,f},\hat{q}^i_{e,f}\}$ and $\{\hat{r}^i_{e,f},r^i_{e,f}\}$ to $F$. Verify that when both $q^i_{e,f}$ and $r^i_{e,f}$ get color $\colb$, then $s^i_{e,f}$ and $t^i_{e,f}$ must get color $\colc$. 
\end{enumerate}
Recall that the goal of the construction is to ensure that the \Anlistcoloring instance~$(G, \ELL, \mathcal{S}, F )$ acts as the logical \textsc{or} of the \textsc{Clique} instances~$X_{i,j}$, so that~$G$ has a~coloring respecting the lists and annotations if and only if some input graph~$G_{i,j}$ has a clique of size~$k$. The part of~$G$ constructed so far allows colorings of~$G$ to encode the vertex set of a $k$-clique through its behavior on $P$. Finding a proper list coloring of~$G$ entails highlighting vertices from one set $P_j$ that correspond to a clique in instance $X_{i,j}$ for some $i \in [t']$. The highlighting property will be enforced by ensuring at least one vertex in each set $\{p^j_{\ell,m}\mid \ell \in [n]\}$ for~$m \in [k]$ receives color $\cola$. The index of the vertex that is colored~$\cola$ encodes the $m$-th vertex in the clique to which the coloring corresponds. The vertices in $Q_i$ are then used to verify that the selected vertices form a clique in $G_{i,j}$. The next steps add additional vertices and edges, in order to achieve these properties.
\begin{enumerate}
\setcounter{enumi}{3}
\item \label{step:connect_top_bottom} For each $i,j \in [t']$, consider instance $X_{i,j}$. For all $f \in \binom{[k]}{2}$ and $e = (e_1,e_2) \in [n]^2$, connect vertex $p^j_{e_1,f_1}$ to $q^i_{e,f}$ and connect $p^j_{e_2,f_2}$ to $r^i_{e,f}$ whenever $x_{e_1}x_{e_2} \notin E(G_{i,j})$. Here $f_1 < f_2$ are such that $f = \{f_1,f_2\}$. Observe that in particular (since $G_{i,j}$ is a simple graph), we have that $x_{e_1}x_{e_1}\notin E(G_{i,j})$ for all $e_1 \in [n]$. Observe also that each vertex~$q^i_{e,f}, r^i_{e,f}$ has a \emph{unique} neighbor in~$P_j$ for each~$j \in [t']$.
\end{enumerate}
The above step will allow using the coloring of vertices $s^i_{e,f}$ and $t^i_{e,f}$ to verify that the vertices selected in $P_j$ correspond to a clique: when $x_{e_1}x_{e_2}$ is not an edge, they will ensure that we cannot select both.
\begin{enumerate}
\setcounter{enumi}{4}
 \item \label{step:add_y} Add vertices $y_j$ and $\hat{y}_j$ for all $j \in [t']$ and let $Y:= \{y_j \mid j \in [t']\}$, $\hat{Y} := \{\hat{y}_j\mid j \in [t']\}$. Let $\ELL(y_j) := \ELL(\hat{y}_j) := \{\cola,\colc\}$ for all $j \in [t']$. 
 \item \label{step:add_z} Similarly, add vertices $z_i$ and $\hat{z}_i$ for all $i \in [t']$ and let $Z := \{z_i \mid i \in [t']\}$, $\hat{Z} := \{\hat{z}_i\mid i \in [t']\}$. Let $\ELL(z_i) := \ELL(\hat{z}_i) := \{\cola,\colc\}$.
 \item \label{step:gadget_yz} Add the sets $\hat{Y}$ and $\hat{Z}$ to $\mathcal{S}$. Furthermore, for all $i \in [t']$, add $\{y_i,\hat{y}_i\}$ and $\{z_i,\hat{z}_i\}$ to $F$. 
\end{enumerate}
The steps above ensure that at least one vertex $y_j \in Y$ receives color $\colc$ and at least one vertex in $z_i \in Z$ receives color $\colc$. This will indicate that instance $X_{i,j}$ is selected. We will now put further constraints on the coloring of $P_j$ and $Q_i$ when they correspond to a selected instance.
\begin{enumerate}
 \setcounter{enumi}{7}
 \item \label{step:gadget_bottom} For all $j \in [t']$, $m \in [k]$, we add the set $\{y_j\} \cup \{p^j_{\ell,m} \mid \ell \in [n]\}$ to $\mathcal{S}$.  
 \item \label{step:gadget_top} For all $i \in [t']$, for all $f \in \binom{[k]}{2}$ and $e \in [n]^2$, add the set $\{s^i_{e,f},t^i_{e,f},z_i\}$ to $\mathcal{S}$.
\end{enumerate}
This concludes the construction of $G$, $\ELL$, $\mathcal{S}$ and $F$. 
Let us start by counting the number of vertices in $G$:
  \begin{align*} |V(G)| = \underbrace{t'\cdot n\cdot k}_{|P|} + \underbrace{t'\cdot (n^2 \cdot \binom{k}{2} \cdot 6)}_{|Q|} + \underbrace{t' + t' + t' + t'}_{|Y| + |\hat{Y}| + |Z| + |\hat{Z}|} = \Oh(\sqrt{t}\cdot n^2\cdot k^2). \end{align*}
Observe that hereby $|V(G)|$ is properly bounded for a degree-$2$ cross composition. 

We continue by showing that $G$ is a valid instance of \Anlistcoloring.
Verify  that $G$ is bipartite with bipartition $V_1 = P \cup Y \cup \hat{Y}\cup Z\cup \hat{Z} \cup \{s^i_{e,f},t^i_{e,f} \mid f \in \binom{[k]}{2}, e\in[n]^2, i \in [t']\}$ and $V_2 = \{q^i_{e,f},r^i_{e,f},\hat{q}^i_{e,f},\hat{r}^i_{e,f} \mid f \in \binom{[k]}{2}, e\in[n]^2, i \in [t']\}$. Hence, $V_1$ contains all vertices whose lists are a subset of $\{\cola,\colc\}$ and $V_2$ contains all remaining vertices, and it can be verified that the lists of these vertices are a subset of $\{\colb,\cold\}$. Observe that indeed each set in $F$ is a subset of either $V_1$ or $V_2$, and each set in $\mathcal{S}$ is a subset of $V_1$.

Furthermore, it is straightforward to verify that $|F| \leq |V(G)|$ as promised for \Anlistcoloring (note that we only add elements to $F$  in Steps~\ref{step:connect_vertices_top} and~\ref{step:gadget_yz}). We can also verify that 
\begin{align*}
\sum_{S \in \mathcal{S}} |S| \leq \underbrace{2\cdot t'}_{\text{Step~\ref{step:gadget_yz}}} + \underbrace{t'\cdot k \cdot (n+1)}_{\text{Step~\ref{step:gadget_bottom}}}+\underbrace{t'\cdot n^2 \cdot \binom{k}{2}\cdot 3}_{\text{Step~\ref{step:gadget_top}}} \leq 3|V(G)|.
\end{align*}
As such, we have created a valid instance of \Anlistcoloring.
The next two claims show that the constructed graph $G$ indeed acts as the logical \textsc{or} of the given input instances.

\begin{claim}
\label{claim:cc:correctness-1}
 If some input graph $G_{i^*,j^*}$ has a clique of size $k$, then $G$ is \colorable.
\end{claim}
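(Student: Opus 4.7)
The plan is to exhibit an explicit colouring $h \colon V(G) \to \{\cola, \colb, \colc, \cold\}$ built directly from a given $k$-clique $\{x_{\ell_1}, \ldots, x_{\ell_k}\}$ of $G_{i^*, j^*}$, with the convention that $\ell_m$ plays the role of the $m$-th clique vertex. The high-level idea is that the colouring of $P_{j^*}$ encodes which vertices form the clique, the vertices $y_{j^*}$ and $z_{i^*}$ mark $X_{i^*, j^*}$ as the ``selected'' instance, and the colouring of $Q_{i^*}$ certifies that every needed pair of clique vertices is adjacent in $G_{i^*, j^*}$.

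I would colour $P$ by setting $h(p^j_{\ell, m}) = \cola$ exactly when $j = j^*$ and $\ell = \ell_m$, and $h(p^j_{\ell, m}) = \colc$ otherwise. For the selection vertices I set $h(y_{j^*}) = h(z_{i^*}) = \colc$ with $h(\hat{y}_{j^*}) = h(\hat{z}_{i^*}) = \cola$, while for $j \neq j^*$ and $i \neq i^*$ the roles are reversed: $h(y_j) = h(z_i) = \cola$ and $h(\hat{y}_j) = h(\hat{z}_i) = \colc$. This immediately satisfies the $F$-pairs $\{y_j, \hat{y}_j\}$ and $\{z_i, \hat{z}_i\}$, and makes each $\mathcal{S}$-set from Steps~\ref{step:gadget_yz} and~\ref{step:gadget_bottom} contain a non-$\colc$ vertex.

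For each $Q_i$ with $i \neq i^*$ I would colour uniformly $h(q^i_{e,f}) = h(r^i_{e,f}) = \colb$, $h(\hat{q}^i_{e,f}) = h(\hat{r}^i_{e,f}) = \cold$, and $h(s^i_{e,f}) = h(t^i_{e,f}) = \colc$; the $\mathcal{S}$-set $\{s^i_{e,f}, t^i_{e,f}, z_i\}$ is then satisfied via $h(z_i) = \cola$. For $i = i^*$ I would branch on whether $e_1 = \ell_{f_1}$ (writing $f = \{f_1, f_2\}$ with $f_1 < f_2$): if $e_1 \neq \ell_{f_1}$, set $h(q^{i^*}_{e,f}) = \cold$, $h(\hat{q}^{i^*}_{e,f}) = \colb$, $h(s^{i^*}_{e,f}) = \cola$, and the $r$-side to $h(r^{i^*}_{e,f}) = \colb$, $h(\hat{r}^{i^*}_{e,f}) = \cold$, $h(t^{i^*}_{e,f}) = \colc$; if $e_1 = \ell_{f_1}$, apply the symmetric assignment with the $q$- and $r$-roles swapped, so that $h(t^{i^*}_{e,f}) = \cola$ supplies the required non-$\colc$ vertex in $\{s^{i^*}_{e,f}, t^{i^*}_{e,f}, z_{i^*}\}$.

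The main obstacle --- really the only non-routine check --- is that whenever the colouring assigns $\cold$ to a vertex of $Q_{i^*}$, all of its neighbours in $P$ are already $\colc$, since $\colc$ is the only $P_4$-neighbour of $\cold$. In the branch $e_1 \neq \ell_{f_1}$, the $\cold$-coloured $q^{i^*}_{e,f}$ has neighbours only among the vertices $p^j_{e_1, f_1}$ over various $j$, none of which carries colour $\cola$, precisely because $e_1 \neq \ell_{f_1}$. In the branch $e_1 = \ell_{f_1}$, the $\cold$-coloured $r^{i^*}_{e,f}$ could in principle conflict with $p^{j^*}_{e_2, f_2} = \cola$, which happens only when additionally $e_2 = \ell_{f_2}$; but then $x_{\ell_{f_1}} x_{\ell_{f_2}}$ is an edge of the clique, and hence of $G_{i^*, j^*}$, so by Step~\ref{step:connect_top_bottom} the edge between $r^{i^*}_{e,f}$ and $p^{j^*}_{e_2, f_2}$ was never created. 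All remaining constraints --- edges incident to $\colb$-coloured vertices, the remaining $F$-pairs, and the $\mathcal{S}$-sets from Step~\ref{step:gadget_top} --- follow by direct inspection.
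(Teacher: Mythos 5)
Your proof is correct and follows essentially the same approach as the paper's: encode the clique via $P_{j^*}$, select the instance via the $Y,\hat{Y},Z,\hat{Z}$ vertices, and colour $Q_{i^*}$ so that the potential $\cold$-on-$\cola$ conflict is ruled out precisely by the clique edges. The only deviation is a minor one in how $Q_{i^*}$ is coloured: the paper lets $q^{i^*}_{e,f}$ (and likewise $r^{i^*}_{e,f}$) take colour $\colb$ exactly when its (at most one) $P_{j^*}$-neighbour exists and is coloured $\cola$, whereas you branch on the purely combinatorial condition $e_1 \stackrel{?}{=} \ell_{f_1}$ and always make exactly one of $s^{i^*}_{e,f}, t^{i^*}_{e,f}$ equal to $\cola$; both rules yield valid colourings, and your key observation --- that when both $e_1=\ell_{f_1}$ and $e_2=\ell_{f_2}$ the offending $r$--$p$ edge was never created, because the clique supplies the edge $x_{e_1}x_{e_2}$ --- is precisely the same clique-to-edge argument the paper uses to discharge the one nontrivial constraint.
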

\begin{claimproof}
Let such $i^*,j^* \in [t']$ be given, we create a \acoloring $\col \colon V(G) \to \{\cola,\colb,\colc,\cold\}$ for $G$. First of all, for all $j \neq j^*$ with $j\in [t']$, let $\col(y_j) := \cola$ and let $\col(\hat{y}_j) := \colc$. Let $\col(y_{j^*}) := \colc$ and let $\col(\hat{y}_{j^*}) := \cola$. Similarly, for $i \neq i^*$ we let $\col(z_i) := \cola$ and let $\col(\hat{z}_i) := \colc$. Furthermore define $\col(z_{i^*}) := \colc$ and $\col(\hat{z}_{i^*}) := \cola$. Hereby, not all vertices in $\hat{Y}$ have color $\colc$, and not all vertices in $\hat{Z}$ have color $\colc$, such that we satisfy the sets added to $\mathcal{S}$ in Step~\ref{step:gadget_yz} of the construction.

For all $p \in P_j$ for $j \neq j^*$, let $\col(p) := \colc$. Furthermore, for all $e \in [n]^2$, $f \in \binom{[k]}{2}$ and $i \neq i^*$ with $i \in [t']$, we define $\col(q^i_{e,f}) := \col(r^i_{e,f}) = \colb$, $\col(\hat{q}^i_{e,f}) := \col(\hat{r}^i_{e,f}) = \cold$, and $\col(s^i_{e,f}) := \col(t^i_{e,f}) = \colc$. 

It remains to color the vertices in $P_{j^*}$ and $Q_{i^*}$.  Let $K = \{x_{i_1},\ldots,x_{i_k}\}$ be a clique in $G_{i^*,j^*}$ of size $k$. For $m \in [k]$, $\ell \in [n]$ let $\col(p^{j^*}_{\ell,m}) := \cola$ if $i_m = \ell$. Otherwise, let 
$\col(p^{j^*}_{\ell,m}) := \colc$. In this way, for each $m \in [k]$, the set $\{y_{j^*}\} \cup \{p^{j^*}_{\ell,m} \mid \ell \in [n]\}$ contains a vertex that receives color $\cola$, as desired. We now extend this coloring to $Q_{i^*}$. Let $e = (e_1,e_2) \in [n]^2$ and let $f \in \binom{[k]}{2}$ such that $f = \{f_1,f_2\}$ for $f_1 < f_2$. Let $\col(q^{i^*}_{e,f}) := \colb$ if the unique neighbor of $q^{i^*}_{e,f}$ in $P_{j^*}$ has color $\cola$. Otherwise, let  $\col(q^{i^*}_{e,f}) := \cold$. We color  $r^{i^*}_{e,f}$ in the same way, thus  $\col(r^{i^*}_{e,f}) := \colb$ if its unique neighbor in~$P_{j^*}$ has color $\cola$, and  $\col(r^{i^*}_{e,f}) := \cold$ otherwise. Color  $\hat{q}^{i^*}_{e,f}$ with the only color in $\{\colb,\cold\} \setminus \{\col(q^{i^*}_{e,f})\}$ and similarly color $\hat{r}^{i^*}_{e,f}$ with the only color in $\{\colb,\cold\} \setminus \{\col(r^{i^*}_{e,f})\}$. Finally, let $\col(s^{i^*}_{e,f}) := \colc$ if $\col(\hat{q}^{i^*}_{e,f}) = \cold$ and let $\col(s^{i^*}_{e,f}) := \cola$ otherwise. Similarly, let $\col(t^{i^*}_{e,f}) := \colc$ if $\col(\hat{r}^{i^*}_{e,f}) = \cold$ and let $\col(t^{i^*}_{e,f}) := \cola$ otherwise.
This concludes the definition of \col. It remains to show that \col is a valid \coloring of $G$.
We split this into three parts.

First of all, we verify that each $S \in \mathcal{S}$ contains a vertex that does not get color $\colc$. For $\hat{Y}$ and $\hat{Z}$ this was verified before. Consider a set $\{y_j\} \cup \{p^j_{\ell,m} \mid \ell \in [n]\}$ added in Step~\ref{step:gadget_bottom}. Observe that if $j \neq j^*$ then $y_j$ has color $\cola$ and we are done. Otherwise, by definition, we have $\col(p^{j^*}_{i_m,m}) := \cola$ and thus indeed this set has a vertex of color $\cola$. Now consider a set $\{s^i_{e,f},t^i_{e,f},z_i\}$ added in Step~\ref{step:gadget_top}. If $i \neq i^*$, vertex $z_i$ has color $\cola$ and we are done. Otherwise if $i = i^*$, we claim that it cannot be the case that $\col(s^{i^*}_{e,f}) = \col(t^{i^*}_{e,f}) = \colc$. Suppose towards a contradiction that indeed both these vertices have color $\colc$. By the choice of our coloring, this implies that $\col(\hat{q}^{i^*}_{e,f}) = \col(\hat{r}^{i^*}_{e,f}) = \cold$ and thus $\col(q^{i^*}_{e,f}) = \col(r^{i^*}_{e,f}) = \colb$. Letting~$e = (e_1, e_2) \in [n]^2$ and~$f = \{f_1, f_2\}$ for~$f_1 < f_2$, that means that $q^{i^*}_{e,f}$ and $r^{i^*}_{e,f}$ have their unique neighbor in $P_{j^*}$  of color $\cola$, implying $\col(p^{j^*}_{e_1,f_1}) = \col(p^{j^*}_{e_2,f_2}) = \cola$. So these edges were constructed in Step~\ref{step:connect_top_bottom}, implying $x_{e_1}x_{e_2} \notin E(G_{i^*,j^*})$. Since $x_{e_1}\in K$ and $x_{e_2} \in K$, this contradicts that $K$ is a clique.

Secondly, verify that for all pairs in $\{u,v\} \in F$, $\col(u) \neq \col(v)$: we only add sets to $F$ in Steps~\ref{step:connect_vertices_top} and~\ref{step:gadget_yz}. We always ensure in the construction that if $\{u,v\} \in F$, the two vertices get different colors.

Thirdly, we verify the coloring of endpoints of edges in $G$. First of all, consider the edges added in Step~\ref{step:connect_vertices_top} and observe that we always color the endpoints properly in the description above: if $\hat{q}^i_{e,f}$ gets color $\cold$, we color $s^i_{e,f}$ with $\colc$ which is allowed; if $\hat{q}^i_{e,f}$ has color $\colb$, we use color $\cola$ in $s^i_{e,f}$ which is again fine. One may verify that the same holds for edges $\hat{r}^i_{e,f}t^i_{e,f}$. Now consider the edges between a vertex $u \in P$ and $v \in Q$. If $u \notin P_{i^*}$ it follows that $\col(u) = \colc$. Since by the lists, $\col(v) \in \{\colb,\cold\}$ this implies that this edge is properly colored. Similarly, if $v \notin Q_{j^*}$ we obtain $\col(v) = \colb$ and since $\col(u) \in \{\cola,\colc\}$ we are again done. If $u \in P_{j^*}$ and $v \in P_{i^*}$ one may observe that the edge $uv$ is properly colored by definition: $v$ has color $\cold$ only if it has no neighbors of color $\cola$ (and $\col(u) \in \{\cola,\colc\}$ thus implies $\col(u) = \colc$), and otherwise $v$ has color $\colb$ such that the edge is again properly colored by $\col(u) \in \{\cola,\colc\}$. 
\end{claimproof}

\begin{claim}\label{claim:cc:correctness-2}
 If $G$ has \acoloring \col, then there exist $i^*,j^* \in [t']$ such that $G_{i^*,j^*}$ has a clique of size $k$.
\end{claim}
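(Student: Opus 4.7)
The strategy is to reverse the encoding used in Claim~\ref{claim:cc:correctness-1}: from a valid annotated $P_4$-coloring $\col$, first identify ``active'' indices $i^*, j^* \in [t']$, then read off a candidate clique from the coloring of $P_{j^*}$, and finally use the coloring of $Q_{i^*}$ to verify that this set is indeed a clique in $G_{i^*, j^*}$. For the first step, $\hat{Y} \in \mathcal{S}$ forces some $\hat{y}_{j^*}$ to be colored outside $\{\colc\}$; as $L(\hat{y}_{j^*}) = \{\cola, \colc\}$, this gives $\col(\hat{y}_{j^*}) = \cola$, and then the pair $\{y_{j^*}, \hat{y}_{j^*}\} \in F$ forces $\col(y_{j^*}) = \colc$. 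The symmetric argument using $\hat{Z} \in \mathcal{S}$ and $\{z_{i^*}, \hat{z}_{i^*}\} \in F$ yields $i^* \in [t']$ with $\col(z_{i^*}) = \colc$.

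For each $m \in [k]$, the set $\{y_{j^*}\} \cup \{p^{j^*}_{\ell, m} \mid \ell \in [n]\}$ added in Step~\ref{step:gadget_bottom} lies in $\mathcal{S}$. Since $\col(y_{j^*}) = \colc$, some $i_m \in [n]$ must satisfy $\col(p^{j^*}_{i_m, m}) = \cola$; fix such a choice and set $K := \{x_{i_1}, \ldots, x_{i_k}\}$. To conclude it remains to show that for every $m_1 < m_2$ we have $i_{m_1} \neq i_{m_2}$ \emph{and} $x_{i_{m_1}} x_{i_{m_2}} \in E(G_{i^*, j^*})$, which simultaneously gives $|K| = k$ and that $K$ is a clique. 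Both are established by a single contradiction: suppose $x_{i_{m_1}} x_{i_{m_2}} \notin E(G_{i^*, j^*})$ (this case also covers $i_{m_1} = i_{m_2}$, as $G_{i^*, j^*}$ is simple and hence loopless). Put $e := (i_{m_1}, i_{m_2})$ and $f := \{m_1, m_2\}$. By Step~\ref{step:connect_top_bottom}, $G$ contains the edges $p^{j^*}_{i_{m_1}, m_1}\, q^{i^*}_{e, f}$ and $p^{j^*}_{i_{m_2}, m_2}\, r^{i^*}_{e, f}$. Since $\cola$'s unique $P_4$-neighbor is $\colb$, we get $\col(q^{i^*}_{e, f}) = \col(r^{i^*}_{e, f}) = \colb$; the $F$-pairs from Step~\ref{step:connect_vertices_top} then force $\col(\hat{q}^{i^*}_{e, f}) = \col(\hat{r}^{i^*}_{e, f}) = \cold$; and the edges to $s^{i^*}_{e, f}, t^{i^*}_{e, f}$ from Step~\ref{step:connect_vertices_top}, together with the fact that $\cold$'s only $P_4$-neighbor is $\colc$, force $\col(s^{i^*}_{e, f}) = \col(t^{i^*}_{e, f}) = \colc$. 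But Step~\ref{step:gadget_top} places $\{s^{i^*}_{e, f}, t^{i^*}_{e, f}, z_{i^*}\}$ in $\mathcal{S}$, and all three vertices now have color $\colc$, contradicting that $\col$ is annotated.

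The only technically delicate step is the last propagation chain: the three kinds of annotation (lists, $F$-pairs, and $\mathcal{S}$-sets) must cooperate to transmit the constraint ``at least one of $s^{i^*}_{e, f}, t^{i^*}_{e, f}, z_{i^*}$ avoids $\colc$'' all the way back through the $Q_{i^*}$-gadget to a structural property of the input graph $G_{i^*, j^*}$. Once this chain is verified, the clique conclusion follows immediately from the definition of the edges inserted in Step~\ref{step:connect_top_bottom}.
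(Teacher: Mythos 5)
Your proposal is correct and follows essentially the same chain of reasoning as the paper's proof: locate $i^*, j^*$ via $\hat{Y}, \hat{Z} \in \mathcal{S}$ and the $F$-pairs, extract $i_1, \ldots, i_k$ from the Step~\ref{step:gadget_bottom} sets, and then for each pair $m_1 < m_2$ propagate a hypothetical non-edge through $q \to \hat{q} \to s$ and $r \to \hat{r} \to t$ to contradict the Step~\ref{step:gadget_top} set $\{s^{i^*}_{e,f}, t^{i^*}_{e,f}, z_{i^*}\}$, with the $i_{m_1} = i_{m_2}$ case absorbed because the input graphs are loopless.
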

\begin{claimproof}
 Since $\hat{Y},\hat{Z} \in \mathcal{S}$, there exist $i^*,j^* \in [t']$ such that $\col(\hat{y}_{j^*}) \neq \colc$ and $\col(\hat{z}_{i^*}) \neq \colc$, implying by the lists that $\col(\hat{y}_{j^*})=\col(\hat{z}_{i^*})=\cola$. Since $\{\hat{y}_{j^*},y_{j^*}\} \in F$ and $\{\hat{z}_{i^*},z_{i^*}\} \in F$ (by Step~\ref{step:gadget_yz}) we obtain that $\col(y_{j^*})=\col(z_{i^*})=\colc$. Now since $\{y_{j^*}\} \cup \{p^{j^*}_{\ell,m} \mid \ell \in [n]\} \in \mathcal{S}$ for all $m\in[k]$, it follows that for all $m \in [k]$, there exists $i_m \in [n]$ such that $\col(p^{j^*}_{i_m,m}) = \cola$. Let $x_1,\ldots,x_n$ be the vertices of $G_{i^*,j^*}$, define $K:= \{x_{i_1},\ldots,x_{i_k}\}$. We show that $K$ is a size-$k$ clique in $G_{i^*,j^*}$ by showing that $x_{i_m}x_{i_{m'}}$ is an edge for all $m \neq m'$. Observe that this then also proves that all selected vertices are distinct as the input graphs have no self-loops.
 
 \sloppy
 Let $m,m' \in [k]$. Without loss of generality let $m < m'$. Suppose towards a contradiction that $x_{i_{m}}x_{i_{m'}} \notin E(G_{i^*,j^*})$. Then, in Step~\ref{step:connect_top_bottom}, we added the edges $p^{j^*}_{i_m,m} q^{i^*}_{(i_m,i_{m'}), \{m,m'\}}$ and $p^{j^*}_{i_{m'},m'} r^{i^*}_{(i_m,i_{m'}), \{m,m'\}}$. Note that since we choose $x_{i_m},x_{i_{m'}} \in K$, it must hold that $\col(p^{j^*}_{i_m,m}) = \col(p^{j^*}_{i_{m'},m'}) = \cola$. Since $\colb$ is the only neighbor of $\cola$ in the $P_4$, we get $\col(q^{i^*}_{(i_m,i_{m'}), \{m,m'\}}) = \col(r^{i^*}_{(i_m,i_{m'}), \{m,m'\}}) = \colb$. Since in Step~\ref{step:connect_vertices_top} we added $\{q^{i^*}_{(i_m,i_{m'}), \{m,m'\}},\allowbreak \hat{q}^{i^*}_{(i_m,i_{m'}), \{m,m'\}}\}$ and $\{r^{i^*}_{(i_m,i_{m'}), \{m,m'\}},\hat{r}^{i^*}_{(i_m,i_{m'}), \{m,m'\}}\}$  to $F$, we obtain $\col(\hat{q}^{i^*}_{(i_m,i_{m'}), \{m,m'\}}) = \col(\hat{r}^{i^*}_{(i_m,i_{m'}), \{m,m'\}}) = \cold$. Since $\hat{q}^{i^*}_{(i_m,i_{m'}) \{m,m'\}} s^{i^*}_{(i_m,i_{m'}), \{m,m'\}}$ and $\hat{r}^{i^*}_{(i_m,i_{m'}) \{m,m'\}}\allowbreak t^{i^*}_{(i_m,i_{m'}), \{m,m'\}}$ are edges in $G$ (also added in Step~\ref{step:connect_vertices_top}), we get that $\col(s^{i^*}_{(i_m,i_{m'}), \{m,m'\}}) = \col(t^{i^*}_{(i_m,i_{m'}), \{m,m'\}}) = \colc$. However, note that $\{r^{i^*}_{(i_m,i_{m'}), \{m,m'\}},r^{i^*}_{(i_m,i_{m'}), \{m,m'\}},z_{i^*}\} \in \mathcal{S}$, by Step~\ref{step:gadget_top}. These three vertices all have color $\colc$, contradicting that \col is a valid \coloring of $G$.
 \fussy 
\end{claimproof}
Using the claims above and the bound on the size of $V(G)$ computed earlier, we conclude that we have given a degree-$2$ cross-composition to \coloring, such that the lower bound follows from \cref{thm:cross_composition_LB}.
\end{proof}

\section{Gadgets in hard graphs for \listcoloring{$H$}} \label{sec:gadgets}

Now we are going back to investigating the \listHcoloring problem, for fixed graphs $H$. To transfer the lower bound of \cref{lem:annotated-coloring:LB} to \listHcoloring for all graphs~$H$ which are not bi-arc graphs, we use a two-step process. First we use an idea of Feder, Hell, and Huang~\cite{FederHH03} which allows us to efficiently reduce so-called \emph{consistent} instances of the \listcoloring{$H^*$} problem, where~$H^*$ is a (simple) bipartite graph naturally associated to~$H$, to equivalent instances of \listHcoloring on the same vertex set. This implies that \listHcoloring is at least as hard to sparsify as consistent instances \listcoloring{$H^*$}, where~$H^*$ is a bipartite graph. Then we will develop a number of gadgets to reduce \Anlistcoloring to \listcoloring{$H^*$} on consistent instances, in a way that preserves sparsification lower bounds. Together, this chain of reductions will prove \cref{theorem:lowerbound}.

%Recall that the instance is a pair $(G,L)$, where $G$ is a graph and $L$ is a function which maps every vertex of $G$ to a subset of vertices of $H$.

\subsection{Bi-arc graphs, associated bipartite graphs, and consistent instances}\label{sec:biarc}

Recall that the complexity dichotomy for \listHcoloring was proven in three steps:
\begin{enumerate}
\item for reflexive $H$, the polynomial cases appear to be interval graphs~\cite{FederH98},
\item for irreflexive $H$, the polynomial cases appear to be bipartite co-circular-arc graphs~\cite{Feder99},
\item for general graphs, the polynomial cases are the so-called bi-arc graphs~\cite{FederHH03}.
\end{enumerate}

The main idea of showing the final step of the dichotomy was a reduction to the bipartite case. 
For a graph $H$, by $H^*$ we denote the \emph{associated bipartite graph}, defined as follows.
The vertex set of $H^*$ is the union of two independent sets: $V_1 :=\{x' \mid x \in V(H)\}$ and $V_2 :=\{x'' \mid x \in V(H)\}$. 
The vertices $x' \in V_1$ and $y'' \in V_2$ are adjacent if and only if $xy \in E$. Note that the edges of type $x'x''$ in $H^*$ correspond to loops in $H$. 

As we mentioned in the introduction, bi-arc graphs are defined in terms of certain geometric representation, but for us much more convenient will be to use the following characterization in terms of the associated bipartite graph.

\begin{theorem}[Feder, Hell, and Huang~\cite{FederHH03}]
Let $H$ be an undirected graph, possibly with loops. The following are equivalent.
%\begin{multicols}{2}
\begin{enumerate}
\item $H$ is a bi-arc graph.
\item $H^*$ is the complement of a circular-arc graph.
\end{enumerate}
%\end{multicols}
\end{theorem}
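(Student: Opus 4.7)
The plan is to unfold the geometric definition of a bi-arc graph and set up a direct translation between bi-arc representations of $H$ and circular-arc representations of $\overline{H^*}$. Recall that a bi-arc representation places a circle $C$ with two distinguished points $p$ and $q$, and assigns to every $v \in V(H)$ a pair of arcs $(N_v, S_v)$ on $C$ with $p \in N_v \setminus S_v$ and $q \in S_v \setminus N_v$, such that $uv \in E(H)$ if and only if both $N_u \cap S_v = \emptyset$ and $N_v \cap S_u = \emptyset$. The key observation driving the correspondence is that the doubling $v \mapsto (v',v'')$ used to define $H^*$ matches exactly the doubling $v \mapsto (N_v, S_v)$ in the bi-arc definition, so every vertex of $H^*$ will be represented by a single arc.

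For the direction $(1) \Rightarrow (2)$, I would assign the arc $N_v$ to the vertex $v' \in V_1$ and the arc $S_v$ to the vertex $v'' \in V_2$, and then verify that arcs intersect precisely when the corresponding vertices are non-adjacent in $H^*$. Two $N$-arcs always share the point $p$, and two $S$-arcs always share the point $q$, which correctly records that $V_1$ and $V_2$ are independent in $H^*$. The only non-trivial case is whether $N_u \cap S_v = \emptyset$ iff $u'v'' \in E(H^*)$, that is, iff $uv \in E(H)$. Since the bi-arc condition is phrased symmetrically in $u$ and $v$, the technical step is to show that, by trimming arcs near $p$ and $q$ without altering the represented graph, one may assume that $N_u \cap S_v = \emptyset$ holds if and only if $N_v \cap S_u = \emptyset$, so that the symmetric bi-arc condition can be detected by a single arc intersection.

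For the converse $(2) \Rightarrow (1)$, I would start from a family of circular arcs representing $\overline{H^*}$. Because $V_1$ is independent in $H^*$, the arcs indexed by $V_1$ pairwise intersect; a Helly-type argument for pairwise-intersecting arcs on a circle, combined with a harmless perturbation, lets me choose a point $p$ contained in every arc of $V_1$, and symmetrically a point $q$ contained in every arc of $V_2$. Defining $N_v$ to be the arc assigned to $v'$ and $S_v$ the arc assigned to $v''$, then shrinking each $N_v$ slightly so that it avoids $q$ and each $S_v$ so that it avoids $p$, produces a candidate bi-arc representation of $H$. The adjacency condition transfers from the intersection condition of the circular-arc representation provided the shrinking preserves all relevant intersections and non-intersections.

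The main obstacle in both directions is this mismatch between the symmetric, two-arc formulation of the bi-arc condition and the single-intersection formulation for circular-arc graphs: in the bi-arc world two disjointnesses are required to witness an edge, while in the circular-arc world a single intersection already certifies a non-edge. Reconciling them requires arc-trimming that respects all adjacencies simultaneously and also correctly handles loops of $H$, which correspond to edges $v'v''$ between the two copies of a vertex inside $H^*$. Degenerate configurations, such as arcs meeting exactly at the distinguished points or arcs collapsing to a point after trimming, should be ruled out by first canonicalising the input representation; I expect this to be the most delicate part of the argument.
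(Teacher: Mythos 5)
This theorem is quoted in the paper as an external result of Feder, Hell, and Huang~\cite{FederHH03}; the paper does not give a proof, so there is no internal argument to compare yours against. Your proposal correctly identifies the right correspondence --- matching the two arcs $(N_v,S_v)$ of a bi-arc to the two copies $v',v''$ of a vertex in $H^*$, and matching the two distinguished points $p,q$ to the two sides of the bipartition --- and this is indeed the conceptual core of the equivalence. However, the two steps you flag as ``delicate'' are in fact where the real proof lives, and as written the proposal does not close them.

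For $(1)\Rightarrow(2)$: the bi-arc condition for a non-edge $uv$ only guarantees that $N_u\cap S_v\neq\emptyset$ \emph{or} $N_v\cap S_u\neq\emptyset$, whereas the circular-arc condition requires the specific intersection $N_u\cap S_v\neq\emptyset$. You propose to fix this by ``trimming arcs near $p$ and $q$,'' but trimming can only destroy intersections, never create them, so it cannot upgrade an ``or'' to an ``and.'' To symmetrize the representation one either has to enlarge arcs (and then argue this does not spoil any adjacency), or re-derive the representation from a more structured normal form; neither is a cosmetic perturbation. For $(2)\Rightarrow(1)$: the Helly property you invoke fails for arcs on a circle --- three pairwise-intersecting arcs need not share a point (e.g.\ three arcs of length slightly less than $2/3$ of the circle, evenly rotated). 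You would need to rule out such ``covering'' configurations among the arcs of $V_1$ (and of $V_2$) before a common point $p$ (resp.\ $q$) can be chosen, and this again requires a nontrivial normalization of the given circular-arc representation, not merely a perturbation. So the high-level plan is sound and matches what Feder, Hell, and Huang do, but the two normalization steps you defer are precisely the substantive content of their proof and cannot be dismissed as routine.
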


Thus the graphs $H$ for which \listHcoloring is NP-hard, are precisely those for which \listcoloring{$H^*$} is NP-hard: when $H^*$ is the complement of a circular-arc graph.

Now let us explain how showing the hardness of \listHcoloring can be reduced to showing the hardness of \listcoloring{$H^*$}. Here we need the notion of a \emph{consistent} instance of the problem.

\begin{definition}
Let~$F$ be a connected bipartite graph with bipartition classes~$X$ and~$Y$. An instance $(G,L)$ of \listcoloring{$F$} is \emph{consistent}, if $G$ is bipartite and has a bipartition into classes $A,B \subseteq V(G)$, such that~$L(a) \subseteq X$ for all~$a \in A$, and~$L(b) \subseteq Y$ for all~$b \in B$.
\end{definition}

The following Proposition follows from the idea of Feder, Hell, and Huang~\cite{FederHH03}, and provides a reduction from \listcoloring{$H^*$} to \listHcoloring that preserves the vertex set of~$G$. Its exact statement comes from~\cite{Okrasa20,DBLP:journals/corr/abs-2006-11155}.

\begin{proposition}[Okrasa et al.~\cite{Okrasa20,DBLP:journals/corr/abs-2006-11155}]
\label{prop:bipartite-associted} 
Let $H$ be a graph and let $(G,L)$ be a consistent instance of \listcoloring{$H^*$}.
Define $L' \colon V(G) \to 2^{V(H)}$ as $L'(x) := \{u \mid \{u',u''\} \cap L(x) \neq \emptyset\}$.
Then $(G,L) \to H^*$ if and only if $(G,L') \to H$.
\end{proposition}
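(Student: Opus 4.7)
The proposition asserts that for a consistent instance of \listcoloring{$H^*$}, the list homomorphism question is equivalent to one for \listHcoloring on the same vertex set (via the lists $L'$). Since $G$ is bipartite with bipartition $A \cup B$ consistent with the bipartition $V_1 \cup V_2$ of $H^*$, the plan is to use the bipartition to set up a direct bijective correspondence between maps $V(G) \to V(H^*)$ that respect $L$ and maps $V(G) \to V(H)$ that respect $L'$, then verify that the homomorphism property transfers in both directions. The argument is essentially an unpacking of the definitions, and I do not anticipate any real obstacle — the only thing that needs care is keeping track of which ``side'' of $H^*$ the image lies on.

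\textbf{Forward direction.} Suppose $f \colon (G,L) \to H^*$ is a list homomorphism. Define $g \colon V(G) \to V(H)$ by stripping the prime/double-prime marker: if $f(v) = u'$ then $g(v) := u$, and if $f(v) = u''$ then $g(v) := u$. Because $(G,L)$ is consistent, we have $f(a) \in V_1$ for $a \in A$ and $f(b) \in V_2$ for $b \in B$, so this is well-defined. For $g(v) \in L'(v)$, note that $f(v) \in L(v) \cap \{g(v)', g(v)''\}$, so by definition $g(v) \in L'(v)$. For an edge $ab \in E(G)$ with $a \in A, b \in B$, $f(a)f(b) = g(a)'g(b)'' \in E(H^*)$ directly gives $g(a)g(b) \in E(H)$ by the definition of $H^*$.

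\textbf{Backward direction.} Conversely, suppose $g \colon (G, L') \to H$ is a list homomorphism. Define $f \colon V(G) \to V(H^*)$ using the bipartition: $f(a) := g(a)'$ for $a \in A$ and $f(b) := g(b)''$ for $b \in B$. To see $f(v) \in L(v)$: for $v \in A$ we have $L(v) \subseteq V_1$ by consistency, and $g(v) \in L'(v)$ means $\{g(v)', g(v)''\} \cap L(v) \neq \emptyset$; since $L(v) \subseteq V_1$, the only option is $g(v)' \in L(v)$, so $f(v) \in L(v)$. The case $v \in B$ is symmetric using $L(v) \subseteq V_2$. For an edge $ab \in E(G)$ with $a \in A, b \in B$, we have $g(a)g(b) \in E(H)$, which by definition of $H^*$ yields $g(a)'g(b)'' = f(a)f(b) \in E(H^*)$.

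\textbf{Summary.} Combining the two directions establishes the equivalence. The consistency hypothesis plays a crucial role only in the list-membership verifications: it guarantees that the natural ``untagging'' map and the natural ``retagging'' map send $L$-valid assignments to $L'$-valid assignments and vice versa. Edge preservation is immediate in both directions from the defining property of $H^*$, which is that $xy \in E(H)$ if and only if $x'y'' \in E(H^*)$.
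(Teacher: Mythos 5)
Your proposal is correct; the argument is the natural unpacking of the definitions of $H^*$ and consistency, with the untagging map $u' \mapsto u,\ u'' \mapsto u$ and the bipartition-driven retagging map $a \mapsto g(a)'$, $b \mapsto g(b)''$ being mutually inverse translations, and consistency being exactly what makes list-membership transfer cleanly (and pins down on which side of $H^*$ each image lies, so that edge-preservation reduces to the defining biconditional $xy \in E(H) \Leftrightarrow x'y'' \in E(H^*)$). Note that the paper states this proposition as a cited result from Okrasa et al.\ and does not supply its own proof, so there is no in-paper argument to compare against; yours is the expected one.
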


\subsection{Hard bipartite graphs $H$}

%Before we proceed to the construction of an extended gadget~$P_4$, let us introduce some definitions.
%A \emph{walk} $\cP$ is a sequence $p_1,\ldots, p_\ell$ of vertices of $H$, such that $p_ip_{i+1} \in E(H)$, for every $i \in [\ell-1]$. We say that $\cP=p_1,\ldots,p_\ell$ is a $p_1$-$p_\ell$-walk and call $\ell-1$ the \emph{length} of $\cP$. For walks $\cP = p_1,\ldots ,p_\ell$ and $\cQ=q_1,\ldots ,q_m$ such that $p_\ell = q_1$, we define $\cP \circ \cQ:= p_1,\ldots, p_\ell,q_2, \ldots, q_m$. We say that two walks $\cP=p_1,\ldots, p_\ell$ and $\cQ=q_1,\ldots, q_m$ \emph{avoid each other} if $\ell=m$, $p_1 \neq q_1$, and $p_iq_{i+1},q_ip_{i+1} \not\in E(H)$ for every $i \in [\ell-1]$. For two sets $A,B$ of vertices of a graph, we say that they are \emph{anticomplete}, if there is no edge with one endvertex in $A$ and another one~in~$B$.

The following notion was introduced by Feder, Hell, and Huang~\cite{Feder99}.

\begin{definition}\label{def:edge-ast}
Let $k\geq 1$ and let $H$ be a bipartite graph with bipartition classes $X,Y$. Let $U=\{u_0,\ldots,u_{2k}\} \subseteq X$ and $V=\{v_0,\ldots,v_{2k}\} \subseteq Y$ be ordered sets of vertices such that $\{u_0v_0, u_1v_1,\ldots,u_{2k}v_{2k}\}$ is a set of edges of $H$. 
%\bmp[inline]{Do we mean distinct edges (no repeats) or disjoint (the edges form a matching)? In the original paper I don't see any requirement about the edges forming a matching. Could we add (i.e.~ a matching) if we really mean that?}\prz[inline]{Indeed, this is not written explicitly, but I believe it's a matter of sloppy writing -- when you see how they obtain the characterization, the edges are pairwise disjoint. Also, if they are non-disjoint, we always have a smaller asteroid. But to avoid confusion I suggest to remove "disjoint" -- this is not important for us, as we work with the minimal asteroid.}\bmp[inline]{In the proof of Lemma 16 there is some text that suggests we assume it to be a matching. If that is resolved, these comments can go away.} 
We say that $(U,V)$ is a \emph{special edge asteroid} (or, in short, an \emph{asteroid}) of order $2k+1$, if for every $i \in \{0,\ldots,2k\}$ there exists a $u_i$-$u_{i+1}$-path $P_{i,i+1}$ in $H$ (indices are computed modulo $2k+1$), such that
\begin{enumerate}[(a)]
\item there are no edges between $\{u_i,v_i\}$ and $\{v_{i+k},v_{i+k+1}\}\cup V(P_{i+k,i+k+1})$ and
\item there are no edges between $\{u_0,v_0\}$ and $\{v_1,\ldots,v_{2k}\} \cup \bigcup_{i=1}^{2k-1} V(P_{i,i+1})$. \label{it:special}
\end{enumerate}
\end{definition}

%If a bipartite graph $H$ contains an asteroid $(\{u_0,u_1,\ldots,u_{2k}\},\{v_0,v_1,\ldots,v_{2k}\})$, we call set $\{u_0,u_1,u_{k+1}\}$ a \emph{special triple} of this asteroid. Observe that the neighborhoods of vertices of every special triple are incomparable sets, as edges $u_0v_0$, $u_1v_1$, and $u_{k+1}v_{k+1}$ induce a matching. 
%
%Let $(\{u_0,u_1,\ldots,u_{2k}\},\{v_0,v_1,\ldots,v_{2k}\})$ be an asteroid, and let $P_{i,i+1}$ for $i \in \{0,1,\ldots,2k\}$ denote the paths of this asteroid, satisfying Definition \ref{def:edge-ast}.
%Note that $(\{u_{0},u_{2k},\ldots,u_{1}\},\{v_0,v_{2k},\ldots,v_{1}\})$ is also an asteroid: we can use the same paths as in the first one, but in the reverse direction.
%We will refer to this second asteroid as a \emph{dual asteroid}. Note that this implies that $\{u_0,u_{2k},u_{k}\}$ is also a special triple.

Feder, Hell, and Huang showed the following characterization of hard bipartite cases of \listHcoloring, i.e., bipartite graphs $H$, whose complement is not a circular-arc graph.

\begin{theorem}[Feder et al.~\cite{Feder99}]\label{thm:NP-hard-circular-arc}
A bipartite graph $H$ is not the complement of a circular-arc graph if and only if $H$ contains an induced cycle with at least 6 vertices or an asteroid.
\end{theorem}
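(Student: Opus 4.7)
My plan is to prove the characterization by handling each direction separately.

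For the direction that these obstructions are genuine, I would verify that the presence of either an induced $C_{2k}$ ($k \geq 3$) or a special edge asteroid in $H$ prevents $\overline{H}$ from being a circular-arc graph. Since circular-arc graphs are closed under induced subgraphs, the first case reduces to showing that $\overline{C_{2k}}$ fails to admit a circular-arc representation for $k \geq 3$; this is an elementary geometric argument where any attempted placement of $2k$ arcs in which each arc meets all but its two cyclic neighbours produces over-determined overlap constraints that cannot simultaneously be satisfied on a circle. For the asteroid case, I would assume towards contradiction a circular-arc representation of $\overline{H}$ and follow the cyclic sequence of paths $P_{i,i+1}$; the non-adjacency conditions (a) and (b) of Definition~\ref{def:edge-ast} force each path to ``turn'' in a prescribed direction on the circle, and a winding-number argument then produces the contradiction (in spirit the $2k{+}1$-edge asteroid plays the role of an asteroidal obstruction to a consistent cyclic ordering, analogous to how an asteroidal triple prevents an interval representation).

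For the reverse direction, I would assume that $\overline{H}$ is not a circular-arc graph and extract one of the two obstructions in $H$. The natural tool is a forbidden-induced-subgraph characterization of circular-arc graphs in the style of Tucker. Because $H$ is bipartite, the vertices of $\overline{H}$ partition into two cliques (the colour classes of $H$), which sharply restricts how Tucker's forbidden substructures can embed into $\overline{H}$. Going through the cases, the fixed-size obstructions translate under complementation, respecting the bipartition, into induced long even cycles of $H$, while the parameterized families translate into precisely the configuration defining a special edge asteroid: a cyclic sequence of $2k+1$ matched edges $u_iv_i$ connected by paths $P_{i,i+1}$, with the non-adjacencies (a) and (b) arising from the adjacencies in $\overline{H}$ ruled out by the Tucker obstruction.

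The main obstacle I expect is this second direction's case analysis: in the asteroid case one must carefully choose the distinguished edge $u_0v_0$ so that the non-symmetric condition~(b) of Definition~\ref{def:edge-ast} holds, which amounts to identifying within the Tucker obstruction the edge that serves as the ``anchor'' avoided by most of the remaining structure. An alternative route, which I would explore if the Tucker-based approach becomes unwieldy, is to take a minimum counterexample in $H$ and attempt to construct a circular-arc representation of $\overline{H}$ greedily along, say, a BFS-ordering of $H$, with the plan that any obstruction encountered during the construction directly exhibits one of the two claimed substructures, contradicting minimality.
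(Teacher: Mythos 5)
The paper does not prove this statement; it is imported verbatim from Feder, Hell, and Huang~\cite{Feder99}, so there is no in-paper proof to compare against. Evaluating your attempt on its own merits, the forward direction is a plausible sketch (circular-arc graphs are hereditary, so it suffices to rule out $\overline{C_{2k}}$ for $k \ge 3$ and to show an asteroid forces an obstruction in $\overline{H}$), although both steps would require real work to pin down, e.g.\ the argument that $\overline{C_{2k}}$ has no circular-arc representation for \emph{every} $k \ge 3$ is not a one-liner, and the ``winding-number'' argument for asteroids is at the level of a hope rather than a proof.

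The reverse direction has a genuine gap. You propose to invoke ``a forbidden-induced-subgraph characterization of circular-arc graphs in the style of Tucker'' and then translate Tucker's obstructions through the complement. No such characterization exists: Tucker's forbidden-subgraph theorems apply to \emph{proper} circular-arc graphs and \emph{unit} circular-arc graphs, not to the class of all circular-arc graphs, and a complete forbidden-induced-subgraph characterization of circular-arc graphs remains a well-known open problem. Worse, even if one restricts attention to graphs whose complement is bipartite (the only case relevant here), a forbidden-substructure characterization of that class is precisely what the theorem of Feder, Hell, and Huang asserts; using such a characterization as the ``natural tool'' makes the argument circular. The actual proof in~\cite{Feder99} proceeds by a direct structural analysis of bipartite graphs with co-circular-arc complements (building on earlier work of Trotter and Moore~\cite{TROTTER1976361}) rather than by consulting a pre-existing obstruction list. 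Your alternative route (minimal counterexample plus a greedy construction of a circular-arc representation) is closer in spirit to a workable strategy, but as presented it is only a plan: the claim that ``any obstruction encountered during the construction directly exhibits one of the two claimed substructures'' is exactly the hard content of the theorem and would need to be carried out in full.
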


While induced cycles of length at least 6 and asteroids suffice to prove NP-completeness of \listHcoloring, to prove sparsification lower bounds via \Anlistcoloring we need a more local structure. We therefore introduce the following notion.

\begin{definition}
An \emph{extended $P_4$ gadget} in an undirected simple graph~$H$ is a tuple~$(a,b,c,d,e)$ of distinct vertices in~$H$, such that all of the following hold:
\begin{enumerate}
	\item $H[\{a,b,c,d\}]$ is isomorphic to~$P_4$,
	\item the sets~$N_H(a), N_H(c), N_H(e)$ are pairwise incomparable, and
	\item the sets~$N_H(b), N_H(d)$ are pairwise incomparable.
\end{enumerate}
\end{definition}

Intuitively, if~$H$ contains an extended $P_4$ gadget, then the~$P_4$ on~$(a,b,c,d)$ allows a \listHcoloring instance to express a homomorphism problem to~$P_4$, while the presence of vertex~$e$ and the incomparability of the neighborhoods allows gadgets to be constructed to enforce the semantics of the set~$F$ and the sequence~$\mathcal{S}$ in the definition of \Anlistcoloring, thereby allowing a reduction from that problem to the \listHcoloring. The gadgets needed to simulate the pairwise constraints from~$F$ are given by the next lemma.

\begin{lemma}
\label{lemma:extendedP4exists}
Let $H$ be a bipartite graph which contains an induced cycle of at least 6 vertices or an asteroid. Then there exists an extended $P_4$ gadget~$(a,b,c,d,e)$ in~$H$. Moreover, for every $Q \in \{\{a,c,e\}, \{b,d\}\}$ there is a consistent \listHcoloring instance~$(G_Q, L)$ containing two distinguished vertices~$\gamma_1, \gamma_2$ such that a mapping~$f \colon \{\gamma_1, \gamma_2\} \to Q$ can be extended to a proper list $H$-coloring of~$(G_Q,L)$ if and only if~$f(\gamma_1) \neq f(\gamma_2)$.
\end{lemma}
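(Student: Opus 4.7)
The plan is to prove the two claims in sequence: first, existence of an extended $P_4$ gadget $(a,b,c,d,e)$ in $H$; second, construction of consistent \listHcoloring instances $(G_Q, L)$ enforcing $f(\gamma_1) \neq f(\gamma_2)$ for each $Q \in \{\{a,c,e\},\{b,d\}\}$.

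For the existence claim I case-split via Theorem~\ref{thm:NP-hard-circular-arc}. If $H$ contains an induced cycle $C = v_0 v_1 \cdots v_{\ell-1}$ with $\ell \geq 6$, I take $(a,b,c,d,e) := (v_1,v_2,v_3,v_4,v_5)$. Induced-ness of $C$ gives $H[\{v_1,v_2,v_3,v_4\}] \cong P_4$. Incomparability of $N(v_1), N(v_3), N(v_5)$ follows pairwise from the witness pairs $\{v_0,v_4\}$, $\{v_2,v_4\}$, and $\{v_2,v_6\}$ (indices modulo $\ell$), using the fact that in an induced cycle of length at least $6$ any two vertices at cyclic distance between $2$ and $\ell - 2$ are non-adjacent; the boundary case $\ell = 6$ with $v_6 = v_0$ works verbatim, since $v_0$ is then adjacent to $v_5$ but still not to $v_3$. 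The incomparability of $N(v_2), N(v_4)$ is witnessed analogously by $\{v_1, v_5\}$. In the asteroid case, I inspect the asteroid $(U, V)$ from Definition~\ref{def:edge-ast} and locate an induced $P_4$ using two consecutive spokes $u_i v_i, u_{i+1} v_{i+1}$ together with a suitable short portion of the connecting path $P_{i, i+1}$, then draw the fifth vertex $e$ from a distant spoke $u_j v_j$ chosen so that the non-adjacencies of conditions~(a) and~(b) of the asteroid definition give the remaining incomparability witnesses.

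For the inequality gadgets I leverage the neighborhood incomparabilities guaranteed by the gadget. In the $Q = \{b,d\}$ case I pick witnesses $p \in N_H(b) \setminus N_H(d)$ and $q \in N_H(d) \setminus N_H(b)$ (both lying in the bipartition class opposite $b, d$, so that the resulting instance stays consistent) and attach a pendant vertex $w_i$ of list $\{p,q\}$ to each $\gamma_i$; this forces $w_i = p$ exactly when $\gamma_i \mapsto b$ and $w_i = q$ exactly when $\gamma_i \mapsto d$, so $w_i$ is a \emph{tracker} for $\gamma_i$. The disequality $w_1 \neq w_2$ — equivalently $\gamma_1 \neq \gamma_2$ — is then enforced by a further sub-gadget bridging $w_1$ and $w_2$ via an even-length wiring routed through the $P_4$ and the extra vertex $e$, exploiting the \emph{secondary} pairwise incomparabilities of $N(a), N(c), N(e)$ to rule out the two same-image cases $w_1 = w_2 = p$ and $w_1 = w_2 = q$ individually. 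For $Q = \{a,c,e\}$ I combine three analogous sub-gadgets, one for each of the three pairs of distinct targets in $Q$, each sub-gadget exploiting the incomparability of the two relevant neighborhoods to block the corresponding equal-image case; the three sub-gadgets share the distinguished vertices $\gamma_1, \gamma_2$.

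The main obstacle is the asteroid sub-case of the first claim: the non-adjacencies guaranteed by the asteroid definition are distributed across several connecting paths, so locating a single 5-tuple that simultaneously realises all five required pairwise incomparabilities demands a careful case analysis over the interplay of the chosen spokes and paths, with the correct choice of $e$ depending on which incomparabilities are not already certified by the local $P_4$. A secondary difficulty is preserving \emph{consistency} of $(G_Q, L)$: every auxiliary vertex must respect the bipartition of $G_Q$ and its list must be confined to the corresponding side of $H$, so disequality must be propagated via even-length paths alternating bipartition classes, and the trackers and bridge must each be placed on the correct side, making the combinatorial wiring delicate even though each individual ``block an equal assignment'' step is conceptually a standard list-CSP construction.
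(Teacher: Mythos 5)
Your cycle-case construction of the extended $P_4$ gadget is correct: with $(a,b,c,d,e) = (v_1,\ldots,v_5)$ on an induced cycle of length at least $6$, the witness vertices you list do establish the required pairwise incomparabilities, including the $\ell = 6$ wrap-around. Beyond that point, however, the proposal has two genuine gaps.

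First, the asteroid case of the existence claim is asserted rather than proved. The paper's Lemma~\ref{lemma:ineq:five} is a long case analysis over a \emph{minimal} subgraph $F$ containing an asteroid, and it is not merely locating a $P_4$ plus a fifth vertex: the ``short portion of the connecting path'' you gesture at does not exist in general (the paths $\pstar_{i,i+1}$ can have zero or one internal vertex, which triggers several further subcases), and, crucially, the lemma must simultaneously build walks $\cA,\cC,\cE$ and $\cB,\cD$ from the gadget vertices to a \emph{special triple} $T=\{u_0,u_1,u_{k+1}\}$ of the asteroid that pairwise ``avoid each other.'' Those walks are the real content of the lemma and you make no attempt to produce them; you acknowledge this as the ``main obstacle,'' but without it nothing downstream goes through.

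Second, and more fundamentally, the inequality gadget does not exist as you describe it. You attach trackers $w_i$ with list $\{p,q\}$ and then claim $w_1 \neq w_2$ can be ``enforced by a further sub-gadget bridging $w_1$ and $w_2$ via an even-length wiring routed through the $P_4$ and the extra vertex $e$, exploiting the secondary pairwise incomparabilities of $N(a),N(c),N(e)$.'' Pairwise incomparability of open neighborhoods is a purely local property and cannot, on its own, yield a list-$H$-coloring gadget that forbids two vertices from receiving the same image: if it could, \listHcoloring would be hard on many bi-arc graphs as well, since incomparable neighborhoods occur there too. Likewise, for $Q=\{a,c,e\}$ you propose three sub-gadgets, each ``blocking'' one equal assignment, but you give no construction of even a single such blocker, and there is no reason it should exist in general. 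The paper's actual gadget works very differently: it builds a path gadget $C(\cA,\cC,\cE)$ whose lists are formed from the $i$-th vertices of the avoiding walks (so that exactly the three ``parallel'' homomorphisms along $\cA,\cC,\cE$ survive, translating a choice in $\{a,c,e\}$ into a choice in $T$), and glues two such path gadgets onto the endpoints of the known Feder--Hell--Huang unequal gadget, which operates on the special triple $T$ and is where the hardness actually lives. The incomparabilities in the definition of the extended $P_4$ gadget serve only to certify that the avoiding walks can be constructed. Without the special triple, the avoiding walks, and the Feder--Hell--Huang unequal gadget, your proposal has no construction to compile, and I do not see how it can be repaired along the lines you sketch.
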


We remark that it is actually sufficient to show that every bipartite graph $H$ which contains an induced cycle of at least 6 vertices or an asteroid, contains also an extended $P_4$ gadget. In this situation, the existence of $(G_{\{a,c,e\}},L)$ and $(G_{\{b,d\}},L)$ follows from a result in~\cite{Okrasa20,DBLP:journals/corr/abs-2006-11155}. However, for the sake of completeness, we include the whole proof.

Before we proceed to the construction of an extended $P_4$ gadget, let us introduce some definitions.
A \emph{walk} $\cP$ is a sequence $p_1,\ldots, p_\ell$ of vertices of $H$, such that $p_ip_{i+1} \in E(H)$, for every $i \in [\ell-1]$.
We say that $\cP=p_1,\ldots,p_\ell$ is a $p_1$-$p_\ell$-walk and call $\ell-1$ the \emph{length} of $\cP$.
For walks $\cP = p_1,\ldots ,p_\ell$ and $\cQ=q_1,\ldots ,q_m$ such that $p_\ell = q_1$, we define $\cP \circ \cQ:= p_1,\ldots, p_\ell,q_2, \ldots, q_m$.
We say that two walks $\cP=p_1,\ldots, p_\ell$ and $\cQ=q_1,\ldots, q_m$ \emph{avoid each other} if $\ell=m$, $p_1 \neq q_1$, and $p_iq_{i+1},q_ip_{i+1} \not\in E(H)$ for every $i \in [\ell-1]$.
For two sets $A,B$ of vertices of a graph, we say that they are \emph{anticomplete}, if there is no edge with one endvertex in $A$ and another one~in~$B$.

We call the set of three vertices $T$ of a bipartite graph $H$ a \emph{special triple} if there exists an asteroid $(\{u_0,u_1,\ldots,u_{2k}\},\{v_0,v_1,\ldots,v_{2k}\})$ (we use the notation introduced in \cref{def:edge-ast}), such that $T=\{u_0,u_1,u_{k+1}\}$. Observe that the neighborhoods of vertices of every special triple are pairwise incomparable, as edges $u_0v_0$, $u_1v_1$, and $u_{k+1}v_{k+1}$ induce a matching. 

Let $(\{u_0,u_1,\ldots,u_{2k}\},\{v_0,v_1,\ldots,v_{2k}\})$ be an asteroid, and let $P_{i,i+1}$ for $i \in \{0,1,\ldots,2k\}$ denote the paths of this asteroid, satisfying \cref{def:edge-ast}.
Note that $(\{u_{0},u_{2k},\ldots,u_{1}\},\allowbreak \{v_0,v_{2k},\ldots,v_{1}\})$ is also an asteroid: we can use the same paths as in the first one, but in the reverse direction.
We will refer to this second asteroid as a \emph{reversed asteroid}.

\begin{proposition} \label{prop:special-triples}
Let $(\{u_0,u_1,\ldots,u_{2k}\},\{v_0,v_1,\ldots,v_{2k}\})$ be an asteroid in a bipartite graph $H$. Then each of the sets $\{u_0,u_1,u_{k+1}\},\{u_0,u_{2k},u_{k}\},\{v_0,v_1,v_{k+1}\}$ and $\{v_0,v_{2k},v_{k}\}$ is a special triple of $H$.
\end{proposition}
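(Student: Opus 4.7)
The claim for $\{u_0, u_1, u_{k+1}\}$ is immediate from the definition of special triple applied to the given asteroid. For $\{u_0, u_{2k}, u_k\}$ I would invoke the reversed asteroid introduced just before the proposition: in its ordering of $u$-vertices $u_0, u_{2k}, u_{2k-1}, \ldots, u_1$, the vertex at position~$1$ is $u_{2k}$ and the vertex at position~$k{+}1$ is $u_{2k+1-(k+1)} = u_k$, so the triple is exhibited as $\{u'_0, u'_1, u'_{k+1}\}$ of the reversed asteroid.

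The remaining two triples both involve the $v$-side, so the key step is to construct a \emph{swapped} asteroid in which the roles of the two bipartition classes are exchanged. Concretely, I would take the tuple $(\{v_0, v_1, \ldots, v_{2k}\}, \{u_0, u_1, \ldots, u_{2k}\})$ together with new paths $P'_{i, i+1}$ obtained by prepending the edge $v_i u_i$ and appending $u_{i+1} v_{i+1}$ to $P_{i, i+1}$; both edges exist since $u_i v_i, u_{i+1} v_{i+1} \in E(H)$ are edges of the original asteroid. I would then check conditions (a) and (b) for the swapped tuple. Non-adjacencies within one bipartition class are automatic by bipartiteness of $H$. The cross-class non-adjacencies required by the new condition (a) at index $i$---namely ``$v_i$ not adjacent to $u_{i+k}$'' and ``$v_i$ not adjacent to $u_{i+k+1}$''---follow by applying the original condition (a) at indices $i+k$ and $i+k+1$, which, with indices modulo~$2k+1$, forbid edges between $\{u_{i+k}, v_{i+k}\}$ and $\{v_{i-1}, v_i\} \cup V(P_{i-1,i})$, and between $\{u_{i+k+1}, v_{i+k+1}\}$ and $\{v_i, v_{i+1}\} \cup V(P_{i, i+1})$, respectively. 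For the new condition (b), the needed anticompleteness of $\{v_0, u_0\}$ to $\{u_1, \ldots, u_{2k}\}$ reduces, after bipartiteness, to $v_0$ being non-adjacent to every $u_j$; but every such $u_j$ lies on some $V(P_{i, i+1})$ with $i \in [2k-1]$ (using $k \geq 1$), so this follows from the original condition (b). The remaining portion of (b)---non-adjacency of $\{v_0, u_0\}$ to the vertices on the extended paths $P'_{i, i+1}$ for $i \in [2k-1]$---follows immediately from the original condition (b) together with bipartiteness.

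Hence $\{v_0, v_1, v_{k+1}\}$ is exhibited as a special triple through the swapped asteroid, and $\{v_0, v_{2k}, v_k\}$ is exhibited by combining the two constructions---for instance, by applying the reversal procedure to the swapped asteroid. The main obstacle is the careful index tracking in verifying the swapped asteroid's conditions, since each required non-adjacency needs to be attributed to the right instance of the original condition (a) or (b); a minor technical issue is that the extended walk $v_i, u_i, \ldots, u_{i+1}, v_{i+1}$ may repeat a vertex if $v_i$ or $v_{i+1}$ already appears internally in $P_{i, i+1}$, but this is harmless and can be handled by passing to a shortest $u_i$-$u_{i+1}$ path inside the same subgraph, or by shortcutting, without violating the asteroid conditions.
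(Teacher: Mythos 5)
Your proof is correct and follows essentially the same strategy as the paper: $\{u_0,u_1,u_{k+1}\}$ from the definition, $\{u_0,u_{2k},u_k\}$ via the reversed asteroid, and the two $v$-triples via the ``swapped'' asteroid $(V,U)$ with new $v_i$-$v_{i+1}$ paths obtained by extending each $P_{i,i+1}$ through the matching edges. The paper dismisses the verification of the swapped asteroid's conditions as ``straightforward,'' whereas you spell out the index bookkeeping explicitly (invoking condition~(a) at indices $i+k$ and $i+k+1$ to get $v_i \not\sim u_{i+k}, u_{i+k+1}$, and observing that each $u_j$ with $j\in[2k]$ lies on some $P_{i,i+1}$ with $i\in[2k-1]$ so that condition~(b) transfers) and you correctly flag the harmless repetition issue in the extended walks; this is just filled-in detail, not a different argument.
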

\begin{proof}
The fact that $\{u_0,u_1,u_{k+1}\}$ is a special triple follows from the definition. The set $\{u_0,u_{2k},u_{k}\}$ is a special triple because there exists a reversed asteroid $(\{u_0,u_{2k},\ldots,u_1\},\allowbreak\{v_0,v_{2k},\ldots,v_1\})$. 

Now let $P_{i,i+1}$ for some $i \in \{0,\ldots,2k\}$ be a path satisfying \cref{def:edge-ast} for the asteroid $(\{u_0,u_1,\ldots,u_{2k}\},\allowbreak\{v_0,v_1,\ldots,v_{2k}\})$. As $u_iv_i, u_{i+1}v_{i+1} \in E(H)$, it is straightforward to verify that there exists $v_i$-$v_{i+1}$-path $P'_{i,i+1}$, such that $\{u_{i+k+1},v_{i+k+1}\}$ and $\{u_{i},u_{i}\}\cup V(P'_{i,i+1})$ are anticomplete, and, if $i \not\in \{0,2k\}$, then $\{u_0,v_0\}$ and $\{u_{i},u_{i}\}\cup V(P'_{i,i+1})$ are also anticomplete. From this we can conclude that $(\{v_0,v_1,\ldots,v_{2k}\},\{u_0,u_1,\ldots,u_{2k}\})$ is also an asteroid, and $\{v_0,v_1,v_{k+1}\}$ is a special triple. The fact that $\{v_0,v_{2k},v_{k}\}$ is a special triple comes from combining the previous arguments. 
\end{proof}

To make the proof of \cref{lemma:extendedP4exists} easier, we first prove the following auxiliary lemma.

\begin{lemma} \label{lemma:ineq:five}
Let $H$ be a bipartite, connected graph which contains an asteroid $(U,V)$.
Then there exists a special triple $T$, an extended $P_4$ gadget~$(a,b,c,d,e)$ in~$H$ and:
\begin{enumerate}
\item walks $\cA,\cC,\cE$, starting, respectively, in $a$, $c$, and $e$, and terminating in distinct elements of $T$, such that each two of these walks avoid each other,
\item walks $\cB,\cD$, starting, respectively, in $b$ and $d$, and terminating in distinct elements of $T$, such that $\cB$ and $\cD$ avoid each other.
\end{enumerate} 
\end{lemma}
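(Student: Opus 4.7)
The plan is to leverage the asteroid $(U,V)=(\{u_0,\ldots,u_{2k}\},\{v_0,\ldots,v_{2k}\})$ in~$H$ together with the paths~$P_{i,i+1}$ to simultaneously produce the extended $P_4$ gadget $(a,b,c,d,e)$, a special triple~$T$ supplied by \cref{prop:special-triples}, and the required avoiding walks from the gadget into~$T$. My first choice for $T$ is $\{u_0, u_1, u_{k+1}\}$, whose pairwise neighborhood incomparability is already witnessed by the matching edges $u_0v_0$, $u_1v_1$, $u_{k+1}v_{k+1}$, so no extra argument for $T$ itself is needed.

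For the gadget I would look for $(a,b,c,d)$ along a shortest $u_0$-$u_1$-walk inside $P_{0,1}$: a shortest such walk is automatically induced, so four consecutive vertices on it form an induced $P_4$ provided the walk has length at least~$4$. The fifth vertex $e$ is then chosen close to $u_{k+1}$, either $e = u_{k+1}$ itself or $e = v_{k+1}$ depending on which bipartition class contains~$a$, so that the pairwise incomparability of $N_H(a), N_H(c), N_H(e)$ is inherited from the matching edges $u_iv_i$, while incomparability of $N_H(b), N_H(d)$ follows from the induced-ness of the chosen $P_4$. The walks $\cA, \cC, \cE$ then naturally route $a, c, e$ to $u_0, u_1, u_{k+1}$ through prefixes/suffixes of $P_{0,1}$ together with a short walk from~$e$ to~$u_{k+1}$; walks $\cB, \cD$ are constructed analogously from $b, d$, ending at two elements of $T$ (with the parity of their length matching the bipartition class of~$T$). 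Avoidance is verified using condition~(a) of \cref{def:edge-ast}, which forbids edges between $\{u_i,v_i\}$ and $V(P_{i+k,i+k+1})$, together with condition~(b) for the walks originating near~$u_0$. To equalize lengths within each group of walks, I would pad with trivial back-and-forth traversals along private edges incident to the walks' endpoints, checking that this does not introduce any crossing edges.

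The main obstacle is the case analysis forced by the unknown length of the shortest $u_0$-$u_1$-walk inside $P_{0,1}$. When this length is~$2$ (so $u_0$ and $u_1$ share a common neighbor), no induced $P_4$ can be extracted from $P_{0,1}$ alone and the gadget must be assembled from additional asteroid vertices, possibly by switching to a different special triple from \cref{prop:special-triples} (on the~$v$-side or coming from the reverse asteroid). A secondary complication is that the incomparabilities in the definition of the extended $P_4$ gadget depend on the full edge set of~$H$, so chords outside the asteroid could in principle collapse the required neighborhoods; one must argue that any such chord can be neutralized by an alternative choice of the five gadget vertices. The four alternative special triples of \cref{prop:special-triples} together with the reverse-asteroid symmetry should provide enough flexibility to close all these cases.
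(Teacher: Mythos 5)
Your plan correctly identifies the main ingredients the paper uses: the special triple $T=\{u_0,u_1,u_{k+1}\}$, extracting $(a,b,c,d)$ from a shortest walk inside $P_{0,1}$, placing $e$ near $u_{k+1}$, invoking conditions (a) and (b) of \cref{def:edge-ast} to get avoidance, and padding with back-and-forth steps to equalize lengths. You also correctly flag the two places the argument gets hard: short $P_{0,1}$, and chords of $H$ that might collapse the neighborhood incomparabilities. But the proposal stops exactly where the real work begins, and it is missing the one structural device that makes that work tractable.

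The paper's proof does not argue directly in $H$; its crucial first step is to pass to a \emph{minimal} induced subgraph $F$ of $H$ that still contains an asteroid, and then to choose each $\widetilde P_{i,i+1}$ as a \emph{shortest} path in $F$ with the required anticompleteness properties. Minimality of $F$ is what lets the proof ``neutralize'' stray chords: if an unwanted chord is present, one exhibits a smaller induced subgraph still containing an asteroid (e.g.\ the $F-y$ and $F-x$ contradictions, and the smaller asteroid in \cref{fig:final-asteroid}), contradicting minimality. Your proposal instead hopes that ``any such chord can be neutralized by an alternative choice of the five gadget vertices,'' but without a minimality assumption there is no bound on how many alternative choices one might need, and no termination argument; this is a genuine gap, not a routine detail. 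Similarly, the case you defer (when $P^*_{0,1}$ has exactly one vertex, i.e.\ the $u_0$-$u_1$-path has length $4$ rather than $\geq 6$) is where the bulk of the paper's proof lives: it requires introducing a new shortest $\{u_1,v_1\}$-$\{u_{k+1},v_{k+1}\}$-path $Q$ avoiding $N_F[\{u_0,v_0\}]$, analyzing the last neighbor $y$ of $x$ on $Q$, splitting on $|Q^*|\geq 2$ versus $|Q^*|=1$, bringing in the reversed asteroid and the analogous vertex $x'$, and finally a delicate sub-subcase argument ending with the configuration in \cref{fig:final-asteroid}. ``Switching to a different special triple from \cref{prop:special-triples}'' is part of the paper's toolkit here, but it is far from sufficient on its own. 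Until you introduce the minimal subgraph $F$, spell out the shortest-path choice of $P_{i,i+1}$ (which is what makes the extracted $P_4$ induced and gives the private neighbors), and actually carry out the $|P^*_{0,1}|=1$ branch, the proof is incomplete.
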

\begin{proof}
Recall that in the definition of an extended $P_4$ gadget~$(a,b,c,d,e)$ we require that the appropriate pairs of neighborhoods are incomparable. Actually, we will show a stronger property, i.e., that each of $a,c,e$ has a \emph{private neighbor}, which is non-adjacent to the other two vertices. We extend this notion and call vertices in $N_H(b) \setminus N_H(d)$ and in $N_H(d) \setminus N_H(b)$ \emph{private neighbors} of $b$ and $d$, respectively.

Define $F$ to be a minimal induced subgraph of $H$ which contains any asteroid, and let $(U,V)=(\{u_0,u_1,\ldots,u_{2k}\},\{v_0,v_1,\ldots,v_{2k}\})$ be an asteroid in $F$. {Notice that if the neighborhoods of some vertices are incomparable in $F$, then so are the neighborhoods of these vertices in $H$.} 

For every $i \in \{0,\ldots,2k\}$ we define $P_{i,i+1}$ as follows.
First, we choose $\widetilde{P}_{i,i+1}$ to be a shortest one from all $\{u_i,v_i\}$-$\{u_{i+1},v_{i+1}\}$-paths in $F$ that are anticomplete to $\{u_{i+k+1},v_{i+k+1}\}$ and, if $i \not\in\{0,2k\}$, also to $\{u_0,v_0\}$.
We know that at least one such path exists by the definition of an asteroid.
Clearly, exactly one of the vertices $u_i,v_i$ and exactly one of the vertices $u_{i+1},v_{i+1}$ belong to $\widetilde{P}_{i,i+1}$ (as endvertices).
Now if $u_i$ (respectively $u_{i+1}$) does not belong to $\widetilde{P}_{i,i+1}$, append it as the first (resp. last) vertex.
This way we obtain $P_{i,i+1}$. Observe that by the choice of $\widetilde{P}_{i,i+1}$ the path $P_{i,i+1}$ is induced.
 
%\bmp[inline]{Can we rephrase to: `to be a shortest $u_i$-$u_{i+1}$-path in $F$ that is anticomplete to' (to highlight that it is not unique, and that we are not claiming that a shortest path has this property, but that we look for a path that has this property and is shortest among all choices to do so).} 
%\ko{done}
%\bmp[inline]{There is still an issue of `the shortest' versus `a shortest path', and this definition needs to be changed as discussed with Pavel to a shortest~$uv$-$uv$ path.}
The minimality of $F$ implies that every vertex of $F$ belongs to $(U \cup V)$ or at least one $P_{i,i+1}$. For every $i$ we define $\pstar_{i,i+1}:=V(P_{i,i+1}) \setminus \{u_i,v_i,u_{i+1},v_{i+1}\}$. Clearly this set induces a path in $F$.
Similarly, we define the set $\ovp_{i,i+1}:=V(P_{i,i+1}) \cup \{v_i,v_{i+1}\}$ and note that $\ovp_{0,1}$ and $\ovp_{2k,0}$ also induce paths in $F$.
Indeed, let us consider $\ovp_{0,1}$, the case of $\ovp_{2k,0}$ is symmetric. Recall that $P_{0,1}$ is an induced path. Furthermore, if $v_0$ (resp. $v_1$) does not belong to $P_{0,1}$, then it is non-adjacent to every vertex from $\pstar_{0,1}$ (by the minimality of $\widetilde{P}_{0,1}$) and also to $u_1$ (resp. $u_0$) by the property \eqref{it:special} in \cref{def:edge-ast}.
If it does not lead to confusion, we will sometimes identify sets $\pstar_{i,i+1}$, $\ovp_{2k,0}$ and $\ovp_{0,1}$ with the paths induced by these sets.

%\bmp[inline]{I don't follow. If $P_{i,i+1} = (u_1, x, y, z, u_2)$, could it not be that~$v_1 \neq x$ is adjacent to~$y$ as well? It would not give a shorter $u_1 u_2$ path, but contradicts inducedness. Do want an extra minimality assumption on the paths?} 
%\bmp[inline]{Another question I came to later: why can it not be that, say,~$\{u_1, v_1, u_2, v_2\}$ induce a~$C_4$, so that~$\ovp_{1,2}$ does not induce a path?}
%\ko[inline]{you are right, this sentence was too optimistic, but it works for $P_{2k,0}$ and $P_{0,1}$ (with a new definition for $P_{i,i+1}$) and that's what we need.}

%Observe that analogously to the construction of the paths $P_{i,i+1}$ for every $i \in \{0,\ldots,2k\}$ we can define a $v_i$-$v_{i+1}$-path $P'_{i,i+1}$: first taking the shortest $\{u_i,v_i\}$-$\{u_{i+1},v_{i+1}\}$-path in $F$ that is anticomplete to $\{u_{i+k+1},v_{i+k+1}\}$ and, if $i \not\in\{0,2k\}$, also to $\{u_0,v_0\}$, and then appending it by $v_i$ and $v_{i+1}$ if needed. It is straightforward to verify that $U \cup V \cup \bigcup_{i=0}^{2k} P_{i,i+1}=V \cup U \cup \bigcup_{i=0}^{2k} P'_{i,i+1}$, so the asteroid $(V,U)$ with paths $P'_{i,i+1}$ is just a symmetric 

In the proof we will consider several cases.
First, suppose that $|\pstar_{0,1}| \geq 2$ or $|\pstar_{2k,0}| \geq 2$. Let us describe the first case, as the other one is symmetric -- we just need to consider the reversed asteroid.
Note that the first two vertices of $\ovp_{0,1}$ are either $u_0,v_0$ or $v_0,u_0$.
Consider the first case, as the other one is symmetric, with roles of $u$'s and $v$'s switched (recall that by~\cref{prop:special-triples} the set $\{v_0,v_1,v_{k+1}\}$ is also a special triple).
%\bmp[inline]{If you switch the roles of $u$'s and $v$'s, then the walks we construct do not end in~$u_0 u_1 u_{k+1}$, but in~$v_0 v_1 v_{k+1}$, for which the reader may not see why it is a special triple. Perhaps we need a claim after the definition of an asteroid, that if $(U,V)$ is an asteroid, then~$(V,U)$ is also an asteroid (possibly with different paths), so that~$v_0 v_1 v_{k+1}$ is indeed a special triple?}
%\ko{added a proposition after the definition}

Since $|\pstar_{0,1}| \geq 2$, we know that there are vertices $b,c,d$, such that $\ovp_{0,1}$ starts with $u_0,v_0,b,c,d$, and $b,c \in \pstar_{0,1}$ and $d \in \pstar_{0,1} \cup \{u_1\}$.
We define an extended $P_4$ gadget to be the tuple $(v_0,b,c,d,v_{k+1})$ (recall that $\ovp_{0,1}$ is an induced path).
The private neighbors of $v_0,c,v_{k+1}$ are, respectively, $u_0,d,u_{k+1}$. The private neighbor of $b$ is $v_0$, and the private neighbor of $d$ is its successor on $\ovp_{0,1}$, i.e., the fourth vertex of $\pstar_{0,1}$, or $v_1$ if $|\pstar_{0,1}| < 4$.

Let $\cR$ be the shortest $d$-$v_1$-walk using consecutive vertices of $\ovp_{0,1}$. Note that its length is at least one. Define $\cD:=\cR \; \circ \; v_1,u_1$ and $\cB = b,v_0,u_0,\ldots,v_0,u_0$, so that $\cD$ and $\cB$ have equal lengths.
Similarly we define $\cA := v_0,u_0,\ldots,v_0,u_0$, $\cC := c,d \; \circ \; \cD \; \circ v_1,u_1$, and $\cE := v_{k+1},u_{k+1},\ldots,v_{k+1},u_{k+1}$, so that they have equal lengths. It is straightforward to verify that these walks satisfy the conditions in the lemma.

So we can assume that $\pstar_{0,1}$ has at most one vertex, and since $\{u_0,v_0\}$ must be anticomplete to $\{u_1,v_1\}$, we conclude that $\pstar_{0,1}$ contains exactly one vertex, say $x$.

Repeating the same argument for the reversed asteroid, we obtain that $\pstar_{2k,0}$ has exactly one vertex, say $x'$ (it might happen that $x=x'$).

Let us assume that $\ovp_{0,1}$ starts with $u_0,v_0$, as the other case is symmetric.
This means that the consecutive vertices of $\ovp_{0,1}$ are $u_0,v_0,x,v_1,u_1$.
Let $Q$ be a shortest $\{u_1,v_1\}$-$\{u_{k+1},v_{k+1}\}$-path contained in $V(F) \setminus \left( \{u_0,v_0\} \cup \pstar_{2k,0} \cup \pstar_{0,1} \right) = V(F) \setminus N_F[\{u_0,v_0\}]$, it exists by the definition of an asteroid.
Note that $Q$ is induced and anticomplete to $\{u_0,v_0\}$, and exactly one of the vertices $u_1,v_1$ and exactly one of the vertices $u_{k+1},v_{k+1}$ belong to $Q$. 
Let $Q^*:=V(Q) \setminus \{u_1,v_1,u_{k+1},v_{k+1}\}$ and note that $Q^*$ must be non-empty, because $\{u_1,v_1\}$ is anticomplete to $\{u_{k+1},v_{k+1}\}$. Again, we will identify $Q^*$ with the subpath of $Q$ induced by $Q^*$.
For a vertex $v \in Q$, let $\cQ_v$ be the shortest $v$-$u_{k+1}$-walk, which uses only vertices of $Q \cup \{u_{k+1}\}$.

If $x$ has no neighbors in $Q^*$, we can define our gadget to be the tuple $(v_0,x,v_1,d,v_{k+1})$, where $d$ is the first vertex of $Q^*$ if $v_1 \in Q$ or $d=u_1$ if $u_1 \in Q$. As $\{v_0,v_1,v_{k+1}\}$ is a special triple, each of these vertices has a private neighbor. The private neighbor of $x$ is $v_0$, and the private neighbor of $d$ is its successor on $Q$.
We define walks $\cA:=v_0,u_0$ and $\cC:=v_1,u_1$ and $ \cE:=v_{k+1},u_{k+1}$.
Moreover, we define $\cD:=\cQ_d$ and $\cB:=x,v_0,u_0,\ldots, u_0$, so that they are of equal length.
%\bmp[inline]{It seems to me the above argument still works if~$x$ has neighbors in~$Q^*$, as long as~$x$ is not adjacent to the successor of~$d$ on~$\mathcal{Q}_d$, because then~$\mathcal{B}$ and~$\mathcal{D}$ do not avoid each other and~$d$ may not have a private neighbor. Does that give any simplification? Would it be useful for the reader to remark this?}
%\ko[inline]{It works, but as far as I can see, making this case more general requires some work on the structure of the other cases and I'm not sure if I have enough time for this.}

So we can assume that $x$ has a neighbor in $Q^*$.
Denote by $y$ the last neighbor of $x$ in $Q^*$ and by $q$ the successor of $y$ on $Q$; it exists, because $Q$ terminates at one of $u_{k+1},v_{k+1}$ and $y \neq v_{k+1}$. Clearly, $q \not\in N_F(v_0)$, because it belongs to $Q$ which is anticomplete to $\{u_0,v_0\}$; also $q \not\in N_F(v_1)$ by the choice of $Q$: otherwise we would have chosen a shorter path starting with $v_1$ and then using the consecutive elements of $\cQ_q$.
%\bmp[inline]{I was lost here for a while. $q$ is not in~$N_G(v_0)$ by definition of~$Q$ as outside~$N_F[\{u_0, v_0\}]$, but~$q \notin N_F(v_1)$ was not immediately clear; I later justified it by choice of~$Q$ as shortest. Can we help the reader a bit?} 
%\ko[inline]{Did my best}
Note that $q$ might be equal to $u_{k+1}$.
We now branch on two cases, depending on the size of $Q^*$.

\textbf{Case 1: $|Q^*| \geq 2$.} If $y \not\in N_F(u_1)$, we define our gadget to be the tuple $(v_0,x,y,q,v_1)$.
The private neighbors of $v_0,v_1$, and $y$ are $u_0,u_1,$ and $q$, respectively.
The private neighbor of $x$ is $v_0$.
The private neighbor of $q$ is its successor on $Q$ (if $q \in Q^*$), or $v_{k+1}$ if $q=u_{k+1}$.

We define walks $\cA := v_0,u_0,\ldots,u_0$ and $\cC := \cQ_y$ and $\cE := v_1,u_1,\ldots,u_1$, so that they have equal lengths. Similarly, we define $\cB := x,v_0,u_0,\ldots,v_0,u_0$ and $\cD = \cQ_q \circ u_{k+1},v_{k+1},u_{k+1}$, so that they have equal lengths. We appended $u_{k+1},v_{k+1},u_{k+1}$ at the end of $\cD$, so that we do not need to treat the case that $q=u_{k+1}$ separately.

So assume that $y \in N_F(u_1)$, so it is the first vertex of $Q^*$. Note that in this case $q \in Q^*$, so, in particular, $q \notin \{u_{k+1},v_{k+1}\}$. Therefore $q$ has a successor $q'$ in $Q$. 
Then we take the tuple $(v_0,x,y,q,v_{k+1})$ with corresponding private neighbors $u_0,v_0,u_1,q'$, and $u_{k+1}$.
We define walks as follows: $\cA:=v_0,u_0$ and $\cC := y,u_1$, and $\cE:=v_{k+1},u_{k+1}$. Furthermore, we define $\cB:=x,v_0,u_0,\ldots,u_0$ and $\cD := \cQ_q$. Note that in the currently considered case the length of $\cD$ is at least two.

\textbf{Case 2: $|Q^*| = 1$.} So we are left with the case that $Q^*$ consists only of a vertex $y$, which is adjacent to both $x$ and $u_{k+1}$. 
Recall that $Q \subseteq \ovp_{1,2} \cup \ldots \cup \ovp_{k,k+1} \cup \ovp_{k+1,k+2} \ldots \ovp_{2k-1,2k}$, which means that there is non-empty  $I \subseteq [2k-1]$, such that $y \in \bigcap_{i \in I}\ovp_{i,i+1}$.

First suppose that there is some $i \in I \setminus \{k\}$. This means that there exists $\ell=i+k+1 \neq 0$ such that $\{u_{\ell},v_{\ell},u_0,v_0\}$ is anticomplete to $\ovp_{i,i+1}$.
% (as $y$ may belong to more than one path, there can be more than one $i \in [2k-1] \setminus \{k\}$ such that $y \in P_{i,i+1}$, so just fix one).
%\bmp[inline]{Can't it be that~$y$ belongs to more than one path? Then~$i$ is not uniquely defined, and we should write something like~$u \notin V(P_{k, k+1})$ instead.} 
%\ko[inline]{I added an explanation in the brackets, do you think it is enough?}
%
Furthermore, since $\ovp_{i,i+1}$ is connected and contains at least 2 vertices, 
%\bmp[inline]{Why does this path have at least 4 vertices? Do we implicitly rely on an assumption that the edges of the asteroid form a matching, so that~$\{u_i,v_i,u_{i+1},v_{i+1}\}$ are four distinct vertices? Then we need to make that matching property explicit. But for the purpose of proving~$y$ has a neighbor~$r$ on this path, I don't see why we need it to have 4 vertices; 2 suffices, no?}
$y \in \ovp_{i,i+1}$ has a neighbor $r$ in $\ovp_{i,i+1}$.
%\ko{fixed?}
We define the extended $P_4$ gadget as the tuple $(v_0,x,y,u_{k+1},v_{\ell})$.
The corresponding private neighbors are $u_0,v_0,r,v_{k+1}$, and $u_{\ell}$. 
{Recall that $x \in P_{0,1}$, so it must be non-adjacent to $v_{k+1}$.} 
%\bmp[inline]{Why can~$x$ not be adjacent to~$v_{k+1}$? Could it not be that both~$u_1,y,u_{k+1}$ is a shortest path from~$\{u_1,v_1\}$ to~$\{u_{k+1},v_{k+1}\}$, but that alternatively~$v_1,x,v_{k+1}$ is also such a shortest path?} 
%\ko[inline]{added an explanation}

We define walks $\cB:=x,v_0,u_0$ and $\cD:=u_{k+1},v_{k+1},u_{k+1}$.
The definition of the remaining three walks is more intricate. 
Let $\cR$ be the shortest $y$-$u_i$-walk using consecutive vertices of $P_{i-1,i}$
For $j \in [2k]$, let $\cP_{j,j+1}$ be the shortest $u_{j}$-$u_{j+1}$-walk using vertices of $P_{j,j+1}$. By $\cP_{j+1,j}$ we denote the walk $\cP_{j,j+1}$ in the reversed order.

If $i \in [k-1]$, we set:
\begin{equation*}
\begin{alignedat}{7}
\cA :=&  v_0,u_0,v_0,..,u_0 &&\circ u_0,v_0,..,u_0 &&\circ u_0,v_0,..,u_0 &&\circ \ldots &&\circ u_0,v_0,..,u_0 &&\circ u_0,v_0,..,u_0 &&\circ u_0,v_0,..,u_0,\\
\cC :=& \cR &&\circ u_i,v_i,..,u_i &&\circ \cP_{i,i-1} &&\circ \ldots &&\circ u_2,v_2,..,u_2  &&\circ \cP_{2,1} &&\circ u_1,v_1,..,u_1, \\
\cE :=& v_{\ell},u_{\ell},v_{\ell},..,u_{\ell} &&\circ \cP_{\ell,\ell-1} &&\circ u_{\ell-1},v_{\ell-1},..,u_{\ell-1} &&\circ \ldots &&\circ \cP_{k+3,k+2} &&\circ u_{k+2},v_{k+2},..,u_{k+2} &&\circ \cP_{k+2,k+1},
\end{alignedat}
\end{equation*} 
where the lengths of particular segments are adjusted, so that the subwalks in the same columns have the same length.
By the definition of an edge asteroid, for each $j$ the set $\{u_j,v_j\}$ is anticomplete to $\cP_{j+k,j+k+1}$.
Furthermore, as $\cC$ and $\cE$ do not use the vertices from $N_F[\{u_0,v_0\}]$, each two of $\cA,\cC$, and $\cE$ avoid each other.

If $i \in [2k-1] \setminus [k]$, we set:
\begin{equation*}
\begin{alignedat}{6}
\cA :=&  v_0,u_0,v_0,..,u_0 &&\circ u_0,v_0,..,u_0 &&\circ u_0,v_0,..,u_0 &&\circ \ldots &&\circ u_0,v_0,..,u_0 &&\circ u_0,v_0,..,u_0,\\
\cC :=& \cR &&\circ u_{i},v_{i},..,u_{i} &&\circ \cP_{i,i-1} &&\circ \ldots &&\circ u_{k+2},v_{k+2},..,u_{k+2}  &&\circ \cP_{k+2,k+1}, \\
\cE :=& v_{\ell},u_{\ell},v_{\ell},..,u_{\ell} &&\circ \cP_{\ell,\ell-1} &&\circ u_{\ell-1},v_{\ell-1},..,u_{\ell-1} &&\circ \ldots &&\circ \cP_{2,1} &&\circ u_{1},v_{1},..,u_{1}.
\end{alignedat}
\end{equation*} 
The argument that these walks avoid each other is analogous to the previous case.

\medskip

So finally we assume that $I = \{k\}$, i.e., $y \in \ovp_{k,k+1}$ and $y \not\in \ovp_{j,j+1}$ for every $j \neq k$.
Note that this in particular means that $y \neq v_j$ for any $j \in \{0,\ldots,2k\}$, as each $v_j$ belongs to $\ovp_{j-1,j} \cup \ovp_{j,j+1}$. So $y \in \pstar_{k,k+1}$.

Let us define $K:=\ovp_{1,2} \cup \ldots \cup \ovp_{k-1,k}$ and $K':=\ovp_{k+1,k+2} \cup \ldots \cup \ovp_{2k-1,2k}$.
Note that each of them induces a connected subgraph of $F$ and each of them is anticomplete to $\{u_0,v_0\}$.
We claim that $K$ is anticomplete to $K'$. Indeed, note that otherwise there is a $u_k$-$u_{k+1}$ path $P'_{k,k+1}$ in $F[K \cup K']$, which does not use $y$ and is anticomplete to $\{u_0,v_0\}$. This implies $(U,V)$ is an asteroid in $F - y$, where the path between $u_k$ and $u_{k+1}$ is $P'_{k,k+1}$. This contradicts the definition of $F$.

\medskip

Recall that our argument can be repeated for the reversed asteroid, as we did when defining $x'$.
So let $y'$ be an analogue of $y$, i.e., the last neighbor of $x'$ on the shortest $\{u_{2k},v_{2k}\}$-$\{u_k,v_k\}$-path contained in $V(F) \setminus N_F[\{u_0,v_0\}]$.
Observe that for the reversed asteroid the bipartition classes might be switched, i.e., it is possible that $x'$ is adjacent to $u_0$ and $u_{2k}$, and $y'$ is adjacent to $v_{2k}$ and $v_k$. Furthermore, the sets $\{x,y\}$ and $\{x',y'\}$ might overlap. However, we know that $y,y' \in \pstar_{k,k+1}$.

If $|\pstar_{k,k+1}|\geq 2$, then $\ovp_{k,k+1}$ must be an induced path on at least 6 vertices.
If $v_{i+1} \in P_{k,k+1}$, we denote by  $z,a,b,c,v_{k+1}$ the last five consecutive vertices of $P_{k,k+1}$, and define the gadget as $(a,b,c,v_{k+1},u_0)$. The corresponding private neighbors are $z,a,v_{k+1},u_{k+1}$, and $v_0$.

Denote by $\cR$ the shortest $a$-$u_k$-walk using only the vertices of $\ovp_{k,k+1}$.
Let $\cK$ be any $u_k$-$u_{1}$-walk contained in $K$, recall that $\cK$ is anticomplete to $\{u_{k+1},v_{k+1},u_0,v_0\}$.
Then we define, respectively, $\cA:=\cR \circ \cK$ and $\cC:=c,v_{k+1},u_{k+1},\ldots,u_{k+1}$, and $\cE:= u_0, v_0,\ldots, u_0$, so that they have equal lengths.
Similarly, we define $\cB:=b,a \circ \cR \circ \cK$ and $\cD:=v_{k+1},u_{k+1},\ldots, u_{k+1}$, so that they have equal lengths.

If $v_{i+1} \not\in P_{k,k+1}$, the gadget is $(a,b,c,u_{k+1},v_0)$, where $a,b,c,u_{k+1}$ are last four vertices of $P_{k,k+1}$. The remaining argument is analogous.

\medskip

So we are left with the case $|\pstar_{k,k+1}|=1$, and since $y, y' \in \pstar_{k,k+1}$, we must have that $y=y'$.
This also implies that $x'$ is adjacent to $v_0$ and $v_{2k}$.

Observe that if $x=x'$, then $x$ is also non-adjacent to $v_k$ and we define the extended $P_4$ to be $(u_{k+1},y,x,v_0,u_k)$, where the corresponding private neighbors are $v_{k+1},u_{k+1},v_0,u_0,v_k$.
Define walks $\cB:=y,u_{k+1},v_{k+1},\ldots,u_{k+1}$, and $ \cD:=v_0,u_0,\ldots,u_0$ of equal length.
For $\cK$ being any $u_k$-$u_{1}$-walk contained in $K$, we define $\cE:=\cK$ and walks $\cA:=u_{k+1},v_{k+1},\ldots,u_{k+1}$ and $\cC:=x,v_0,u_0,\ldots,u_0$ of same length as $\cE$.

This means that we can assume that $x \neq x'$, and thus $F$ contains the structure depicted in~\cref{fig:final-asteroid}~(left).

\begin{figure}[ht]
\begin{center}
\begin{tikzpicture}[every node/.style={draw,circle,fill=white,inner sep=0pt,minimum size=5pt},every loop/.style={},scale=0.6]
\node[label=above:$u_0$] (u0) at (0,4) {};
\node[fill=black,label=above:$v_0$] (v0) at (1,4) {};
\node[label=above:$x$] (x) at (2,4) {};
\node[fill=black,label=above:$v_1$] (v1) at (3,4) {};
\node[label=above:$u_1$] (u1) at (4,4) {};

\node[label=left:$x'$] (x') at (1,3) {};
\node[fill=black,label=left:$v_{2k}$] (v2k) at (1,2) {};
\node[label=left:$u_{2k}$] (u2k) at (1,1) {};

\node[fill=black,label=285:$y$] (y) at (2,3) {};

\node[label=below:$u_k$] (uk) at (3,3) {};
\node[fill=black,label=below:$v_k$] (vk) at (4,3) {};

\node[label=right:$u_{k+1}$] (uk1) at (2,2) {};
\node[fill=black,label=right:$v_{k+1}$] (vk1) at (2,1) {};

\draw[line width=1.5] (u2k) -- (y) -- (u1);
\draw[line width=1.5] (u0)--(v0)--(x)--(v1)--(u1); 
\draw[line width=1.5] (v0)--(x')--(v2k)--(u2k);

\draw[line width=1.5] (x')--(y)--(uk)--(vk);
\draw[line width=1.5] (x)--(y)--(uk1)--(vk1);

\draw[dashed] (x) -- (vk);
\draw[dashed] (x) -- (v2k);
\draw[dashed] (x') -- (v1);
\draw[dashed] (x') -- (vk1);
\end{tikzpicture}
\hskip 3cm
\begin{tikzpicture}[every node/.style={draw,circle,fill=white,inner sep=0pt,minimum size=5pt},every loop/.style={},scale=0.6]
\node[label=above:$u_0$] (u0) at (0,4) {};
\node[fill=black,label=above:$v_0$] (v0) at (1,4) {};
\node[label=above:$x$] (x) at (2,4) {};
%\node[fill=black,label=above:$v_1$] (v1) at (3,4) {};
%\node[label=above:$u_1$] (u1) at (4,4) {};

\node[label=left:$x'$] (x') at (1,3) {};
%\node[fill=black,label=left:$v_{2k}$] (v2k) at (1,2) {};
%\node[label=left:$u_{2k}$] (u2k) at (1,1) {};

\node[fill=black,label=285:$y$] (y) at (2,3) {};

\node[label=below:$u_k$] (uk) at (3,3) {};
\node[fill=black,label=below:$v_k$] (vk) at (4,3) {};

\node[label=right:$u_{k+1}$] (uk1) at (2,2) {};
\node[fill=black,label=right:$v_{k+1}$] (vk1) at (2,1) {};

%\draw[line width=1.5] (u2k) -- (y) -- (u1);
\draw[line width=1.5] (u0)--(v0)--(x);
%--(v1)--(u1); 
\draw[line width=1.5] (v0)--(x');
%--(v2k)--(u2k);

\draw[line width=1.5] (x')--(y)--(uk)--(vk);
\draw[line width=1.5] (x)--(y)--(uk1)--(vk1);

\draw[line width=1.5]  (x) -- (vk);
%\draw[dashed] (x) -- (v2k);
%\draw[dashed] (x') -- (v1);
\draw[line width=1.5]  (x') -- (vk1);
\end{tikzpicture}
\end{center}
\caption{Left: The structure in the last case in the proof of \cref{lemma:ineq:five}. Dashed edges might exist, but do not have to. Right: A smaller asteroid exists if $xv_k \in E(H)$ and $x'v_{k+1} \in E(H)$.}\label{fig:final-asteroid}
\end{figure}
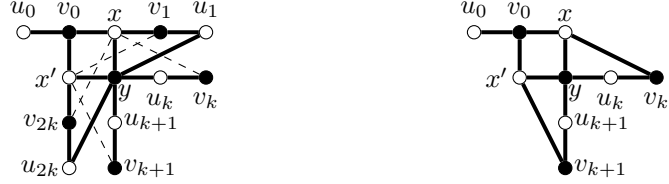

Now observe that if $xv_k \in E(H)$ and $x'v_{k+1} \in E(H)$, then the proper subgraph of $F$ induced by $\{u_0,v_0,x,v_k,u_k,y,u_{k+1},v_{k+1},x'\}$ contains an asteroid $(\{u_0,u_k,u_{k+1}\},\{v_0,v_k,v_{k+1}\})$, which contradicts the minimality of $F$ (see \cref{fig:final-asteroid}~(right)).

So suppose that at least one of these edges, say $x'v_{k+1}$, does not exist (the other case is symmetric).
The minimality of $F$ implies that the edge $x'v_1$ also does not exist: otherwise $F-x$ still contains the asteroid $(U,V)$, where the path between $u_0$ and $v_i$ is $u_0,x',v_1$.
In such a case we take the tuple $(u_{k+1},y,x',v_0,u_1)$, where their corresponding private neighbors are $v_{k+1},u_{k+1},v_0,u_0$, and $v_1$.
The walks are $\cA:=u_{k+1},v_{k+1},u_{k+1}$, $\cB:=y, u_{k+1}$, $\cC:=x',v_0,u_0$, $\cD:=v_0,u_0$, and $\cE:=u_1,v_1,u_1$.
This completes the proof of the lemma.
\end{proof}

Now we proceed to the proof of \cref{lemma:extendedP4exists}.
\begin{proof}[Proof of \cref{lemma:extendedP4exists}]
If $H$ contains an induced cycle with consecutive vertices  $x_0,x_1,\ldots,x_{k-1},x_0$ for $k \in \{6,8\}$, we define an extended $P_4$ gadget to be the tuple $(x_0,x_1,x_2,x_3,x_4)$. Clearly, for every pair of distinct $i,j \in \{0,\ldots,k\}$ we have $N_H(x_i) \neq N_H(x_j)$, as they belong to an induced cycle of length more than 4. Then the appropriate instances $(G_{\{x_0,x_2,x_4\}},L)$ are shown in \cref{fig:cycles-gadgets}.

\begin{figure}[h]
\begin{center}
\begin{tikzpicture}[every node/.style={draw,circle,fill=white,inner sep=0pt,minimum size=6pt}]
\draw plot [smooth] coordinates {(0,1) (0.25,.5) (.5,0.25) (1,0)};
\draw plot [smooth] coordinates {(0,1) (0.25,1.5) (.5,1.75) (1,2)};
\draw plot [smooth] coordinates {(3,0) (3.5,0.25) (3.75,.5) (4,1)};
\draw plot [smooth] coordinates {(3,2) (3.5,1.75) (3.75,1.5) (4,1)};
\node[fill=gray] (x) at (0,1) {};
\node[label={[label distance=-3mm]90:\colorbox{gray!30}{\scriptsize{$x_3,x_5$}}}] (a1) at (1,0) {};
\node[label={[label distance=-3mm]90:\colorbox{gray!30}{\scriptsize{$x_0,x_4$}}}] (a2) at (2,0) {};
\node[label={[label distance=-3mm]90:\colorbox{gray!30}{\scriptsize{$x_1,x_5$}}}] (a3) at (3,0) {};
\node[label={[label distance=-3mm]90:\colorbox{gray!30}{\scriptsize{$x_1,x_5$}}}] (b1) at (1,1) {};
\node[label={[label distance=-3mm]90:\colorbox{gray!30}{\scriptsize{$x_0,x_2$}}}] (b2) at (2,1) {};
\node[label={[label distance=-3mm]90:\colorbox{gray!30}{\scriptsize{$x_1,x_3$}}}] (b3) at (3,1) {};
\node[label={[label distance=-3mm]90:\colorbox{gray!30}{\scriptsize{$x_1,x_3$}}}] (c1) at (1,2) {};
\node[label={[label distance=-3mm]90:\colorbox{gray!30}{\scriptsize{$x_2,x_4$}}}] (c2) at (2,2) {};
\node[label={[label distance=-3mm]90:\colorbox{gray!30}{\scriptsize{$x_3,x_5$}}}] (c3) at (3,2) {};
\node[fill=gray] (y) at (4,1) {};
\draw (a1) -- (a2) -- (a3);
\draw (x) -- (b1) -- (b2) -- (b3) -- (y);
\draw (c1) -- (c2) -- (c3);

\end{tikzpicture}
\hskip 1cm
\begin{tikzpicture}[every node/.style={draw,circle,fill=white,inner sep=0pt,minimum size=6pt},scale=1]
\draw plot [smooth] coordinates {(0,1) (0.25,.5) (.5,0.25) (1,0)};
\draw plot [smooth] coordinates {(0,1) (0.25,1.5) (.5,1.75) (1,2)};
\draw plot [smooth] coordinates {(5,0) (5.5,0.25) (5.75,.5) (6,1)};
\draw plot [smooth] coordinates {(5,2) (5.5,1.75) (5.75,1.5) (6,1)};
\node[fill=gray] (x) at (0,1) {};
\node[label={[label distance=-3mm]90:\colorbox{gray!30}{\scriptsize{$x_3,x_7$}}}] (a1) at (1,0) {};
\node[label={[label distance=-3mm]90:\colorbox{gray!30}{\scriptsize{$x_4,x_6$}}}] (a2) at (2,0) {};
\node[label={[label distance=-3mm]90:\colorbox{gray!30}{\scriptsize{$x_3,x_5$}}}] (a3) at (3,0) {};
\node[label={[label distance=-3mm]90:\colorbox{gray!30}{\scriptsize{$x_2,x_4$}}}] (a4) at (4,0) {};
\node[label={[label distance=-3mm]90:\colorbox{gray!30}{\scriptsize{$x_1,x_3$}}}] (a5) at (5,0) {};
\node[label={[label distance=-3mm]90:\colorbox{gray!30}{\scriptsize{$x_1,x_5$}}}] (b1) at (1,1) {};
\node[label={[label distance=-3mm]90:\colorbox{gray!30}{\scriptsize{$x_2,x_6$}}}] (b2) at (2,1) {};
\node[label={[label distance=-3mm]90:\colorbox{gray!30}{\scriptsize{$x_1,x_7$}}}] (b3) at (3,1) {};
\node[label={[label distance=-3mm]90:\colorbox{gray!30}{\scriptsize{$x_0,x_2$}}}] (b4) at (4,1) {};
\node[label={[label distance=-3mm]90:\colorbox{gray!30}{\scriptsize{$x_1,x_3$}}}] (b5) at (5,1) {};
\node[label={[label distance=-3mm]100:\colorbox{gray!30}{\scriptsize{$x_1,x_5,x_7$}}}] (c1) at (1,2) {};
\node[label={[label distance=-3mm]90:\colorbox{gray!30}{\scriptsize{$x_0,x_6$}}}] (c2) at (2,2) {};
\node[label={[label distance=-3mm]90:\colorbox{gray!30}{\scriptsize{$x_5,x_7$}}}] (c3) at (3,2) {};
\node[label={[label distance=-3mm]90:\colorbox{gray!30}{\scriptsize{$x_4,x_6$}}}] (c4) at (4,2) {};
\node[label={[label distance=-3mm]80:\colorbox{gray!30}{\scriptsize{$x_3,x_5,x_7$}}}] (c5) at (5,2) {};
\node[fill=gray] (y) at (6,1) {};
\draw (a1) -- (a2) -- (a3) -- (a4) -- (a5);
\draw (x) --  (b1) -- (b2) -- (b3) -- (b4) -- (b5)  -- (y);
\draw (c1) -- (c2) -- (c3) -- (c4) -- (c5);
\end{tikzpicture}
\end{center}
\caption{A \listHcoloring instance~$(G_{\{x_0,x_2,x_4\}}, L)$ satisfying the statement 1. of \cref{lemma:extendedP4exists} in case that $H$ contains an induced $C_6$ (left) or an induced $C_8$ (right).
Vertices $\gamma_1, \gamma_2$ are marked gray, and $L(\gamma_1)=L(\gamma_2)=\{x_0,x_2,x_4\}$.}\label{fig:cycles-gadgets}
\end{figure}
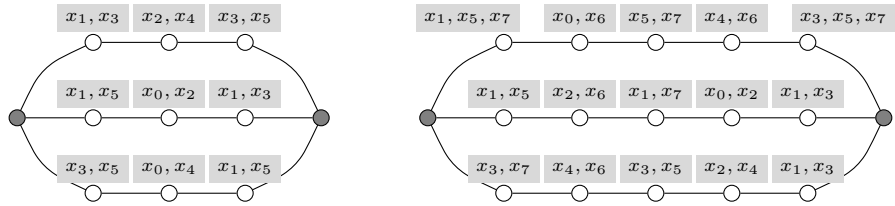

As the cycles are symmetric, we can obtain the instance $(G_{\{x_1,x_3\}},L)$ for $C_6$ and $C_8$ by taking the same graph as $G_{\{x_0,x_2,x_4\}}$, removing $x_4$ from the lists of $\gamma_1,\gamma_2$, and replacing $x_i$ by $x_{i+1}$ (modulo $k$) for every element of every list.

Observe that every induced cycle in $H$ on at least 10 vertices, with consecutive vertices $x_0,x_1,\ldots,x_{k-1},x_0$, contains an asteroid $(\{x_0,x_4,x_6\},\{x_1,x_3,x_7\})$: the paths $P_{0,1}:=x_0,x_1,x_2,x_3,x_4$, and $P_{1,2}:=x_4,x_5,x_6$, and $P_{2,0}:=x_6,x_7,\ldots,x_{k-1},x_0$ satisfy \cref{def:edge-ast}.
So now it is sufficient for consider the case that $H$ contains an asteroid. By \cref{lemma:ineq:five} we know that in such a case there exist:
\begin{enumerate}[(a)]
\item a special triple $T$,
\item an extended $P_4$ gadget $(a,b,c,d,e)$,
\item an injective function $\sigma : \{a,c,e\} \to T$,
\item an injective function $\pi : \{b,d\} \to T$,
\item walks $\cA,\cC$, and $\cE$, starting, respectively, in $a,c,$, and $e$, and terminating, respectively, in $\sigma(a),\sigma(c),\sigma(e)$, such that each two of $\cA,\cC$, and $\cE$ avoid each other,
\item walks $\cB,\cD$, starting, respectively, in $b$ and $d$, and terminating, respectively, in $\pi(b),\pi(d)$, such that $\cB$ and $\cD$ avoid each other.
\end{enumerate}
Let us show how to construct $(G_{\{a,c,e\}},L)$, the construction of $(G_{\{b,d\}},L)$ is analogous, we just need to use walks $\cB$ and $\cD$ instead of $\cA,\cC,\cE$.

Recall that $\cA,\cC$, and $\cE$ are of equal length, say $\ell$, i.e., each of them has $\ell+1$ vertices.
We define the instance $C(\cA,\cC,\cE):=(G,L)$ of \listHcoloring, such that $G$ is a path with consecutive vertices $y_1,y_2,\ldots,y_{\ell +1}$, and the list $L(y_i)$ contains the $i$-th vertex of $\cA$, the $i$-th vertex of $\cC$, and the $i$-th vertex of $\cE$.
Note that since walks $\cA,\cC,\cE$ avoid each other, for every $i \in [\ell+1]$ we have $|L(y_i)|=3$, and, in particular, $L(y_1)=\{a,c,e\}$ and $L(y_{\ell+1})=T$.

Furthermore, each list homomorphism $h$ from $C(\cA,\cC,\cE)$ to $H$ coincides either with one of $\cA,\cC,\cE$. More formally, we have the following:
\begin{enumerate}
\item for every $x \in \{a,c,e\}$, there is a list homomorphism $h_x \colon C(\{\cA,\cC,\cE\}) \to H$, such that $h_x(y_1)=x$ and $h_x(y_{\ell+1})=\sigma(x)$,~and
\item for any list homomorphism $h:C(\{\cA,\cC,\cE\}) \to H$ there is $x \in \{a,c,e\}$, such that $h(y_1)=x$ and $h(y_{\ell+1})=\sigma(x)$.
\end{enumerate}

Recall that $T = \{u_0,u_1,u_{k+1}\}$ for some asteroid $(\{u_0,u_1,\ldots,u_{2k}\},(\{v_0,v_1,\ldots,v_{2k}\})$ in $H$.
We need one more tool from the construction of Feder, Hell, and Huang \cite[Fig. 3]{Feder99}, which we call an \emph{unequal gadget}. The unequal gadget is an instance $(F, L)$ of \listHcoloring with two distinguished vertices~$\delta_1, \delta_2$, such $L(\delta_1)=L(\delta_2)=T$ and any function $f \colon \{\delta_1, \delta_2\} \to T$ can be extended to a proper list $H$-coloring of~$(F,L)$ if and only if~$f(\delta_1) \neq f(\delta_2)$.

Now to create an instance $(G_{\{a,c,e\}},L)$ satisfying the statement 1. of our lemma, we first introduce a copy $F$ of the unequal gadget with distinguished vertices $\delta_1$ and $\delta_2$.
Then we introduce two copies $C^{(1)},C^{(2)}$ of $C(\{\cA,\cC,\cE\})$.
For $i \in \{1,2\}$ we denote by $y^{(i)}_1,y^{(i)}_{\ell+1}$ the endvertices of $C^{(i)}$.
Then we identify vertices $y^{(i)}_{\ell+1}$ and $\delta_i$, as $L(\delta_i):=L(y^{(i)}_{\ell+1})=T$, and put $\gamma_i:=y^{(i)}_1$ (see \cref{fig:gadget-schem}). This completes the construction of $(G_{\{a,c,e\}},L)$. Clearly, $L(\gamma_1)=L(\gamma_2)=\{a,c,e\}$. 

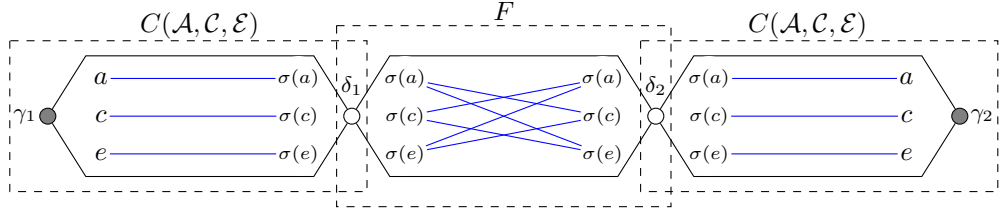
\begin{figure}[bt]
\begin{center}
\begin{tikzpicture}[every node/.style={draw,circle,fill=white,inner sep=0pt,minimum size=6pt}]
\node[draw=none] at (2,2.2) {$C(\cA,\cC,\cE)$};
\node[draw=none] at (10,2.2) {$C(\cA,\cC,\cE)$};
\node[draw=none] at (6,2.4) {$F$};
\draw[dashed]  (-.5,2) rectangle (4.2,0);
\draw[dashed]  (3.8,2.2) rectangle (8.2,-.2);
\draw[dashed]  (7.8,2) rectangle (12.5,0);
\node[label=180:\footnotesize{$\gamma_1$},fill=gray] (gamma1) at (0,1) {};
\node[label={[label distance=.1cm]90:\footnotesize{$\delta_1$}}] (delta1) at (4,1) {};
\node[label={[label distance=.1cm]90:\footnotesize{$\delta_2$}}] (gamma2) at (8,1) {};
\node[label=0:\footnotesize{$\gamma_2$},fill=gray] (delta2) at (12,1) {};

%\node (delta1) at (4,1) {};
%\node (gamma2) at (8,1) {};

\draw (gamma1) -- (.5,1.8) -- (3.5,1.8) -- (delta1) -- (4.5,1.8) -- (7.5,1.8) -- (gamma2) -- (8.5,1.8) -- (11.5,1.8) -- (delta2);
\draw (gamma1) -- (.5,.2) -- (3.5,.2) -- (delta1) -- (4.5,.2) -- (7.5,.2) -- (gamma2) -- (8.5,.2) -- (11.5,.2) -- (delta2);
\node[draw=none] (c1) at (.7,.5) {$e$};
\node[draw=none] (b1) at (.7,1) {$c$};
\node[draw=none] (a1) at (.7,1.5) {$a$};

\node[draw=none] (tc1) at (3.3,.5) {\scriptsize{$\sigma(e)$}};
\node[draw=none] (tb1) at (3.3,1) {\scriptsize{$\sigma(c)$}};
\node[draw=none] (ta1) at (3.3,1.5) {\scriptsize{$\sigma(a)$}};

\node[draw=none] (tc2) at (4.7,.5) {\scriptsize{$\sigma(e)$}};
\node[draw=none] (tb2) at (4.7,1) {\scriptsize{$\sigma(c)$}};
\node[draw=none] (ta2) at (4.7,1.5) {\scriptsize{$\sigma(a)$}};

\node[draw=none] (tc3) at (7.3,.5) {\scriptsize{$\sigma(e)$}};
\node[draw=none] (tb3) at (7.3,1) {\scriptsize{$\sigma(c)$}};
\node[draw=none] (ta3) at (7.3,1.5) {\scriptsize{$\sigma(a)$}};

\node[draw=none] (tc4) at (8.7,.5) {\scriptsize{$\sigma(e)$}};
\node[draw=none] (tb4) at (8.7,1) {\scriptsize{$\sigma(c)$}};
\node[draw=none] (ta4) at (8.7,1.5) {\scriptsize{$\sigma(a)$}};

\node[draw=none] (c2) at (11.3,.5) {$e$};
\node[draw=none] (b2) at (11.3,1) {$c$};
\node[draw=none] (a2) at (11.3,1.5) {$a$};
\draw[color=blue] (a1) -- (ta1);
\draw[color=blue] (b1) -- (tb1);
\draw[color=blue] (c1) -- (tc1);

%\draw[color=gray!50] (ta1) -- (ta2);
%\draw[color=gray!50] (tb1) -- (tb2);
%\draw[color=gray!50] (tc1) -- (tc2);

\draw[color=blue] (ta2) -- (tb3);
\draw[color=blue] (tb2) -- (tc3);
\draw[color=blue] (tc2) -- (ta3);

\draw[color=blue] (ta2) -- (tc3);
\draw[color=blue] (tb2) -- (ta3);

\draw[color=blue] (tc2) -- (tb3);

\draw[color=blue] (a2) -- (ta4);
\draw[color=blue] (b2) -- (tb4);
\draw[color=blue] (c2) -- (tc4);
\end{tikzpicture}
\end{center}
\caption{The construction of $(G_{\{a,c,e\}},L)$ as a composition of two copies of $C(\cA,\cC,\cE)$ and a copy of $F$. We have $L(\gamma_1)=L(\gamma_2)=\{a,c,e\}$ and $L(\delta_1)=L(\delta_2)=T=\{\sigma(a),\sigma(b),\sigma(c)\}$. Blue lines denote which mappings of $\gamma_1,\delta_1,\delta_2,\gamma_2$ to the vertices on their lists can be extended to a list homomorphism of particular gadgets.}
\label{fig:gadget-schem}
\end{figure}

To see that $(G_{\{a,c,e\}},L)$ satisfies the statement 1., assume that we have a homomorphism $h \colon (G_{\{a,c,e\}},L) \to H$, such that $h(\gamma_1)=h(\gamma_2)$. Then, by the properties of $C(\{\cA,\cC,\cE\})$, we must have $h(\delta_1)=h(\delta_2)$. But $\delta_1,\delta_2$ are distinguished vertices of an unequal gadget, so we have a contradiction. On the other hand, if we take some mapping $h' \colon \{\gamma_1,\gamma_2\} \to \{a,c,e\}$ with~$h'(\gamma_1) \neq h'(\gamma_2)$, we can always find $h^1$ and $h^2$, which are list homomorphisms from $C^{(1)}$ and $C^{(2)}$ to $H$, with the property that $h^1(\delta_1)=\sigma(h'(\gamma_1)) \neq \sigma(h'(\gamma_2)) =h^2(\delta_2)$.
Since $\sigma(h'(\gamma_1)) \neq \sigma(h'(\gamma_2))$, we can find a list homomorphism $f$ from $F$ to $H$, such that $f(\delta_1)=\sigma(h'(\gamma_1))$ and $f(\delta_2)=\sigma(h'(\gamma_2))$. Since homomorphisms $h^1,h^2$, and $f$ agree on common vertices, we can define $h$ to be the union of these three mappings.
This completes the proof.
\end{proof}

From the gadgets of Lemma~\ref{lemma:extendedP4exists}, we can also make efficient larger gadgets to enforce that in a large group of vertices, at least one vertex is not colored~$c$. The construction is an adaptation of a gadget due to Jaffke and Jansen~\cite{JaffkeJ17}.

\begin{lemma}\label{lemma:blocking:gadget}
Let~$H$ be a bipartite graph which contains an induced cycle of at least 6 vertices or an asteroid, and let~$(a,b,c,d,e)$ be an extended $P_4$ gadget in~$H$ as guaranteed by Lemma~\ref{lemma:extendedP4exists}. For any~$k \geq 2$ one can construct a consistent \listHcoloring instance~$(G, L)$ in polynomial time  containing~$k$ distinguished vertices~$\gamma_1, \ldots, \gamma_k$ such that~$|V(G)| \in \Oh(k)$, and such that a mapping~$f \colon \{\gamma_1, \ldots, \gamma_k\} \to \{a,c,e\}$ can be extended to a proper list $H$-coloring of~$(G,L)$ if and only if there exists an~$i \in [k]$ with~$f(\gamma_i) \neq c$.
\end{lemma}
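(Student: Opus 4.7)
The plan is to adapt the chain-based blocking gadget of Jaffke and Jansen~\cite{JaffkeJ17} to our setting, using the inequality gadget from~\cref{lemma:extendedP4exists} as the main building block. I introduce a chain of auxiliary vertices $s_0, s_1, \ldots, s_k$ on the $V_2$ side of the bipartition, each with list $\{b,d\}$, pinning the endpoints by $L(s_0) = \{b\}$ and $L(s_k) = \{d\}$. For each $i \in [k]$ I attach a constant-size \emph{implication sub-gadget} linking $\gamma_i$ with $s_{i-1}, s_i$ that enforces ``$f(\gamma_i) = c \Rightarrow f(s_{i-1}) = f(s_i)$''. Once the sub-gadgets are in place, the chain is forced to flip between $b$ and $d$ somewhere along the way, and this flip is permitted at position $i$ only if $f(\gamma_i) \neq c$; hence the whole instance is satisfiable if and only if some $\gamma_i$ is not colored $c$.

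Concretely, the implication sub-gadget at step $i$ introduces a $V_1$-auxiliary vertex $u_i$ with $L(u_i) = \{a, c, e\}$, an inequality $u_i \neq \gamma_i$ supplied by~\cref{lemma:extendedP4exists}, and direct edges $u_i s_{i-1}, u_i s_i$ in $G$ (which encode the constraints $(f(u_i), f(s_{i-1})), (f(u_i), f(s_i)) \in E(H)$). When $\gamma_i = c$ this forces $u_i \in \{a, e\}$, which together with the edges constrains $s_{i-1}, s_i$ to lie in the singleton set $N_H(u_i) \cap \{b, d\}$; when $\gamma_i \in \{a, e\}$, the choice $u_i = c$ is available and, since $N_H(c) \supseteq \{b,d\}$, leaves $s_{i-1}, s_i$ unconstrained.

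The main obstacle is that this one-vertex construction only yields the desired implication directly when $N_H(e) \cap \{b, d\} = \{d\}$: in this case $u_i = a$ forces $(s_{i-1}, s_i) = (b, b)$ and $u_i = e$ forces $(d, d)$, covering both ``no-flip'' configurations of the chain. The remaining three cases for $N_H(e) \cap \{b, d\}$ require augmenting the sub-gadget with additional auxiliary vertices that simulate the missing adjacencies; these can be obtained from the neighborhood-witnesses guaranteed by the incomparability of $N_H(a), N_H(c), N_H(e)$ in the extended $P_4$ gadget, since such witnesses give $V_2$-vertices distinguishing $c$ from $e$ (and $a$ from $e$) that can be used in place of $b$ or $d$. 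Each augmentation only uses $\Oh(1)$ auxiliary vertices per link, so the total vertex count stays $\Oh(k)$; consistency of the resulting bipartite instance follows because all $\gamma_i$ and $u_i$ lie on the $V_1$ side, all $s_j$ on the $V_2$ side, and every copy of the inequality gadget is itself consistent by~\cref{lemma:extendedP4exists}.
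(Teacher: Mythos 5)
Your overall plan—building a ``chain'' whose two pinned endpoints force a flip somewhere, and tying the flip position to the color of a $\gamma_i$—is in the same spirit as the paper's construction, but the realization is genuinely different. The paper first builds a \emph{list $T$-coloring} gadget, where $T$ is the triangle on $\{1,2,3\}$: a path $x_1,y_1,z_1,\ldots,x_k,y_k,z_k$ with pendant vertices $\gamma_i$ attached to the $y_i$'s, pinned at the two ends. Every constraint in that auxiliary gadget is of the form ``$u$ and $v$ receive different colors,'' and the paper then replaces each edge by a copy of the $\{a,c,e\}$-unequal gadget from Lemma~\ref{lemma:extendedP4exists}. Crucially, this never introduces a direct $H$-edge between gadget vertices, so the construction is completely agnostic to the adjacency structure of $H$ beyond the controlled semantics of the unequal gadget. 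Your construction instead routes the chain through $V_2$-vertices $s_0,\ldots,s_k$ with lists $\{b,d\}$ and inserts \emph{actual} $H$-edges $u_i s_{i-1}$, $u_i s_i$, whose meaning depends on $N_H(a)\cap\{b,d\}$, $N_H(e)\cap\{b,d\}$, etc.

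This is where a real gap opens. You correctly observe that your one-vertex sub-gadget behaves as desired only when $N_H(e)\cap\{b,d\}=\{d\}$, but the ``augmentation'' you sketch for the other three cases is not a construction, and I do not see how to make it one. Concretely, consider the required behavior of a link: with $L(s_{i-1})=L(s_i)=\{b,d\}$ the gadget must (i) forbid $(f(\gamma_i),f(s_{i-1}),f(s_i))=(c,b,d)$, yet (ii) still permit both $(c,b,b)$ and $(c,d,d)$, since in a satisfying assignment the flip may lie entirely before or after position $i$ even when $f(\gamma_i)=c$. In the case $N_H(e)\cap\{b,d\}\subseteq\{b\}$ your gadget fails requirement (ii): once $\gamma_i=c$ forces $u_i\in\{a,e\}$, and neither $a$ nor $e$ is adjacent to $d$ (recall $a$ is not adjacent to $d$ in the induced $P_4$), the pair $(s_{i-1},s_i)=(d,d)$ is simply unreachable, no matter what extra auxiliary vertices you graft onto the link via incomparability witnesses—those witnesses live in $N_H(a)\triangle N_H(c)$ and friends, but none is guaranteed to be adjacent to $d$ while participating in the needed ``differ from $\gamma_i$'' logic. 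Dually, in the case $N_H(e)\supseteq\{b,d\}$ your gadget fails requirement (i), since $u_i=e$ leaves $(s_{i-1},s_i)=(b,d)$ available. Because these failures stem from $H$-edges whose presence or absence you do not control, the fix is not a small patch; it is precisely to avoid direct $H$-edges in the chain, which is what the paper's triangle-gadget-plus-unequal-gadget route accomplishes.
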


\begin{proof}
The construction is a small adaptation of a gadget due to Jaffke and Jansen~\cite{JaffkeJ17}, which we present here for completeness.

Let~$T$ be the complete graph (triangle) on vertex set~$\{1,2,3\}$. We first show how to construct an instance~$(G', L')$ of \listcoloring{$T$} with~$k$ distinguished vertices~$\gamma_1, \ldots, \gamma_k$, such that a mapping~$f \colon \{\gamma_1, \ldots, \gamma_k\} \to \{1,2,3\}$ can be extended to a proper \listcoloring{$T$} if and only if~$f(\gamma_i) \neq 1$ for some~$i \in [k]$. Then, we will transform~$(G',L')$ into an instance~$(G,L)$ of \listHcoloring with the desired properties by replacing edges with the gadgets of Lemma~\ref{lemma:extendedP4exists}, without blowing up the number of vertices.

The \listcoloring{$T$} instance~$(G',L')$ is constructed as follows. Create a path on~$3k$ vertices~$x_1, y_1, z_1, x_2, y_2, z_2, \ldots, x_k, y_k, z_k$. Add vertices~$\gamma_1, \ldots, \gamma_k$ and insert the edge~$\gamma_i y_i$ for all~$i \in [k]$. This defines graph~$G'$. The lists~$L'$ are defined as follows:
\begin{itemize}
	\item $L'(\gamma_i) = \{1,2,3\}$ for~$1 \leq i \leq k$.
	\item $L'(x_i) = \{1,2\}$ for~$2 \leq i \leq k$.
	\item $L'(y_i) = \{1,2,3\}$ for~$1 \leq i \leq k$.
	\item $L'(z_i) = \{3,1\}$ for~$1 \leq i \leq k - 1$.
	\item Finally,~$L'(x_1) = \{2\}$ and~$L'(z_k) = \{3\}$.
\end{itemize}

For this instance~$(G',L')$ of \listcoloring{$T$}, we first argue that a partial coloring that assigns color~$1$ to all of~$\gamma_1, \ldots, \gamma_k$ cannot be extended to a proper list $T$-coloring. To see that, note that due to the edges between~$\gamma_i$ and~$y_i$, the color~$1$ is blocked for all vertices~$y_i$. This means extending the coloring is equivalent to finding a list coloring on the path~$x_1, y_1, z_1, \ldots, x_k, y_k, z_k$, where all $x$-vertices have list~$\{1,2\}$ (except~$x_1$ which must be colored~$2$), where all $y$-vertices have list~$\{2,3\}$, and all $z$-vertices have list $\{3,1\}$ (except~$z_k$ which must be colored~$3$). But the path has no proper list coloring under these conditions: Since the color of~$x_1$ is fixed to~$2$,~$y_1$ must be colored~$3$, implying~$z_1$ must be colored~$1$, which propagates throughout the path to imply that~$y_k$ must be colored~$3$, which conflicts with the fact that~$L'(z_k) = \{3\}$. Hence a mapping that colors all~$\gamma_i$ with~$1$ cannot be extended to a proper list $T$-coloring of~$(G',L')$.

Next, we argue that if~$f \colon \{\gamma_1, \ldots, \gamma_k\} \to \{1,2,3\}$ such that~$f(\gamma_i) \neq 1$ for some~$i \in [k]$, then~$f$ can be extended to a proper list $T$-coloring of~$(G',L')$. Consider such an~$f$, and define~$i^- := \min \{i \mid f(\gamma_i) \neq 1\}$ and~$i^+ := \max \{i \mid f(\gamma_i) \neq 1\}$, which are well-defined. Let~$P$ be the path~$(x_1, y_1, z_1, \ldots, x_k, y_k, z_k)$ in its natural ordering from~$x_1$ to~$z_k$, and extend~$f$ as follows:
\begin{itemize}
	\item Set~$f(y_i) = 1$ for all~$i \in [k]$ for which~$f(\gamma_i) \neq 1$.
	\item For all vertices before~$y_{i^-}$ on~$P$, color the $x$-vertices~$2$, the~$y$-vertices~$3$, and the~$z$-vertices~$1$.
	\item For all vertices after~$y_{i^+}$ on~$P$, color the $x$-vertices~$1$, the~$y$-vertices~$2$, and the~$z$-vertices~$3$.
	\item Consider the vertices we have not assigned a color so far (if any). They form subpaths~$P'$ of~$P$ of the form~$z_j, x_{j+1}, \ldots, x_{j'}$ for~$j < j'$ with~$f(\gamma_j), f(\gamma_{j'}) \neq 1$, while~$f(\gamma_i) = 1$ for~$j < i < j'$. Set~$f(z_j) = 3$, set~$f(x_{j'}) = 2$, and for the remaining vertices of~$P'$ color the $x$-vertices~$2$, the $y$-vertices~$3$, and the~$z$-vertices~$1$. 
\end{itemize}
Note that for all~$\gamma_i$ which are not colored~$1$, the corresponding~$y_i$ gets color~$1$, while if~$f(\gamma_i) = 1$ then~$f(y_i) \in \{2,3\}$. It is straight-forward to verify that the resulting extension of~$f$ forms a proper list $T$-coloring of~$G'$.

To construct the gadget for \listHcoloring promised by the lemma statement, we transform~$(G',L')$ into a \listHcoloring instance~$(G,L)$ as follows.
Let $(a,b,c,d,e)$ be an extended $P_4$ gadget for $H$ as guaranteed by \cref{lemma:extendedP4exists}, and let~$(G_{a,c,e}, L_{a,c,e})$ with distinguished vertices~$\gamma^*_1, \gamma^*_2$.

\begin{itemize}
	\item Initialize~$(G,L)$ as a copy of~$(G',L')$. Replace occurrences of color~$1$ by~$c$, of color~$2$ by~$a$, and of color~$3$ by~$e$.
	\item For each edge~$e$ of~$G'$, do the following. Let~$v_1, v_2$ be the endpoints of~$e$. Remove edge~$v_1 v_2$ from~$G'$, insert a new copy of the graph~$(G_{a,c,e}, L_{a,c,e})$ with lists as given by~$L_{a,c,e}$. Let~$\gamma^*_1, \gamma^*_2$ denote the distinguished vertices of the inserted copy. Identifying~$\gamma^*_1$ with~$v_1$ and~$\gamma^*_2$ with~$v_2$.
\end{itemize}
Since~$G$ has~$\Oh(k)$ vertices and edges, the transformation to~$G'$ introduces~$\Oh(k)$ gadgets, each of which has constant size. Hence~$|V(G')| \in \Oh(k)$, as required. It is easy to perform the construction in polynomial time. Since the gadget~$(G_{a,c,e}, L_{a,c,e})$ for distinguished vertices~$\gamma^*_1, \gamma^*_2$ for \listHcoloring has the same effect as an edge in \listcoloring{$T$}, while color~$1 \in V(T)$ was mapped to color~$c \in V(H)$, it follows that a mapping~$f \colon \{\gamma_1, \ldots, \gamma_k\} \to \{a,c,e\}$ can be extended to a proper list $H$-coloring of~$(G,L)$ if and only if~$f(\gamma_i) \neq c$ for some~$c \in [k]$. Since~$G$ is built by replacing all edges of~$G'$ by gadgets, which are consistent instances by Lemma~\ref{lemma:extendedP4exists}, and since graph~$G'$ we start from is a tree and therefore bipartite, it is easy to see that the instance~$(G,L)$ is consistent. This concludes the proof.
\end{proof}

% \section{Lower bounds for List $H$-Coloring} \label{sec:listcoloring:lb}
\newcommand{\acegadget}{$\cola,\colc,\cole$-NOT-gadget\xspace}
\newcommand{\bdgadget}{$\colb,\cold$-NOT-gadget\xspace}
%
%\begin{thm:lowerbound:statement}
%\coloringLowerbound
%\end{thm:lowerbound:statement}

Using these gadgets in the two-step process described in the beginning of \cref{sec:gadgets}, we now obtain the following. 

\theoremlowerbound*
\begin{proof}
We start by showing that for any connected bipartite graph $H$ that is not a bi-arc graph, \listHcoloring allows no nontrivial sparsification. We  use a linear-parameter transformation from \Anlistcoloring, such that the lower bound follows from \cref{lem:annotated-coloring:LB}. 

Since $H$ is bipartite and not a bi-arc graph, 
it is not the complement of a circular arc graph~\cite{FederHH03}, and it follows from \cref{thm:NP-hard-circular-arc} that $H$ has an induced cycle of length at least six or an asteroid. It then follows from \cref{lemma:extendedP4exists} that $H$ has an extended $P_4$ gadget on distinguished vertices $(\cola,\colb,\colc,\cold,\cole)$ of $H$.
Furthermore, there exist two relevant gadgets as described by \cref{lemma:extendedP4exists}. We call the gadget constructed for $Q=\{\cola,\colc,\cole\}$ the 
 \acegadget, and the one constructed for $Q=\{\colb,\cold\}$ the \bdgadget.

Let an instance $(G,\ELL,\mathcal{S},F)$ of \Anlistcoloring be given, we show how to create an instance $\tilde{G}$ of \listHcoloring. Initialize $\tilde{G}$ as $G$ (ignoring the annotations), where every vertex in $\tilde{G}$ receives the same list it had in $G$, where now $\cola,\colb,\colc,\cold,\cole$ refer to the vertices of the extended $P_4$ gadget present in $H$. For any $\{u,v\} \in F$, if $\ELL(u)\subseteq \{\cola,\colc,\cole\}$ (implying also $\ELL(v) \subseteq \{\cola,\colc,\cole\}$), add a new \acegadget to $\tilde{G}$. Otherwise, meaning that $\ELL(u)\subseteq \{\colb,\cold\}$ and $\ELL(v) \subseteq \{\colb,\cold\}$, we add a new \bdgadget to $\tilde{G}$.
Identify vertex $\gamma_1$ of the added gadget with $u$, and vertex $\gamma_2$ with $v$.

For every $S = \{s_1,\ldots,s_m\}\in \mathcal{S}$, add a new gadget as described by \cref{lemma:blocking:gadget} for $k=m$ to $\tilde{G}$. Note that such a gadget has  $\Oh(m)$ vertices. Identify vertex $\gamma_i$ of the gadget with vertex $s_i$ for all $i\in[m]$.

It is easy to observe from the correctness of the added gadgets, that $\tilde{G}$ is list $H$-colorable if and only if $G$ had a coloring respecting the annotations.

We continue by bounding the number of vertices in $\tilde{G}$. Using that $\sum_{S \in \mathcal{S}} |S| \leq 3|V(G)|$ and $|F| \leq |V(G)|$ by definition of \Anlistcoloring, we get
\[|V(\tilde{G})| = \underbrace{|V(G)|}_{\text{init}} + \underbrace{|V(G)| \cdot \Oh(1) }_{\text{NOT-gadgets}} + \underbrace{\Oh(|V(G)|)}_{\text{\cref{lemma:blocking:gadget}}} = \Oh(|V(G)|), \]
which is properly bounded for a linear parameter transformation. The result for bipartite graphs $H$ thus follows from \cite[Theorem 3.8]{BodlaenderJK14}. Observe that the constructed graph $\tilde{G}$ is consistent, such that the lower bound holds even for consistent instances of \listHcoloring.

It remains to show the result for non-bipartite graphs $H$ and bipartite graphs $H$ that are not connected. We start with the latter. Since $H$ is not a bi-arc graph, there must exist a connected component $H'$ of $H$ such that $H'$ is not a bi-arc graph (and since $H$ is bipartite, $H'$ is bipartite). This follows from the fact that by \cref{thm:NP-hard-circular-arc} the graph $H$ has an induced cycle of length at least six or an asteroid, and this structure must be found in one of its components. It now follows from the above, that \textsc{List $H'$-Coloring} does not have a generalized kernel of size $\Oh(n^{2-\varepsilon})$, unless containment. There is a straightforward linear-parameter transformation from \textsc{List $H'$-Coloring} to \textsc{List $H$-Coloring}, taking the exact same instance and using the lists to ensure that only colors from $H'$ can be used to color each vertex. Therefore, the lower bound for \textsc{List $H$-Coloring} for (possibly not connected) bipartite graphs $H$ follows.

We conclude the proof by showing the result for non-bipartite graphs. Let $H$ be an undirected graph that is not a bi-arc graph, such that $H$ is non-bipartite, let $H^*$ be the associated bipartite graph of $H$.   Since $H$ is not a bi-arc graph, it follows that $H^*$ is not the complement of a circular arc graph~\cite[Proposition 3.1]{FederHH03}. Since $H^*$ is bipartite and irreflexive it follows that $H^*$ is not a bi-arc graph.

As proven above, it follows that \textsc{List $H^*$-Coloring} does not have a generalized kernel of size $\Oh(n^{2-\varepsilon})$, unless \containment. \cref{prop:bipartite-associted} gives a straightforward linear-parameter transformation from \textsc{List $H^*$-Coloring} to \listHcoloring, showing that the same lower bound holds for \listHcoloring.
\end{proof}

\section{Conclusion} \label{sec:conclusion}

%We proved compression lower bounds for \listHcoloring. Can we salvage some ideas to prove that \listHcoloring on $n$ vertices cannot be solved in time~$2^{o(n)}$ under ETH? Is this known? (Bart replies to himself, in case someone already saw this and starts wondering: this is not an interesting direction. In the context of ETH lower bounds, you reduce from sparse 3SAT instances, to sparse 3-coloring instances, for which you can reduce to sparse list $H$-coloring instances by just making a new inequality gadget for each edge.)
%
%\prz{Even a much stronger bound holds: \url{https://dl.acm.org/doi/10.1145/3051094}.
%Perhaphs we can have some statement when the number of edges is a parameter?}

A natural open question is whether analogous results can be obtained for the (non-list) \Hcoloring problem.
Despite the obvious similarity of \Hcoloring and \listHcoloring, they appear to behave very differently when it comes to proving lower bounds. 
All hardness proofs for \listHcoloring~\cite{FederH98,Feder99,FederHH03,FEDER2007386,Okrasa20}, including the proofs in this paper, are purely combinatorial and focus on the \emph{local} structure of $H$.
In all of them, we first identify some ``hard'' substructure $H'$ in $H$, and then prove the lower bound for $H'$.
This can be done, as we can ignore vertices in $V(H) \setminus V(H')$ by not including them in the lists.
On the other hand, all proofs for \Hcoloring use some algebraic tools~\cite{Bulatov05,HellN90,DBLP:conf/soda/OkrasaR20,Siggers09} which allow capturing the \emph{global} structure of $H$. We therefore expect similar difficulties in the case of proving sparsification lower bounds for \Hcoloring.

\bibliography{referencescrosscomposition}

\begin{thebibliography}{10}

\bibitem{AlbertsonCG85}
Michael~O. Albertson, Paul~A. Catlin, and Luana Gibbons.
\newblock Homomorphisms of {$3$}-chromatic graphs. {II}.
\newblock In {\em Proceedings of the 16th {S}outheastern international
  conference on combinatorics, graph theory and computing}, volume~47, pages
  19--28, 1985.

\bibitem{BodlaenderJK14}
Hans~L. Bodlaender, Bart M.~P. Jansen, and Stefan Kratsch.
\newblock Kernelization lower bounds by cross-composition.
\newblock {\em {SIAM} J. Discrete Math.}, 28(1):277--305, 2014.
\newblock \href {https://doi.org/10.1137/120880240}
  {\path{doi:10.1137/120880240}}.

\bibitem{Bulatov05}
Andrei~A. Bulatov.
\newblock {$H$}-coloring dichotomy revisited.
\newblock {\em Theor. Comput. Sci.}, 349(1):31--39, 2005.
\newblock \href {https://doi.org/10.1016/j.tcs.2005.09.028}
  {\path{doi:10.1016/j.tcs.2005.09.028}}.

\bibitem{ChenJP18}
Hubie Chen, Bart M.~P. Jansen, and Astrid Pieterse.
\newblock Best-case and worst-case sparsifiability of boolean {CSPs}.
\newblock In {\em Proceedings of the 13th International Symposium on
  Parameterized and Exact Computation}, volume 115 of {\em LIPIcs}, pages
  15:1--15:13. Schloss Dagstuhl - Leibniz-Zentrum fuer Informatik, 2018.
\newblock \href {https://doi.org/10.4230/LIPIcs.IPEC.2018.15}
  {\path{doi:10.4230/LIPIcs.IPEC.2018.15}}.

\bibitem{DBLP:conf/lics/DalmauEHLR15}
V{\'{\i}}ctor Dalmau, L{\'{a}}szl{\'{o}} Egri, Pavol Hell, Beno{\^{\i}}t
  Larose, and Arash Rafiey.
\newblock Descriptive complexity of {List {$H$}-Coloring} problems in logspace:
  {A} refined dichotomy.
\newblock In {\em 30th Annual {ACM/IEEE} Symposium on Logic in Computer
  Science, {LICS} 2015, Kyoto, Japan, July 6-10, 2015}, pages 487--498. {IEEE}
  Computer Society, 2015.
\newblock \href {https://doi.org/10.1109/LICS.2015.52}
  {\path{doi:10.1109/LICS.2015.52}}.

\bibitem{DellM14}
Holger Dell and Dieter van Melkebeek.
\newblock Satisfiability allows no nontrivial sparsification unless the
  polynomial-time hierarchy collapses.
\newblock {\em J. {ACM}}, 61(4):23:1--23:27, 2014.
\newblock \href {https://doi.org/10.1145/2629620} {\path{doi:10.1145/2629620}}.

\bibitem{DBLP:conf/soda/EgriHLR14}
L{\'{a}}szl{\'{o}} Egri, Pavol Hell, Beno{\^{\i}}t Larose, and Arash Rafiey.
\newblock Space complexity of list \emph{H}-colouring: a dichotomy.
\newblock In Chandra Chekuri, editor, {\em Proceedings of the Twenty-Fifth
  Annual {ACM-SIAM} Symposium on Discrete Algorithms, {SODA} 2014, Portland,
  Oregon, USA, January 5-7, 2014}, pages 349--365. {SIAM}, 2014.
\newblock \href {https://doi.org/10.1137/1.9781611973402.26}
  {\path{doi:10.1137/1.9781611973402.26}}.

\bibitem{DBLP:journals/corr/abs-0912-3802}
L{\'{a}}szl{\'{o}} Egri, Andrei~A. Krokhin, Beno{\^{\i}}t Larose, and Pascal
  Tesson.
\newblock The complexity of the list homomorphism problem for graphs.
\newblock {\em CoRR}, abs/0912.3802, 2009.
\newblock URL: \url{http://arxiv.org/abs/0912.3802}, \href
  {http://arxiv.org/abs/0912.3802} {\path{arXiv:0912.3802}}.

\bibitem{DBLP:conf/stacs/EgriMR18}
L{\'{a}}szl{\'{o}} Egri, D{\'{a}}niel Marx, and Pawe{\l} Rz{\k{a}}{\.{z}}ewski.
\newblock Finding list homomorphisms from bounded-treewidth graphs to reflexive
  graphs: a complete complexity characterization.
\newblock In Rolf Niedermeier and Brigitte Vall{\'{e}}e, editors, {\em 35th
  Symposium on Theoretical Aspects of Computer Science, {STACS} 2018, February
  28 to March 3, 2018, Caen, France}, volume~96 of {\em LIPIcs}, pages
  27:1--27:15. Schloss Dagstuhl - Leibniz-Zentrum fuer Informatik, 2018.
\newblock \href {https://doi.org/10.4230/LIPIcs.STACS.2018.27}
  {\path{doi:10.4230/LIPIcs.STACS.2018.27}}.

\bibitem{EsperetMOP13}
Louis Esperet, Micka{\"{e}}l Montassier, Pascal Ochem, and Alexandre Pinlou.
\newblock A complexity dichotomy for the coloring of sparse graphs.
\newblock {\em J. Graph Theory}, 73(1):85--102, 2013.
\newblock \href {https://doi.org/10.1002/jgt.21659}
  {\path{doi:10.1002/jgt.21659}}.

\bibitem{FederH98}
Tom{\'{a}}s Feder and Pavol Hell.
\newblock List homomorphisms to reflexive graphs.
\newblock {\em J. Comb. Theory, Ser. {B}}, 72(2):236--250, 1998.
\newblock \href {https://doi.org/10.1006/jctb.1997.1812}
  {\path{doi:10.1006/jctb.1997.1812}}.

\bibitem{Feder99}
Tom{\'{a}}s Feder, Pavol Hell, and Jing Huang.
\newblock List homomorphisms and circular arc graphs.
\newblock {\em Combinatorica}, 19(4):487--505, 1999.
\newblock \href {https://doi.org/10.1007/s004939970003}
  {\path{doi:10.1007/s004939970003}}.

\bibitem{FederHH03}
Tom{\'{a}}s Feder, Pavol Hell, and Jing Huang.
\newblock Bi-arc graphs and the complexity of list homomorphisms.
\newblock {\em Journal of Graph Theory}, 42(1):61--80, 2003.
\newblock \href {https://doi.org/10.1002/jgt.10073}
  {\path{doi:10.1002/jgt.10073}}.

\bibitem{FEDER2007386}
Tomás Feder, Pavol Hell, and Jing Huang.
\newblock List homomorphisms of graphs with bounded degrees.
\newblock {\em Discrete Mathematics}, 307(3):386 -- 392, 2007.
\newblock Algebraic and Topological Methods in Graph Theory.
\newblock URL:
  \url{http://www.sciencedirect.com/science/article/pii/S0012365X06005188},
  \href {https://doi.org/https://doi.org/10.1016/j.disc.2005.09.030}
  {\path{doi:https://doi.org/10.1016/j.disc.2005.09.030}}.

\bibitem{FominLSZ18}
Fedor~V. Fomin, Daniel Lokshtanov, Saket Saurabh, and Meirav Zehavi.
\newblock {\em Kernelization: {T}heory of Parameterized Preprocessing}.
\newblock Cambridge University Press, 12 2018.
\newblock \href {https://doi.org/10.1017/9781107415157}
  {\path{doi:10.1017/9781107415157}}.

\bibitem{GalluccioHN00}
Anna Galluccio, Pavol Hell, and Jaroslav Ne\v{s}et\v{r}il.
\newblock The complexity of {$H$}-colouring of bounded degree graphs.
\newblock {\em Discrete Math.}, 222(1-3):101--109, 2000.
\newblock \href {https://doi.org/10.1016/S0012-365X(00)00009-1}
  {\path{doi:10.1016/S0012-365X(00)00009-1}}.

\bibitem{DBLP:journals/dam/GroenlandORSSS19}
Carla Groenland, Karolina Okrasa, Pawe{\l} Rz{\k{a}}{\.{z}}ewski, Alex~D.
  Scott, Paul~D. Seymour, and Sophie Spirkl.
\newblock {$H$}-colouring {$P_t$}-free graphs in subexponential time.
\newblock {\em Discrete Applied Mathematics}, 267:184--189, 2019.
\newblock \href {https://doi.org/10.1016/j.dam.2019.04.010}
  {\path{doi:10.1016/j.dam.2019.04.010}}.

\bibitem{GuoN07}
Jiong Guo and Rolf Niedermeier.
\newblock Invitation to data reduction and problem kernelization.
\newblock {\em {SIGACT} News}, 38(1):31--45, 2007.
\newblock \href {https://doi.org/10.1145/1233481.1233493}
  {\path{doi:10.1145/1233481.1233493}}.

\bibitem{HahnT97}
Ge{\v{n}}a Hahn and Claude Tardif.
\newblock {\em Graph homomorphisms: structure and symmetry}, pages 107--166.
\newblock Springer Netherlands, 1997.
\newblock \href {https://doi.org/10.1007/978-94-015-8937-6_4}
  {\path{doi:10.1007/978-94-015-8937-6_4}}.

\bibitem{HellN90}
Pavol Hell and Jaroslav Ne\v{s}et\v{r}il.
\newblock On the complexity of {$H$}-coloring.
\newblock {\em J. Comb. Theory, Ser. {B}}, 48(1):92--110, 1990.
\newblock \href {https://doi.org/10.1016/0095-8956(90)90132-J}
  {\path{doi:10.1016/0095-8956(90)90132-J}}.

\bibitem{HellN04}
Pavol Hell and Jaroslav Ne\v{s}et\v{r}il.
\newblock {\em Graphs and Homomorphisms}.
\newblock Oxford Lecture Series in Mathematics. Oxford University Press, 2004.

\bibitem{Irving83}
Robert~W. Irving.
\newblock {NP}-completeness of a family of graph-colouring problems.
\newblock {\em Discrete Appl. Math.}, 5(1):111--117, 1983.
\newblock \href {https://doi.org/10.1016/0166-218X(83)90020-3}
  {\path{doi:10.1016/0166-218X(83)90020-3}}.

\bibitem{JaffkeJ17}
Lars Jaffke and Bart M.~P. Jansen.
\newblock Fine-grained parameterized complexity analysis of graph coloring
  problems.
\newblock In {\em Proceedings of the 10th International Conference on
  Algorithms and Complexity}, pages 345--356, 2017.
\newblock \href {https://doi.org/10.1007/978-3-319-57586-5_29}
  {\path{doi:10.1007/978-3-319-57586-5_29}}.

\bibitem{Jansen15}
Bart M.~P. Jansen.
\newblock On sparsification for computing treewidth.
\newblock {\em Algorithmica}, 71(3):605--635, 2015.
\newblock \href {https://doi.org/10.1007/s00453-014-9924-2}
  {\path{doi:10.1007/s00453-014-9924-2}}.

\bibitem{JansenP17}
Bart M.~P. Jansen and Astrid Pieterse.
\newblock Optimal data reduction for graph coloring using low-degree
  polynomials.
\newblock In {\em 12th International Symposium on Parameterized and Exact
  Computation, {IPEC} 2017, September 6-8, 2017, Vienna, Austria}, volume~89 of
  {\em LIPIcs}, pages 22:1--22:12, 2017.
\newblock \href {https://doi.org/10.4230/LIPIcs.IPEC.2017.22}
  {\path{doi:10.4230/LIPIcs.IPEC.2017.22}}.

\bibitem{JansenP17a}
Bart M.~P. Jansen and Astrid Pieterse.
\newblock Sparsification upper and lower bounds for graph problems and
  not-all-equal {SAT}.
\newblock {\em Algorithmica}, 79(1):3--28, 2017.
\newblock \href {https://doi.org/10.1007/s00453-016-0189-9}
  {\path{doi:10.1007/s00453-016-0189-9}}.

\bibitem{JansenP19}
Bart M.~P. Jansen and Astrid Pieterse.
\newblock Optimal data reduction for graph coloring using low-degree
  polynomials.
\newblock {\em Algorithmica}, 81(10):3865--3889, 2019.
\newblock \href {https://doi.org/10.1007/s00453-019-00578-5}
  {\path{doi:10.1007/s00453-019-00578-5}}.

\bibitem{KunS16}
G{\'{a}}bor Kun and Mario Szegedy.
\newblock A new line of attack on the dichotomy conjecture.
\newblock {\em Eur. J. Comb.}, 52:338--367, 2016.
\newblock \href {https://doi.org/10.1016/j.ejc.2015.07.011}
  {\path{doi:10.1016/j.ejc.2015.07.011}}.

\bibitem{LagerkvistW17}
Victor Lagerkvist and Magnus Wahlstr{\"{o}}m.
\newblock Kernelization of constraint satisfaction problems: {A} study through
  universal algebra.
\newblock In {\em Proceedings of the 23rd International Conference on
  Principles and Practice of Constraint Programming}, volume 10416, pages
  157--171. Springer, 2017.
\newblock \href {https://doi.org/10.1007/978-3-319-66158-2\_11}
  {\path{doi:10.1007/978-3-319-66158-2\_11}}.

\bibitem{MacGillivrayS09}
Gary MacGillivray and Mark~H. Siggers.
\newblock On the complexity of {$H$}-colouring planar graphs.
\newblock {\em Discrete Math.}, 309(18):5729--5738, 2009.
\newblock \href {https://doi.org/10.1016/j.disc.2008.05.016}
  {\path{doi:10.1016/j.disc.2008.05.016}}.

\bibitem{MaurerSW81}
H.A. Maurer, J.H. Sudborough, and E.~Welzl.
\newblock On the complexity of the general coloring problem.
\newblock {\em Information and Control}, 51(2):128 -- 145, 1981.
\newblock \href {https://doi.org/10.1016/S0019-9958(81)90226-6}
  {\path{doi:10.1016/S0019-9958(81)90226-6}}.

\bibitem{Nesetril81}
Jaroslav Ne\v{s}et\v{r}il.
\newblock Representations of graphs by means of products and their complexity.
\newblock In {\em Proceedings of the 10th International Symposium on
  Mathematical Foundations of Computer Science}, volume 118 of {\em Lecture
  Notes in Computer Science}, pages 94--102, 1981.
\newblock \href {https://doi.org/10.1007/3-540-10856-4_76}
  {\path{doi:10.1007/3-540-10856-4_76}}.

\bibitem{Okrasa20}
Karolina Okrasa, Marta Piecyk, and Pawe\l{} Rz\k{a}a\.zewski.
\newblock Full complexity classification of the list homomorphism problem for
  bounded-treewidth graphs.
\newblock In Fabrizio Grandoni, Grzegorz Herman, and Peter Sanders, editors,
  {\em 28th Annual European Symposium on Algorithms, {ESA} 2020, September 7-9,
  2020, Pisa, Italy (Virtual Conference)}, volume 173 of {\em LIPIcs}, pages
  74:1--74:24. Schloss Dagstuhl - Leibniz-Zentrum f{\"{u}}r Informatik, 2020.
\newblock \href {https://doi.org/10.4230/LIPIcs.ESA.2020.74}
  {\path{doi:10.4230/LIPIcs.ESA.2020.74}}.

\bibitem{DBLP:journals/corr/abs-2006-11155}
Karolina Okrasa, Marta Piecyk, and Pawe\l{} Rz\k{a}a\.zewski.
\newblock Full complexity classification of the list homomorphism problem for
  bounded-treewidth graphs.
\newblock {\em CoRR}, abs/2006.11155, 2020.
\newblock URL: \url{https://arxiv.org/abs/2006.11155}, \href
  {http://arxiv.org/abs/2006.11155} {\path{arXiv:2006.11155}}.

\bibitem{DBLP:conf/soda/OkrasaR20}
Karolina Okrasa and Pawe{\l} Rz{\k{a}}{\.{z}}ewski.
\newblock Fine-grained complexity of graph homomorphism problem for
  bounded-treewidth graphs.
\newblock In Shuchi Chawla, editor, {\em Proceedings of the 2020 {ACM-SIAM}
  Symposium on Discrete Algorithms, {SODA} 2020, Salt Lake City, UT, USA,
  January 5-8, 2020}, pages 1578--1590. {SIAM}, 2020.
\newblock \href {https://doi.org/10.1137/1.9781611975994.97}
  {\path{doi:10.1137/1.9781611975994.97}}.

\bibitem{DBLP:journals/jcss/OkrasaR20}
Karolina Okrasa and Pawe{\l} Rz{\k{a}}{\.{z}}ewski.
\newblock Subexponential algorithms for variants of the homomorphism problem in
  string graphs.
\newblock {\em J. Comput. Syst. Sci.}, 109:126--144, 2020.
\newblock \href {https://doi.org/10.1016/j.jcss.2019.12.004}
  {\path{doi:10.1016/j.jcss.2019.12.004}}.

\bibitem{Siggers10}
M.~Siggers.
\newblock A new proof of the {$H$}-coloring dichotomy.
\newblock {\em SIAM J. Discrete Math.}, 23(4):2204--2210, 2010.
\newblock Corrected manuscript available from the author's homepage.
\newblock \href {https://doi.org/10.1137/080736697}
  {\path{doi:10.1137/080736697}}.

\bibitem{Siggers09}
Mark~H. Siggers.
\newblock Dichotomy for bounded degree {$H$}-colouring.
\newblock {\em Discrete Appl. Math.}, 157(2):201--210, 2009.
\newblock \href {https://doi.org/10.1016/j.dam.2008.02.003}
  {\path{doi:10.1016/j.dam.2008.02.003}}.

\bibitem{TROTTER1976361}
William~T. Trotter and John~I. Moore.
\newblock Characterization problems for graphs, partially ordered sets,
  lattices, and families of sets.
\newblock {\em Discrete Mathematics}, 16(4):361 -- 381, 1976.
\newblock URL:
  \url{http://www.sciencedirect.com/science/article/pii/S0012365X76800118},
  \href {https://doi.org/https://doi.org/10.1016/S0012-365X(76)80011-8}
  {\path{doi:https://doi.org/10.1016/S0012-365X(76)80011-8}}.

\end{thebibliography}
\end{document}